\keywords{distributed synthesis, lock synchronisation, deadlock avoidance}
\definecolor{bluegray}{RGB}{160,200,200}
\definecolor{Blue Sapphire}{HTML}{003050} 
\definecolor{Gamboge}{HTML}{ee9b00}
\definecolor{Ruby Red}{HTML}{ab2226}
\declaretheorem{theorem}
\def\namedlabel#1#2{\begingroup
	#2%
	\def\@currentlabel{#2}%
	\phantomsection\label{#1}\endgroup
}
\definecolor{bluegray}{RGB}{160,200,200}
\definecolor{light-gray}{gray}{0.75}
\definecolor{darkgreen}{RGB}{0,200,0}
\definecolor{lightpurple}{RGB}{220,0,220}
\tikzset{AUT style/.style={>=angle 60, initial text= ,initial distance=0.3cm,every edge/.append style={thick},every state/.style={thick,minimum size=15,inner sep=0.5}}}
\newtheorem{lemma}[theorem]{Lemma}
\newtheorem{corollary}[theorem]{Corollary}
\newtheorem{proposition}[theorem]{Proposition}
\newtheorem{claim}[theorem]{Claim}
\newtheorem{invariant}{Invariant}
\crefname{invariant}{Invariant}{Invariants}
\theoremstyle{definition}
\newtheorem{definition}[theorem]{Definition}
\newtheorem{remark}{Remark}
\newtheorem{example}{Example}
\renewcommand{\a}{\alpha}
\renewcommand{\b}{\beta}
\renewcommand{\d}{\delta}
\newcommand{\g}{\gamma}
\newcommand{\s}{\sigma}
\renewcommand{\S}{\Sigma}
\renewcommand{\epsilon}{\varepsilon}
\renewcommand{\phi}{\varphi}
\newcommand{\pow}[1]{2^{#1}}
\newcommand{\size}[1]{|#1|}
\newcommand{\set}[1]{\{ #1 \}}
\newcommand{\sigmatwo}{$\Sigma_2^{\textsc{P}}$\xspace}
\newcommand{\nexpt}{\textsc{NExpTime}\xspace}
\newcommand{\tss}{\Ss}
\newcommand{\tlss}{\textsf{"2LSS"}}
\knowledgenewrobustcmd{\lss}{\textsf{"LSS"}}
\newcommand{\run}{u}
\newcommand{\get}[1]{\mathtt{acq}_{#1}}
\newcommand{\rel}[1]{\mathtt{rel}_{#1}}
\newcommand{\eget}[2]{\mathtt{acq}^{#1}_{#2}}
\newcommand{\edge}[1]{\xrightarrow{#1}}
\newcommand{\ledge}[1]{\xleftarrow{#1}}
\newcommand{\dedge}[1]{\xleftrightarrow{#1}}
\newcommand{\aut}{\mathcal{A}}
\newcommand{\blackdiamond}{\mathalpha{\blacklozenge}}
\newcommand{\init}{\mathit{init}}
\newcommand{\df}{\mathit{ds}}
\newcommand\restrict[2]{{
		\left.\kern-\nulldelimiterspace
		#1
		\vphantom{\big|}
		\right|_{#2}
}}
\newcommand{\Runs}{\mathit{Runs}}
\knowledgenewrobustcmd{\eqh}{\cmdkl{\equiv_H}}
\newcommand{\neqh}{\not\equiv_H}
\newcommand{\act}[1]{\xlongrightarrow{#1}}
\newcommand{\lact}[1]{\xlongleftarrow{#1}}
\newcommand{\OpTp}{\mathit{Op}(T_p)}
\newcommand{\nop}{\mathit{nop}}
\newcommand{\Pat}{\PP}
\knowledgenewrobustcmd{\BTPP}{B_{\PP}}
\knowledgenewrobustcmd{\Blocks}{\cmdkl{\mathsf{Blocks}}}
\knowledgenewrobustcmd{\Owns}{\cmdkl{\mathsf{Owns}}}
\knowledgenewrobustcmd{\Towns}{\cmdkl{\mathsf{Owns}}}
\knowledgenewrobustcmd{\Twants}{\cmdkl{\mathsf{Blocks}}}
\renewcommand{\d}{\delta}
\newcommand{\Cinit}{C_{\init}}
\newcommand{\out}{\mathit{out}}
\newcommand{\Proc}{\mathit{Proc}}
\newcommand{\ds}{\mathit{ds}}
\newcommand{\lft}{\mathit{left}}
\newcommand{\rgt}{\mathit{right}}
\newcommand{\hungry}{\mathit{hungry}}
\newcommand{\think}{\mathit{think}}
\newcommand{\bP}{\overline{P}}
\newcommand{\bs}{\overline{s}}
\newcommand{\bb}{\overline{b}}
\newcommand{\bp}{\overline{p}}
\newcommand{\bx}{\overline{x}}
\newcommand{\bary}{\overline{y}}
\newcommand{\Ss}{\mathcal{S}}
\newcommand{\Iseq}{\mathit{IS}}
\newcommand{\greenedge}[1]{\stackrel{#1}{\Longrightarrow}}
\newcommand{\rededge}[1]{\stackrel{#1}{\dashrightarrow}}
\knowledgenewrobustcmd{\strongedge}[1]{\cmdkl{\greenedge{#1}}}
\knowledgenewrobustcmd{\weakedge}[1]{\cmdkl{\rededge{#1}}}
\newcommand{\pat}{\act{}}
\knowledgenewrobustcmd{\spat}{\cmdkl{\Longrightarrow}}
\knowledgenewrobustcmd{\wpat}{\cmdkl{\dashrightarrow}}
\knowledgenewrobustcmd{\Gp}{\cmdkl{G_{\PP}}}
\knowledgenewrobustcmd{\Gu}{\cmdkl{G_{u}}}
\knowledgenewrobustcmd{\BTu}{\cmdkl{B_{u}}}
\knowledgenewrobustcmd{\FTu}{\cmdkl{F_{u}}}
\knowledgenewrobustcmd{\PZ}{\cmdkl{\Proc_Z}}
\newcommand{\pproc}{{p\in\Proc}}
\newcommand{\down}{\mathit{down}}
\newcommand{\up}{\mathit{up}}
\newcommand{\equ}{\mathit{equality}}
\newcommand{\ver}{\mathit{vertical}}
\newcommand{\hor}{\mathit{horizontal}}
\newcommand{\domi}{\mathit{dominoes}}
\newcommand{\win}{\mathit{win}}
\newcommand{\Start}{\mathsf{Start}}
\begin{document}

\title{Distributed controller synthesis for deadlock avoidance}

\author[H.~Gimbert]{Hugo Gimbert\lmcsorcid{0000-0003-1227-9718}}[a]
\author[C.~Mascle]{Corto Mascle\lmcsorcid{0009-0007-7976-7480}}[b]
\author[A.~Muscholl]{Anca Muscholl\lmcsorcid{0000-0002-8214-204X}}[b]
\author[I.~Walukiewicz]{Igor Walukiewicz\lmcsorcid{0000-0001-8952-7201}}[a]

\address{Universit\'e de Bordeaux, CNRS, France}	
\email{hugo.gimbert@labri.fr, igw@labri.fr}  

\address{Universit\'e de Bordeaux, France and MPI-SWS Kaiserslautern, Germany}	
\email{cmascle@mpi-sws.org}  

\address{Universit\'e de Bordeaux, France}	
\email{anca@labri.fr}  





\begin{abstract}
  \noindent We consider the problem of distributed control for systems synchronizing over locks.
  The goal is to find a local controller for each of the processes so
  that global deadlocks of the system are avoided.
  Without restrictions this problem is shown to be undecidable, even for a fixed
  number of processes and locks.
  We identify two restrictions that help to recover decidability.
  The first one is that each process can use at most two locks.
  The control problem is shown to be $\S^P_2$-complete in this case, and even in \PTIME\ under some
  additional assumptions.
  The paradigmatic example of the dining philosophers satisfies these assumptions. 
  The second restriction is the nested usage of locks. 
  In this case the distributed control problem is shown to be \nexpt\-complete.
  The drinking philosophers problem falls in this case. 
\end{abstract}

\maketitle
%


\section{Introduction}

Automatic synthesis of distributed systems has a big potential since
such systems are difficult to
write, test, or verify. 
The state space and the number of different behaviors grow exponentially with
the number of processes. 
This is where distributed synthesis can be more useful than
centralized synthesis, because an equivalent, sequential system may be
too big to handle.
The other important point is that distributed synthesis produces by definition a distributed
system, while a central controller may not be implementable
on a given distributed architecture. 
Unfortunately, very few settings are known for which distributed synthesis is
decidable, and those we know of require very high complexity.

Distributed synthesis was first formulated in a synchronous setting by Pnueli and Rosner~\cite{PR90}.
Subsequent research showed that, essentially, the only decidable architectures are pipelines, where each process can send messages only
to the next process in the pipeline~\cite{KV01,MadThiag01,FinSch05}.
In addition, the complexity is non-elementary in the size of the
pipeline. 
These negative results motivated later a strand of work around distributed
controller synthesis  in the setting of Zielonka automata, 
in particular synthesis with so called causal memory.
Here the problem was shown decidable for co-graph action
alphabets~\cite{GLZ04}, and for tree architectures of processes~\cite{GGMW13,MW14}. 
Yet the complexity can be again non-elementary, e.g.~in the depth of
the tree representing the acyclic architecture.
Worse, it has been  recently established that  distributed
synthesis with causal memory is undecidable for
unconstrained process architectures~\cite{Gimbert22}. 
Distributed synthesis for (safe) Petri nets~\cite{FinkbeinerO17ic} has encountered a similar line of
limited advances, and due to~\cite{Gimbert22}, is undecidable in the
general case, too, since
it is inter-reducible to distributed synthesis for asynchronous
automata~\cite{Finkbeiner19concur}. 
This situation raised the question if there is any natural setting for distributed synthesis that
covers some standard examples of distributed systems, and is manageable
algorithmically. 

In this work we consider distributed systems with a weaker
synchronization mechanism, namely lock sharing.
Here each process can take or release a lock from a  pool of locks.
Locks are a classic concept in
distributed systems, and one of the most frequently used synchronization mechanism in concurrent
programs. 
We formulate our results in a  control setting rather than synthesis
-- this avoids the need for a specification formalism.
The objective is to find a local strategy for each process so that the global system
does not deadlock.
Note that local strategies are purely local: they do not involve any
information exchange as in the case of synthesis with causal memory
(Zielonka automata or Petri nets).
In this sense the synthesis problem resembles the Pnueli and Rosner
framework, but for the asynchronicity of processes.

For unrestricted lock-sharing systems we hit again an undecidability barrier, as for the
models discussed above.
Undecidability was known already for the verification of systems where
each process is modeled as a pushdown automaton~\cite{KahIvaGup05},
since unrestricted usage of locks  allows for inter-process
communication. 
Yet, we are able to find quite interesting restrictions making distributed control
synthesis for lock-sharing systems decidable, and even algorithmically manageable.
The first restriction is to limit the number of locks
available to each process to two.
The standard example is the dining philosophers problem, where each
philosopher has two locks
corresponding to the left and the right fork.
It is important to note that we do \emph{not} limit the \emph{total number} of locks in the system. 
We show that for such systems the complexity of the synthesis problem is at the second level of the polynomial
hierarchy. 
The problem gets even simpler when we restrict the local strategies such
that they cannot block the
process when all locks are available.
We call such strategies \emph{locally live}.
In this case we obtain an \NP-algorithm, and even a \PTIME\
algorithm when the access to locks is \emph{exclusive}.
The latter means that once a process tries to acquire some lock it
cannot switch to  
another action before getting it. In other words, a process that tries to get a lock is blocked as long as the lock is not available.
The second restriction is nested lock usage. 
This is a very common restriction in concurrent
programs~\cite{KahGup06lics}, sometimes enforced
syntactically by associating locks with program blocks.
Nested lock usage simply says that acquiring
and releasing locks should follow a stack discipline.
Verification of concurrent programs with nested locks has
been shown decidable in~\cite{KahIvaGup05,KahGup06lics}, and this
triggered further work on extensions of lock usage
policies~\cite{Kahlon09,BonnetCMV13,LammichMSW13}. 
In distributed computing, the drinking philosophers
setting~\cite{ChaMis84} is an example of nested lock usage.  
We show that in this case the distributed synthesis problem is
\NEXPTIME-complete, where the exponent in 
the algorithm depends only on the number of locks available to the process.
A decision procedure for the verification of such systems, based on
similar ideas on lock orderings, appeared already
in~\cite{KahIvaGup05}.
We study here a more general problem, namely distributed
control. 
Our results are stated for finite-state processes only, in order to
keep the setting simple, but they hold
 for pushdown processes as well.

As mentioned above, we formalize the distributed synthesis problem as a control
problem~\cite{RW89}.
A process is given as a transition graph where transitions can be
local actions, 
or acquire/release of a lock. 
Some transitions are "controllable", and some are not. 
A controller for a process decides which "controllable" transitions to allow, depending on the
local history.
In particular, the controller of a process does not see the states of other processes. 
Our techniques are based on analyzing patterns of taking and releasing locks.
In decidable cases there are finite sets of patterns characterizing
potential deadlocks.

The notion of patterns resembles locking
disciplines~\cite{ErnstLMST2016}, which are commonly
used to prevent deadlocks. 
An example of a locking discipline is ``take the left fork before the
right one'' in the dining philosophers problem.
Our results allow to check if a given locking discipline may result in a deadlock, and
in some cases even list all deadlock-avoiding locking disciplines.

To summarize, the main results of our work are:
\begin{itemize}
	\item $\S^P_2$-completeness of the deadlock avoidance control problem for systems
	where each process has access to at most 2 locks (\tlss\ for short).
	\item An \NP\ algorithm for \tlss\  with locally live strategies.
	\item A \PTIME\ algorithm for \tlss\ with locally live
          strategies and exclusive lock access.
	\item A \NEXPTIME\ algorithm and the matching lower bound for
          lock-sharing systems with nested lock usage.
	\item Undecidability of the deadlock avoidance control problem for 
	systems with unrestricted access to locks (with fixed number
        of  processes and  locks).
\end{itemize}

\paragraph*{Related work}
Distributed synthesis is an old idea motivated by Church's synthesis
problem~\cite{Church57}.
Actually, the logic CTL  has been proposed with distributed synthesis in
mind~\cite{ClaEme81}. 
Given this long history, there are relatively few results on distributed synthesis.
Three main frameworks have been considered: 
synchronous networks of input/output automata, asynchronous automata, Petri
games. 

The synchronous synthesis model has been proposed by Pnueli and
Rosner~\cite{PnuRos89,PR90}.
They established that controller synthesis is decidable  for pipeline
architectures and undecidable in general. The undecidability result holds
for very simple architectures with only two processes.
Subsequent work has shown that in terms of network shape
pipelines are essentially the only decidable case~\cite{KV01,MadThiag01,FinSch05}.
Several ways to circumvent undecidability have been considered. 
One was to restrict to local specifications, specifying the desired behavior of
each automaton in the network separately. 
Unfortunately, this does not extend the class of decidable architectures
substantially~\cite{MadThiag01}.
A furthergoing proposal was to consider only input-output specifications.
A characterization, still very restrictive, of decidable architectures for this
case is given in~\cite{GSZ-fmsd09}. 

The asynchronous (Zielonka) automaton setting was proposed as a reaction to these negative
results~\cite{GLZ04}.
The main hope was that causal memory helps to prevent undecidability
arising from partial information, since the synchronization of
processes in this model makes them share  information.
Causal memory indeed allowed to get new decidable cases:
co-graph action alphabets~\cite{GLZ04}, 
connectedly communicating systems~\cite{MTY05},
and tree architectures~\cite{GGMW13,MW14}.
There is also a weaker condition covering these three cases~\cite{gim17}.  
This line of research suffered however from a very recent result showing undecidability in the
general case~\cite{Gimbert22}. 

Distributed synthesis in the Petri net model, called Petri games, has been
proposed recently in~\cite{FinkbeinerO17ic}.  
The idea is that some tokens are controlled by the system and some by
the environment. 
Once again causal memory is used. 
Without restrictions this model is inter-reducible with the asynchronous
automata model~\cite{Finkbeiner19concur}, hence the undecidability result~\cite{Gimbert22} applies. 
The problem is \EXPTIME-complete for one environment
token and arbitrary many system tokens~\cite{FinkbeinerO17ic}.
This case stays decidable even for global safety specifications, such as
deadlock, but undecidable in general~\cite{FinGieHec22}.
As a way to circumvent the undecidability, bounded synthesis  has been
considered in~\cite{Fin15,HecMet19}, where the bound on the size of the resulting
controller is fixed in advance.
The approach is implemented in the tool \textsc{AdamSYNT}~\cite{GieHecYan21}.

The control formulation of the synthesis problem comes from the control theory
community~\cite{RW89}. 
It does not require to talk about a specification formalism, while retaining
most useful aspects of the problem. 
A frequently considered control objective is avoidance of undesirable states. 
In the distributed context, deadlock avoidance looks like an obvious candidate,
since it is one of the most basic desirable properties. 
The survey~\cite{Wal:21} discusses the relation between the
distributed control problem and  Church synthesis.
Some distributed versions of the control problem have been considered, also
hitting the undecidability barrier very quickly~\cite{RudWon92,Tri04,Thistle05,ArnWal07}.

We would like to mention two further results that do not fit into the main threads outlined
above.
In~\cite{WanLafKel09} the authors consider a different synthesis problem for 
distributed systems: they construct a centralized controller for a scheduler
that would guarantee absence of deadlocks. 
This is a very different approach to deadlock avoidance.
Another recent work~\cite{BerBolBou20} adds a new dimension to distributed
synthesis  by considering communication errors in a model
with synchronous processes that can exchange their causal memory.
The authors show decidability of the synthesis problem for 2 processes.



\paragraph*{Outline of the article}
In the next section we define systems with locks, strategies, and the control
problem. 
We introduce locally live strategies as well as the $2$-lock, exclusive,
and nested locking restrictions. 
This permits to state the main results of the article. 
The following three sections consider systems with the $2$-lock restriction.
First, we briefly give intuitions behind the $\S^p_2$-completeness
in the general case.
Section~\ref{sec:np} presents an \NP\ algorithm  for \tlss\ with locally live strategies.
Section~\ref{sec:exclusive} gives a \PTIME\ algorithm for the exclusive case with locally live strategies.
Next in Section~\ref{sec:nested} we consider systems with nested locks, and show that the problem is
\nexpt-complete in this case.
Finally, in Section~\ref{sec:undec} we prove that without any restrictions the problem is
undecidable.

This paper is an extended version of~\cite{GimbertMMW22}.

To help the reader we use the LaTeX package 
\href{https://ctan.org/pkg/knowledge?lang=en}{\texttt{knowledge}} that
hyperlinks definitions with their usage.


\section{Preliminaries}
\label{sec:prelim}

A \emph{lock-sharing system} is a parallel composition of processes sharing a pool of locks.
Processes do not communicate, but they may acquire  or release locks
from  the pool.
Some transitions of processes are uncontrollable, meaning that the environment
decides if such a transition is taken. 
The goal is to find a local strategy for each process so that the system never
deadlocks.
The challenge is that the strategies are purely local, in the sense that each
process only knows its previous actions.

\AP A \emph{process} $p$ is an automaton $\aut_p = (S_p,
\Sigma_p, T_p, \delta_p, \init_p)$ with a set of locks 
$T_p$  that it can acquire or release.
The transition function $\delta_p: S_p\times \S_p\stackrel{\cdot}{\to} \OpTp\times S_p$ associates
with a state from $S_p$ and an action from $\S_p$ an operation on
some lock and a new state; it is a
partial function. 
The lock operations consist in acquiring ($\get{t}$) or releasing ($\rel{t}$) some lock
$t$ from $T_p$, or doing nothing:
$\OpTp=\set{\get{t}, \rel{t}\mid t\in T_p}\cup\set{\nop}$.
Figure~\ref{fig:philosophers} gives an example.
For simplicity we write action names in our examples only for 
$\nop$, otherwise we just write the lock operation of the action.

\AP A \emph{local configuration} of process $p$ is a state from $S_p$ together
with the locks $p$ currently owns: $(s,B)\in S_p\times 2^{T_p}$. 
The initial configuration of $p$ is $(\init_p,\es)$, namely the
initial state and $p$ owns no  locks.
A transition between two local configurations $(s, B) \xrightarrow{(a, op)}_p (s',
B')$ exists when $\d_p(s,a)=(op,s')$ and one of the following holds:
\begin{itemize}
	\item $op = \nop$ and  $B = B'$;
	
	\item $op=\get{t}$,  $t \notin B$ and $B' = B\cup \set{t}$;
	
	\item $op=\rel{t}$, $t \in B$, and $B'= B \setminus
	\set{t}$.
\end{itemize}
\AP A \emph{local run} $(a_1,op_1)(a_2,op_2) \dots (a_n,op_n) $ of
$\aut_p$ is a finite sequence over
$\S_p\times\OpTp$ such that there exists a sequence of local configurations
$(\init_p,\es)=(s_0,B_0)\act{(a_1,op_1)}_p
(s_1,B_1)\act{(a_2,op_2)}_p\dots (s_n,B_n)$.
While the run is determined by the sequence of actions, we prefer to make
lock operations explicit.
We write  $\Runs_p$ for the set of local runs of $\aut_p$.
\AP We call a local run \intro{neutral} if it starts and ends with the same set of locks.

\AP A \intro{lock-sharing system} (\intro*\lss) $\Ss=((\aut_p)_{p\in\Proc},\S^s,\S^e,T)$ is a set of processes 
together with a partition of actions between \intro{controllable}
actions from $\S^s$ and
\intro{uncontrollable} actions from $\S^e$, and a set $T$ of locks.
We write $T=\bigcup_{p\in\Proc}T_p$, for the set of all locks. 
\kl{Controllable} and "uncontrollable" actions belong to the system and to
the environment, respectively.
We write $\S=\bigcup_{p\in\Proc}\S_p$ for the set of actions of all
processes and require that $(\S^s,\S^e)$ partitions $\S$.
The sets of states and action alphabets of processes are disjoint: $S_{p}\cap S_{q}=\es$
and $\S_{p}\cap \S_{q}=\es$ for all $p\not=q$.
The sets of locks are not disjoint, in general, since processes may
share locks.

\begin{exa}\label{ex:philo}
	The dining philosophers problem can be formulated as a control problem for
	a "lock-sharing system" $\Ss=((\aut_p)_{p\in\Proc},\S^s,\S^e,T)$.
	Let  $\Proc = \set{1,\ldots,n}$  and  $T = \set{t_1, \ldots, t_n}$ as the
	set of locks.
	For every $p\in\Proc$, process $\aut_p$ is as
	in Figure~\ref{fig:philosophers}, with the convention that $t_{n+1}=t_1$.
	Actions in $\S^s$ are marked by dashed arrows. 
	These are "controllable" actions.
	The remaining actions are in $\S^e$.
	Once the environment makes a philosopher $p$ hungry, $p$ has to
	get both the left ($t_p$) and the right ($t_{p+1}$) fork to eat. 
	She may however choose the order in which she takes them; actions $\lft$ and
	$\rgt$ are "controllable".  
	
		\begin{figure}
	\begin{tikzpicture}[xscale=1.4,yscale=1.3,AUT style]
		\node[state,initial, fill=blue!5!white] (0) at (0,0) {};
		\node[state, fill=blue!5!white] (1) at (1.5,0) {};
		\node[state, fill=blue!5!white] (2) at (3,0.4){}; 
		\node[state, fill=blue!5!white] (4) at (6,0) {};
		\node[state, fill=blue!5!white] (5) at (3,-0.4) {};
		\node[state, fill=blue!5!white] (2') at (4.5,0.4) {};
		\node[state, fill=blue!5!white] (5') at (4.5,-0.4){};
		\node[state, fill=blue!5!white] (6) [below of=5] {};

		\path[->] (0) edge node[above] {$\hungry$} (1);
		\path[->,loop above] (0) edge node[above] {$\think$} (0);
		\path[->, dashed] (1) edge node[above] {$\lft$} (2);
		\path[->, dashed] (1) edge node[below] {$\rgt$} (5);
		\path[->] (2') edge node[above] {$\get{t_{p+1}}$} (4);
		\path[->] (5') edge node[below] {$\get{t_{p}}$} (4);
		\path[->] (2) edge node[above] {$\get{t_{p}}$} (2');
		\path[->] (5) edge node[below] {$\get{t_{p+1}}$} (5');
		
		\path[->, bend left=35] (4) edge node[below]  {$\rel{t_p}$} (6);
		\path[->] (6) edge node[below]  {$\rel{t_{p+1}}$} (0);
		
	\end{tikzpicture}
	\caption{A dining philosopher $p$. Dashed transitions are "controllable".}
	\label{fig:philosophers}
\end{figure} 
\end{exa}	

A \emph{global configuration} of $\Ss$ is a tuple of local configurations
$C=(s_p, B_p)_{p\in\Proc}$ provided the sets $B_p$  are pairwise disjoint:
$B_{p}\cap B_{q}=\es$ for $p\not=q$. 
This is because a lock can be taken by at most one process at a time. 
The initial configuration is the tuple of initial configurations of
all processes.

The semantics of such systems is \emph{asynchronous}, as a step of computation is simply defined as one process taking a local transition:
$C\act{(a,op)}C'$ with $C=(s_p, B_p)_{p\in\Proc}$ and $C'=(s'_p, B'_p)_{p\in\Proc}$ if for some process $p$, $(s_p,B_p)\act{(a,op)}_p(s'_p,B'_p)$ and
$(s_q,B_q)=(s'_q,B'_q)$ for every $q\not=p$.
A global run is a sequence of transitions between global
configurations.
Since our systems are deterministic we usually identify a global run by the sequence
of transition labels.
Observe that any action name determines the process that executes it,
since the $\S_p$ are disjoint.
A global run $w$ \emph{determines a local run} of each process: $w|_p$
is the projection of $w$ on $\S_p$.

\AP A \intro{local strategy} $\s_p$ says which actions $p$ can take depending on its local run
so far. 
Moreover, it cannot  block environment actions.
Formally, for every $u\in \Runs_p$ define $\out(u)\subseteq \S_p$ as the set of
actions that are possible after $u$.
Then $\sigma_p : \Runs_p \to 2^{\Sigma_p}$ is such that $\s_p(u) \subseteq \out(u)$
  provided that $(\S^e\cap\out(u))\incl \s_p(u)$.
A \intro{control strategy} for a "lock-sharing system" is a tuple of local strategies, one for each
process: $\sigma = (\sigma_p)_{p \in \Proc}$.

A local run $u$ of $p$ \emph{respects} $\s_p$ if for every non-empty
prefix $v \, (a,op)$
of $u$, we have $a\in\s_p(v)$. 
Observe that local runs are affected only by the local strategy of
that process, there is no inter-process communication.
A global run $w$ respects $\s$ if for every process $p$, the local run $w|_p$
respects $\s_p$. 
We often say just $\s$-run, instead of ``run respecting $\s$''.

As an example consider the system for two philosophers from
Example~\ref{ex:philo}.
Suppose that both local strategies always say to take the $\lft$
transition.
So $\hungry^1,\lft^1,\eget{1}{t_1},\eget1{t_2}$ is a local run of process $1$ respecting
the strategy; similarly $\hungry^2,\lft^2,\eget2{t_2},\eget2{t_1}$ for process $2$.
(We use superscripts to indicate the process doing an action.)
The global run $\hungry^1,\hungry^2,\lft^1,\lft^2,\eget1{t_1},\eget2{t_2}$ respects
the strategy. 
It deadlocks, since  each philosopher needs a lock the other one owns.

\begin{defi}[Deadlock avoidance control problem]
	A $\s$-run $w$ \intro{leads to a deadlock in $\s$} if $w$ cannot be
	prolonged to a $\s$-run.
	A control strategy $\s$ is \intro{winning} if no $\s$-run leads to a deadlock in $\s$.
	The \intro{deadlock avoidance control problem} is to decide if for a given
	system there is some winning control strategy.  
\end{defi}

In this work we consider several variants of the deadlock avoidance control problem. 
Maybe surprisingly, we get more efficient algorithms when we exclude
strategies that can block a process by itself:
\begin{defi}[Locally live strategy]
	\label{def:locallylive}
	\AP A local strategy $\s_p$ for process $p$ is \intro{locally
          live} if every  $\s_p$-run $u$ of $p$
	can be prolonged: there is some $b \in\S_p$ and $op \in \OpTp$ such
        that  $u \, (b,op)$ is
        a $\s_p$-run, too.
	A strategy $\s$ is locally live if each of its associated local strategies
	is so.
\end{defi}
In other words, a "locally live" strategy guarantees that a
process does not block 
if it runs alone according to $\s_p$. 
Back to Example~\ref{ex:philo}:
a strategy always offering one of the $\lft$ or $\rgt$ actions is "locally live". 
A strategy that offers none of the two is not.
Observe that blocking one process after the hungry action is a very efficient strategy to
avoid a deadlock, but it is not the intended one. 
This is why we consider "locally live" to be a desirable property rather than a restriction.

Note that being "locally live" is not exactly equivalent to a strategy
always proposing at least one transition.
This is because with our definition, a process blocks if it tries to  acquire a lock that it already owns, or
to release a lock it does not own.
But it becomes equivalent thanks to the following:

\begin{rem}\label{rem:nice-get-and-rel}
We can assume w.l.o.g.~that \lss\ are ""lock-aware"": by this we mean that
every process knows from its local state which locks it holds,  and it
never tries to  acquire a lock that it
        already owns, or release a lock that it does not own.
  	Note that enforcing lock-awareness does not compromise the complexity
        results when  processes can access only a fixed number of locks.
	We will not use lock-awareness in Section~\ref{sec:nested}, where a
	process can access arbitrarily many locks (in nested fashion).
\end{rem}

Without any restrictions our synthesis problem is undecidable. 
The proof of
the theorem below is in Section~\ref{sec:undec}. 
\begin{restatable}{thm}{undecfourlocks}
	\label{thm:undecfour}
	The "deadlock avoidance control problem" for arbitrary \lss\
        is undecidable (even when the number of locks and processes is
        fixed). 
\end{restatable}

We propose then two interesting cases when the control problem becomes decidable.

In the first case each process accesses at most two different
locks. In the following definition, we require each process to use
exactly two locks, as it is more convenient to avoid case distinctions
on the number of locks used by a process. This is not more restrictive
as we can always add some dummy locks, which are never  used.

\begin{defi}[\tlss]\label{def:tlss}
	\AP A process $\aut_p = (S_p, \Sigma_p, T_p, \delta_p, \init_p)$ \intro{uses two
		locks} if $|T_p|=2$.
	A system  $\Ss=((\aut_p)_{p\in\Proc},\S^s,\S^e,T)$ is a \tlss\ if  every
	process uses two locks.
\end{defi}
Note that in the above definition we do not bound the total number of locks in
the system, just the number of locks per process.
The process from Figure~\ref{fig:philosophers} is a \tlss.
Our first main result says that the control problem is decidable for \tlss.
\begin{restatable}{thm}{twolssSigma}
	\label{thm:2LSSSigma}
	The "deadlock avoidance control problem" for \tlss\ is $\S^p_2$-complete.
\end{restatable}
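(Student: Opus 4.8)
The plan is to prove membership in $\S^p_2$ and $\S^p_2$-hardness separately.

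For membership, the guiding observation is that in a \tlss\ the only way the global system can permanently get stuck is through a \emph{circular wait}: a set of processes, each holding one of its two locks and blocked forever trying to acquire the other, with the held/wanted locks forming a cycle. I would first argue that the deadlock-relevant content of an arbitrary local strategy $\s_p$ — an infinite object — is captured by a constant amount of information, a \emph{pattern} recording, for the two locks of $p$, which neutral local runs let $p$ reach a configuration in which it owns one lock and is, under $\s_p$, irrevocably trying to acquire the other. Since each process has only two locks there are only boundedly many such patterns. This yields a \textbf{small-certificate lemma}: a winning control strategy exists if and only if there is a polynomial-size \emph{pattern assignment} — one pattern per process — that (i) is \emph{realizable} by an actual local strategy respecting the uncontrollable actions of each process, and (ii) admits no deadlock. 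The patterns induce a lock graph $\Gp$ whose cycles are exactly the candidate circular waits, and condition (ii) amounts to the absence of a reachable cycle witnessed by composable neutral local runs (a ``deadlock scheme'').

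The $\S^p_2$ algorithm then has the shape $\exists\,\forall$. The outer existential guesses one pattern for each process; this certificate has polynomial size by the lemma. The inner part must verify that the guessed assignment is winning, i.e.\ that no reachable global configuration is a deadlock. I would encode a potential deadlock as a short combinatorial witness: a cycle in $\Gp$ together with, for each process on the cycle, a neutral local run reaching the blocking configuration prescribed by its pattern. Checking that such a witness is a genuine reachable deadlock is polynomial, so ``a deadlock exists'' is in \np\ and ``no deadlock exists'' is in \conp; combined with the polynomial existential guess this gives $\S^p_2$. One must also verify in polynomial time that each guessed pattern is realizable — that some local strategy induces exactly this pattern while still offering every uncontrollable action — which I would fold into the same coNP check or perform directly on the certificate.

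For hardness I would reduce from the validity of a quantified Boolean formula $\exists X\,\forall Y\,\varphi$ with $\varphi$ quantifier-free in an appropriate normal form, the canonical $\S^p_2$-complete problem. The existentially quantified variables $X$ are encoded by controllable choices that a group of processes must commit to once and for all: the choice of which of its two locks a process acquires first plays the role of a truth value, exactly as the $\lft/\rgt$ choice does for a dining philosopher. The universally quantified variables $Y$ and the clauses of $\varphi$ are encoded by uncontrollable actions together with the adversarial scheduling of processes, arranged so that the environment can force a circular wait — hence a global deadlock — precisely when the chosen $X$-assignment together with some $Y$-assignment falsifies $\varphi$. A winning control strategy then exists iff $\exists X\,\forall Y\,\varphi$ holds.

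The main obstacle I expect is the membership direction, specifically the small-certificate lemma: showing that the deadlock-relevant behaviour of an arbitrary local strategy collapses to a constant-size pattern, and that circular waits built from these patterns characterize exactly the unavoidable deadlocks. Getting the $\exists\,\forall$ quantifier structure right — keeping the certificate, the deadlock witnesses, and the realizability test all polynomial — is where the real work lies; the hardness reduction, while demanding careful gadget design, follows the familiar recipe of encoding the two quantifier blocks into controllable choices and adversarial scheduling.
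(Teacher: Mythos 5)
Your proposal has two genuine gaps, both stemming from the same misconception: you model deadlocks as \emph{circular waits} among processes each holding one lock and waiting for another. That picture is (roughly) correct only for \emph{locally live} strategies, which is precisely the restriction under which the paper proves an \np\ upper bound; it is wrong for the unrestricted problem that this theorem addresses. In the general case a strategy may deliberately steer a process into a state with no enabled transitions while it holds locks, so a deadlock can be a wait-for structure terminating in self-blocked processes, with no cycle at all. Concretely, your membership argument breaks in two places. First, a single pattern per process is not a sound certificate: the environment's uncontrollable actions can drive a process to different risky runs with different patterns, so the certificate must be a (still constant-size) \emph{set} of patterns per process, realizability meaning some strategy admits only patterns inside these sets. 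Second, your inner \conp\ check (``no cycle in the lock graph witnessed by composable neutral runs'') misses non-cyclic deadlocks, e.g.\ patterns of the form $\emptyset \to B$ or $O \to \emptyset$, and it also ignores the scheduling constraint needed to compose local runs into one global run: the paper's characterization (its Lemma~\ref{lem:patterncharacterization}) replaces your cycle condition by coverage ($\bigcup_p \Blocks_p \subseteq \bigcup_p \Owns_p$), disjointness of the $\Owns_p$, and the existence of a total order on locks compatible with all \emph{strong} patterns (a process that released a lock after acquiring its other one can only be re-blocked on it if another process grabbed it later). Without the strong/weak distinction and this order, the ``if'' direction of the characterization --- actually building a global deadlocking run from the local witnesses --- fails.

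The hardness direction has a structural obstruction, not just a missing detail. You encode the existential variables by the order in which a process acquires its two locks (as in dining philosophers) and let the environment force a circular wait when the formula is falsified. But such gadgets live entirely in the locally live regime: the system's choices never require a process to block itself, and the deadlocks are cycles. If your reduction were correct, then composing it with the paper's \np\ algorithm for locally live strategies (Theorem~\ref{thm:2LSSNP}) would place a $\S^p_2$-complete problem in \np. The actual reduction (from $\exists\forall$-SAT in $3$-DNF) hinges on the non-locally-live mechanism your sketch never uses: a process $p_i$ acquires both literal-locks $x_i,\bar x_i$ and the system then \emph{controllably releases one of them and halts in a dead state} --- the released lock is the chosen truth value; dually, the environment fixes $y_j$ by taking one of two locks and halting. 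The resulting deadlocks are chains of processes waiting on locks held forever by halted processes --- exactly the behavior that local liveness forbids, and exactly why the complexity drops from $\S^p_2$ to \np\ once that requirement is imposed. Any correct hardness proof for the unrestricted \tlss\ problem must exploit this self-blocking ability.
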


The second main result says that restricting to "locally live"
strategies helps to obtain a quite tractable case:

\begin{restatable}{thm}{twolssNP}
	\label{thm:2LSSNP}
	The "deadlock avoidance control problem" for \tlss\ is in \NP\ when "strategies" are required to be "locally live".
\end{restatable}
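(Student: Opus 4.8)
The plan is to place the problem in \np\ by guessing a compact certificate describing a winning \emph{locally live} control strategy and verifying it in polynomial time. The leverage provided by local liveness is structural: in a \tlss, any global deadlock must consist \emph{solely} of processes blocked while trying to acquire a lock owned by another process. Indeed, assuming as in \cref{rem:nice-get-and-rel} that each process tracks the locks it owns, a process owning both of its locks can always release one or perform a local step, so a locally live strategy never leaves such a process stuck; hence every blocked process owns at most one of its two locks and is waiting for a specific second lock held elsewhere. The classical wait-for argument then forces any deadlock to contain a simple cycle in the lock graph, where each process on the cycle owns one lock of its edge and waits for the lock owned by the next process. This cyclic obstruction is precisely what a \emph{deadlock scheme} is meant to record.

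First I would fix a finite abstraction of a local strategy by its \emph{patterns}. Because a process touches only two locks, the deadlock-relevant content of its strategy is captured by finitely many patterns, each saying that the strategy can be driven to own one given lock while blocking on the other, and recording whether the environment can \emph{force} this situation (a \emph{strong} pattern) or can only reach it along a path the strategy chooses to allow (a \emph{weak} pattern). Two locally live local strategies with the same patterns are interchangeable for deadlock analysis, so the certificate need only specify, for each process, one such pattern set; since there is a bounded number of patterns per two-lock process, this certificate has polynomial size. For each guessed pattern set I would check, independently per process, that it is \emph{realisable} by some locally live local strategy. This amounts to a reachability and liveness computation on the product of the process automaton with a single ownership bit per lock, which has polynomial size, so the check runs in polynomial time.

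The guessed patterns label the edges of the lock graph as strong or weak, and the remaining verification is to confirm that this labelled graph carries no \emph{sufficient} deadlock scheme. Since a deadlock scheme is a purely combinatorial object---a family of simple cycles together with a consistent assignment of owned and awaited locks along them---testing for a sufficient one reduces to searching the lock graph for the forbidden cycle configurations, which is polynomial. A guess is accepted exactly when no sufficient deadlock scheme is found, yielding a nondeterministic polynomial-time procedure. Its correctness rests on the characterisation that a locally live control strategy is winning \emph{if and only if} the lock graph induced by its patterns admits no sufficient deadlock scheme; granting this, the algorithm accepts precisely the controllable instances.

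The crux of the proof, and the step I expect to be the main obstacle, is establishing this characterisation. Soundness (no sufficient scheme implies winning) follows from the wait-for analysis above: any reachable configuration in which no process can move projects onto a simple cycle of owners awaiting owners, and one checks that the owned/awaited data of this cycle forms a sufficient deadlock scheme, so its absence rules out deadlocks. Completeness (winning implies no sufficient scheme) is the delicate direction, because a scheme is assembled from \emph{local} pattern data, whereas a genuine deadlock is a \emph{global} event. Given a sufficient deadlock scheme I must build an adversarial schedule that simultaneously steers every process of every cycle into its prescribed blocking pattern---enforcing strong patterns outright and triggering weak ones along the cooperating path the strategy permits---and, crucially, also drives every process \emph{off} the cycles into a blocked or quiescent state, so that the resulting run is stuck in its entirety rather than in a mere fragment. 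Reconciling the per-process pattern information with the global synchronisation needed to freeze all processes at once, and arguing that the strong/weak distinction exactly matches what the environment can and cannot force, is where the real combinatorial work of the lemma will lie.
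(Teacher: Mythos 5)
Your overall architecture matches the paper's: guess a polynomial-size family of pattern sets (one per process), check per process in polynomial time that some locally live local strategy realizes only those patterns (the paper's \cref{lem:polycheckpatterns}), and accept if and only if the induced lock graph admits no sufficient deadlock scheme, relying on the characterization that a locally live strategy is winning exactly when its behavior admits no such scheme (the paper's \cref{lem:winningiffnods}). However, two of your steps have genuine gaps.

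The most serious one is your claim that testing for a sufficient deadlock scheme \emph{reduces to searching the lock graph for the forbidden cycle configurations, which is polynomial}. This unproved sentence is where the whole theorem lives. A sufficient deadlock scheme is not just a family of cycles: it consists of an existentially quantified set of locks $Z$ together with a partial assignment of edges to processes such that every solid process gets an edge, every lock of $Z$ has exactly one outgoing assigned edge, the assigned subgraph contains no strong cycle, and every process left unassigned is $Z$-lockable. These constraints interact (a strong cycle may have to be traded for its weak reversal; enlarging $Z$ enlarges the set of processes that must be covered), and the paper needs an entire incremental machinery --- \cref{alg:trim,alg:solid-cycles,alg:reach,alg:other-cycles} combined into \cref{alg:completealgo}, maintained invariants, and \cref{lem:soundpartialds,lem:correctcompletealgo} --- to obtain a deterministic polynomial-time decision. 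Without such an argument, your certificate only gives the \sigmatwo upper bound already established for unrestricted strategies (guess the behavior, then check in \conp that no scheme exists); the entire added value of local liveness is precisely this polynomial-time check, and you have not supplied it.

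The second gap is your definition of strong versus weak patterns: you take \emph{strong} to mean that the environment can force the blocking situation, and \emph{weak} to mean the strategy merely allows it. That is not the relevant distinction. In the paper a pattern is strong when the process still owns one lock $t_1$ and its last lock operation was a \emph{release} of the other lock $t_2$, which it now needs back; it is weak otherwise. The point is that a strong pattern imposes a scheduling constraint when local runs are interleaved into a global deadlocking run: some other process must acquire $t_2$ after $p$ released it, hence after $p$'s final acquisition of $t_1$, which forces $t_1$ to precede $t_2$ in the order of last acquisitions (the third condition of \cref{lem:patterncharacterization}). Forbidding strong cycles in a deadlock scheme encodes exactly the unsatisfiability of a cyclic family of such constraints. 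Under your forcing semantics this prohibition becomes unsound: a cycle of blockings each of which the environment can force is a perfectly realizable deadlock, so your verifier would find no admissible scheme and wrongly declare such a strategy winning. The forcing aspect you emphasize is already absorbed by the per-process realizability check and plays no role in the global assembly of a deadlock.
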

We do not know if the above problem is in \PTIME.
We get a \PTIME\ algorithm under one more assumption:

\begin{defi}[Exclusive systems]\label{def:exclusive}
	A process $p$ is \intro(proc){exclusive}
	if for every state $s \in S_p$: if $s$ has an outgoing
	transition with some $\get{t}$ operation then all outgoing transitions have
	the same $\get{t}$ operation. 
	A system is \intro(sys){exclusive} if all its processes are. 
\end{defi}

\begin{exa}\label{ex:felxible-philo}
	The process from Figure~\ref{fig:philosophers} is "exclusive@@proc", while the one from
	Figure~\ref{fig:flexible-philosophers}  is not. 
	The latter has a state with one $\get{t_{p+1}}$ and one $\rel{t_p}$ outgoing transition. 
	Observe that in this state the process cannot block, and has the possibility to
	take a lock at the same time.
	"Exclusive@@sys" systems do not have such a possibility, so their analysis is much
	easier. 

	\begin{figure}
	\begin{tikzpicture}[xscale=1.4,yscale=1.4,AUT style]
		\node[state, fill=blue!5!white,initial] (0) at (0,0) {};
		\node[state, fill=blue!5!white] (1) at (1.5,0) {};
		\node[state, fill=blue!5!white] (2) at (3,0.4){};
		\node[state, fill=blue!5!white] (2') at (3,-0.4) {};
		\node[state, fill=blue!5!white] (3) at (4.5,0.4) {}; 
		\node[state, fill=blue!5!white] (3') at (4.5,-0.4){};
		\node[state, fill=blue!5!white] (4) at (6,0) {};
		\node[state, fill=blue!5!white] (5) [below of= 2', yshift= -6mm, xshift=1.2cm] {};
		
		\path[->] (0) edge node[above] {$\hungry$} (1);
		\path[->,loop above] (0) edge node[above] {$\think$} (0);
		\path[->, dashed] (1) edge node[above] {$\lft$} (2);
		\path[->, dashed] (1) edge node[below] {$\rgt$} (2');
		\path[->] (2') edge node[above] {$\get{t_{p+1}}$} (3');
		\path[->] (3') edge node[below] {$\get{t_{p}}$} (4);
		\path[->] (2) edge node[above] {$\get{t_{p}}$} (3);
		\path[->] (3) edge node[above] {$\get{t_{p+1}}$} (4);
		
		\path[->, bend right=70] (2) edge node[above] {$\rel{t_p}$} (1);
		\path[->, bend left=70] (2') edge node[below] {$\rel{t_{p+1}}$} (1);
		\path[->, bend left=35] (4) edge node[right, xshift=5] {$\rel{t_p}$} (5);
		\path[->, bend left=20] (5) edge node[below] {$\rel{t_{p+1}}$} (0);
	\end{tikzpicture}
	\caption{A flexible philosopher $p$. She can release a fork if the other fork is not available.}
	\label{fig:flexible-philosophers}
\end{figure} 
\end{exa}

\begin{restatable}{thm}{exclusiveP}
	\label{thm:exclusive}
	The "deadlock avoidance control problem" for "exclusive@@sys" \tlss\ is in \PTIME, when
	"strategies" are required to be "locally live".
\end{restatable}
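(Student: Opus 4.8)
The plan is to turn the search for a winning \emph{locally live} strategy into a polynomial-time combinatorial test on the lock graph $\Gp$, exploiting the fact that exclusiveness makes the locking behaviour of each process essentially deterministic. The starting point is the deadlock characterisation underlying the $\NP$ algorithm of \Cref{thm:2LSSNP}: since each process uses exactly two locks and a locally live strategy forbids a process from blocking on its own, every deadlock is a \emph{cyclic wait}, i.e.\ a cycle $t_1,p_1,t_2,\dots,t_k,p_k,t_1$ in $\Gp$ (processes as edges, locks as vertices) in which each $p_i$ can be simultaneously driven to own $t_i$ and to wait for $t_{i+1}$. A strategy is then winning exactly when the adversary cannot realise any such cycle.

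First I would analyse each process $p$ with locks $\{a,b\}$ in isolation and classify its edge of $\Gp$. The relevant data is which of the two blocking patterns --- own $a$ and be committed to acquiring $b$, or own $b$ and be committed to acquiring $a$ --- an adversarial environment can force against a locally live local strategy $\sigma_p$. Here exclusiveness is decisive: at a state from which $p$ tries to acquire a lock, \emph{all} enabled actions acquire the \emph{same} lock, so $p$ can neither dodge the acquisition nor (being locally live) simply stop. Consequently whether a blocking pattern is realisable, and whether $\sigma_p$ can avoid it, is decided by a plain reachability computation in $\aut_p$ with the bounded owned-lock component of Remark~\ref{rem:nice-get-and-rel}. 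This yields, in polynomial time per process, a classification of the edge as \emph{harmless} (the strategy can stay neutral and realise no blocking pattern), \emph{strategy-orientable} (the strategy can commit $p$ to acquiring one fixed lock first, exposing only one blocking direction, of its choice), or \emph{forced} (some blocking direction is unavoidable under every locally live $\sigma_p$).

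Next I would prove that a winning strategy exists iff the edges that the strategy cannot render harmless can be oriented so as to leave no realisable cyclic wait. Formally $\Gp$ becomes a mixed multigraph whose forced edges are pre-oriented and whose strategy-orientable edges we may drop or orient. Soundness (a surviving cyclic wait can be scheduled into a genuine deadlocking run) and completeness (every deadlocking run projects to a cyclic wait) are the technical core, obtained by stitching together along the cycle the per-process adversarial schedules from the first step, interleaving their initialisation phases so that all $k$ processes reach their blocking patterns before any lock is released. Granting this, a winning strategy exists iff the submultigraph of forced edges contains no cyclic wait: if it does, the deadlock is unavoidable; if it does not, orienting every strategy-orientable edge consistently with a fixed global order on the locks that extends the acyclic forced part leaves the whole graph without a cyclic wait, and the corresponding commit-to-one-order strategy is winning. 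Detecting a forced cyclic wait is a cycle search in a fixed multigraph, hence polynomial.

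The step I expect to be hardest is justifying that exclusiveness genuinely eliminates the nondeterminism that keeps the general locally live case in $\NP$: I must establish a \emph{normal-form} lemma stating that it is never beneficial for distinct processes to coordinate their acquisition orders beyond a greedy per-edge choice, so that committing each strategy-orientable process to a single, globally consistent lock order (and keeping every other process neutral) is winning whenever any locally live strategy is. Proving this lemma, together with the soundness and completeness of the cyclic-wait characterisation, is the crux of the argument; once it is in place the per-process reachability analyses and the final cycle search are routine and clearly run in polynomial time.
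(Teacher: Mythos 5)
Your proposal has a genuine gap, and it sits exactly in the step you identify as the crux. The characterisation you build everything on --- a strategy is winning exactly when the adversary cannot realise a cyclic wait --- is false, because a deadlock requires \emph{every} process to be blocked, not just the processes along the cycle. Concretely: take two processes on locks $a,b$ whose exclusive, locally live automata force the edges $a \edge{p_1} b$ and $b \edge{p_2} a$ (each acquires its locks in a fixed order via uncontrollable transitions), and add a third process on fresh locks $c,d$ that only ever performs a local self-loop. Your forced subgraph contains a cyclic wait, so your criterion declares the instance losing; yet every run can be prolonged by the third process, so the system trivially has a winning strategy. This is why the paper's test is not an acyclicity test on forced edges: it first reduces winningness of a fixed strategy to a \emph{lockability} condition (\cref{prop:sigma-winning}: a sufficient deadlock scheme exists iff \emph{all} processes are $\BTPP$-lockable, where $\BTPP$ collects the locks in SCCs that contain or can reach a simple cycle), and then quantifies over strategies to obtain \cref{prop:winiffnoBTu}: a winning strategy exists iff \emph{some single} process has a local strategy that never acquires a lock from $\BTu$. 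The resulting polynomial-time test is existential over escaping processes, which your scheme has no analogue of.

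The other direction of your criterion fails as well, for a reason the paper must handle explicitly. Suppose no orientation is forced for a process $q$ on locks $t_1,t_2$, but every locally live strategy of $q$ induces \emph{one of} the two edges, while other processes contribute unavoidable double edges $t \dedge{p} t'$ forming a tree connecting $t_1$ to $t_2$ (the right-hand side of \cref{fig:semideadlock}). Your forced subgraph has no cyclic wait (both directions of a double edge carry the same process label, so they cannot form one), yet whichever edge $q$ induces closes a simple cycle through the double-edge tree; and orienting consistently with a fixed global order on locks is impossible here, since double edges go both up and down any order. The paper captures precisely this phenomenon with direct semi-deadlock SCCs in the definition of $\BTu$. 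Finally, your sketch silently drops the strong/weak pattern distinction, i.e., the total-order scheduling constraint of \cref{lem:patterncharacterization}; in the exclusive case this is indeed harmless, but only because of \cref{lem:guardededges}, which shows that every strong edge comes with a reverse weak edge, so strong cycles can be reversed into weak ones --- a fact that needs proof rather than omission.
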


Without local liveness, the problem  for
"exclusive@@sys" \tlss\ remains $\S^p_2$-hard.

The second case we consider is a common restriction on the usage of locks:
\begin{defi}[Nested-locking]
	\label{def:nested}
	A local run is \intro{nested-locking} if the order of acquiring and
	releasing locks in the run respects a stack discipline, i.e., the only lock a process can release is the last one it acquired. 
	
	\AP A process is ""nested-locking@@proc"" if all its local runs are, and an \lss\ is nested-locking if all its processes are.
\end{defi}

Note that none of the processes in Figures~\ref{fig:philosophers} and
\ref{fig:flexible-philosophers} are  "nested-locking@@proc".
However, both can be made  "nested-locking@@proc" by
remembering in the local state in which order the locks were obtained.
With this information one can easily determine if an \lss\ is
"nested-locking".

\begin{restatable}{thm}{nestedNEXP}
	\label{thm:nested}
	The "deadlock avoidance control problem" for "nested-locking" \lss\ is \NEXPTIME-complete. 
\end{restatable}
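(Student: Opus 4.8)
The plan is to prove the two bounds separately, establishing membership in \nexpt\ for nested-locking \lss\ (Definition~\ref{def:nested}) and a matching \nexpt\ lower bound, with the exponent in both cases controlled solely by the number of locks $\size{T}$. The upper bound rests on the observation that, because locks are released in stack order, the only deadlock-relevant information about a process is a finite abstraction of its nested lock usage, which I will call its \emph{behaviors}. A behavior records, for a maximal ``stair'' of a local run, the stack of locks the process holds together with the lock it is currently waiting to acquire; since holding stacks are sequences of distinct locks over $T$, there are at most exponentially many (in $\size{T}$) distinct behaviors. First I would prove a \emph{stair decomposition} lemma showing that every local run factors into stair-shaped pieces whose deadlock-relevant content is exactly such a behavior. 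Then I would characterise global deadlocks purely at the level of behaviors: a reachable global configuration is deadlocked iff the behaviors exhibited by the blocked processes form a \emph{deadlock scheme}, a cyclic wait in which each participant holds a lock the next one is waiting for. Following the lock-ordering analysis of~\cite{KahIvaGup05}, the crucial structural fact is that such a cyclic wait is simultaneously realisable exactly when the participating behaviors are pairwise compatible on their shared stacks, which decouples the processes and turns the global question into a combinatorial condition on sets of behaviors.

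The algorithm then guesses, for each process $p$, the set $\mathcal{B}_p$ of behaviors that its local strategy $\s_p$ is permitted to realise; this guess has size exponential in $\size{T}$. Verification splits into two independent checks. First, for each process in isolation, one checks that there is a \emph{locally live} (Definition~\ref{def:locallylive}) local strategy on $\aut_p$ whose set of realised behaviors is contained in $\mathcal{B}_p$; this is a one-player reachability/safety question on $\aut_p$ augmented with a lock stack, solvable in time exponential in $\size{T}$ and polynomial in $\size{\aut_p}$. Second, one checks that no sub-collection of $\bigcup_{p\in\Proc}\mathcal{B}_p$ forms a deadlock scheme; since a scheme is a compatible cyclic chain of behaviors, this search is again doable in exponential time. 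Soundness and completeness of these two checks follow from the behavior characterisation of deadlocks, yielding a nondeterministic exponential-time procedure and hence \nexpt\ membership.

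For the lower bound I would reduce from the problem of tiling a $2^n\times 2^n$ grid, a standard \nexpt-complete problem. The reduction uses $O(n)$ locks to encode a cell's coordinates in binary, so that the exponentially many cells are addressed succinctly while the constructed \lss\ stays polynomial in $n$. The processes are designed so that the only way to keep the system free of global deadlocks is to commit, through their controllable choices, to a consistent tile at each addressed cell, with dedicated processes enforcing the horizontal and vertical compatibility relations between adjacent tiles; the nested discipline is respected because each process acquires its coordinate locks, inspects or commits a tile, and then releases those locks in reverse order. A winning control strategy then exists if and only if a valid tiling exists, which gives \nexpt-hardness and completes the proof.

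The main obstacle I anticipate lies in transferring the behavior abstraction from the plain verification setting to the distributed \emph{control} setting. One must show both directions: that any winning distributed strategy induces behavior sets passing the two verification checks, and, conversely, that from behavior sets passing the checks one can reassemble genuine locally live local strategies whose parallel composition is deadlock-free. The delicate direction is \emph{realisability} --- reconstructing, from a chosen set of stair behaviors, an actual local strategy that produces exactly those behaviors and no others while never blocking locally --- because a process's nested stack couples successive stairs, so a naive product construction may either introduce spurious behaviors (breaking the deadlock-scheme check) or force local blocking (violating local liveness). Handling this coupling cleanly, presumably via the stair decomposition together with the initialization-phase treatment, is where the bulk of the technical work will go.
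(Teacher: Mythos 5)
Your overall architecture (stair decomposition, per-process pattern abstraction of size exponential in $\size{T}$, guess-and-check in \nexpt, tiling lower bound) matches the paper's, but the structural claim your upper bound rests on is false. You characterise reachable deadlocks by a cyclic-wait ``deadlock scheme'' whose participants are \emph{pairwise} compatible on their lock orders, claiming this decouples the processes. The correct condition --- the one the paper proves in Lemma~\ref{lem:winiffnopatterns} --- is the existence of a single \emph{global total order} on all of $T$ that is simultaneously compatible with every process's stair-induced partial order $\preceq^p$, together with the coverage and disjointness conditions; pairwise compatibility is strictly weaker. Concretely, take three nested-locking processes over locks $a,b,c$: process $1$ performs $\get{a}\,\get{b}\,\rel{b}$ and then can only request $b$; process $2$ does the same with $(b,c)$; process $3$ with $(c,a)$. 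The wanted locks $\set{a,b,c}$ are covered by the held locks, the held sets are disjoint, the wait-for relation is a perfect cycle, and the induced orders $a\prec^1 b$, $b\prec^2 c$, $c\prec^3 a$ are pairwise compatible, so your check reports a reachable deadlock. But no global run reaches this configuration: process $2$'s last $\get{b}$ must come after process $1$'s $\rel{b}$, hence after process $1$'s $\get{a}$; chasing the cycle places process $1$'s $\get{a}$ strictly before itself. So your algorithm would wrongly declare winning strategies losing. The pairwise analysis of~\cite{KahIvaGup05} is tailored to \emph{pairwise} reachability of two processes and does not transfer to deadlocks involving arbitrarily many processes; this joint-linearisability condition is exactly the extra idea your proposal is missing. (Note also that a \emph{global} deadlock is characterised by coverage of wanted locks, not by membership in a single cycle.)

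Two secondary points. First, your per-process check demands a \emph{locally live} local strategy, but Theorem~\ref{thm:nested} imposes no such restriction; a controller may legitimately win by blocking a process outright (this is precisely the mechanism behind the $\S^P_2$ lower bound for \tlss), so as written your algorithm decides a different problem --- the paper's patterns simply allow $\Twants=\es$. Second, the ``exact realisability'' difficulty you anticipate at the end dissolves by monotonicity: admitting fewer patterns only helps the controller, so it suffices to find a local strategy whose patterns form a \emph{subset} of the guessed set, which is what Lemma~\ref{lem:checkpatternNTSS} checks; no exactness is needed. Your lower-bound plan is in the spirit of the paper's reduction, but it omits its crux: adjacency constraints over exponentially many cells are enforced by letting the \emph{environment} secretly choose which single constraint (equality, horizontal, vertical, border) to test, so that the tile-committing processes, being blind to that choice, are forced to encode one globally consistent tiling.
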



\section{Two locks per process}
\label{sec:two}

We describe how to solve the "deadlock avoidance control problem" for
\tlss, so for systems where every process uses at most two locks.
We present the three results announced in the previous section, namely, 
Theorems~\ref{thm:2LSSSigma},~\ref{thm:2LSSNP}, and~\ref{thm:exclusive}.

The general case, treated in Theorem~\ref{thm:2LSSSigma}, puts no restriction on
"strategies" or on the system, besides being a \tlss.
The main idea is that each "winning strategy" can be decomposed into
"local strategies", each summarized by an object of polynomial size,
called its behavior.
We show that from a computational complexity perspective we cannot do better
than guessing these behaviors to solve the problem. 

The next case is when we require "strategies" to be "locally live". 
With such "strategies", a process can only block if all
locks it asks for are taken forever. 
This simplifies the analysis and enables us to reason on a graph
because of the two-locks restriction.

Finally, we consider the restriction of the deadlock avoidance problem to \emph{"exclusive@@sys"} systems,
still with "locally live" "strategies".
Here, whenever a process can execute an action acquiring a lock it is the only
thing it can do. 
This means that a process gets blocked whenever it tries to get a
certain lock that  is not available. 
Recall that the system in Figure~\ref{fig:flexible-philosophers} is
not "exclusive@@sys", whereas the one in Figure~\ref{fig:philosophers}
is so. 

Throughout this section we fix a \tlss\ $\Ss=((\aut_p)_{p\in\Proc},\S^s,\S^e, T)$ over the
set of processes $\Proc$. 
We also assume that the \tlss\ is "lock-aware" (cf.~Remark~\ref{rem:nice-get-and-rel}).
We also fix a "control strategy" $\s=(\s_p)_{p \in \Proc}$.

The three following subsections present the three cases.

\subsection{The general case of \tlss}

We will use summaries of local runs through so-called 
\emph{patterns}, that describe the most recent lock operations.
We will see later that this information is sufficient to decide if the
"strategy" is "winning" (Lemma~\ref{lem:patterncharacterization}).
Informally, a \emph{pattern of a local run} of process $p$ in a \tlss\ describes which of
the 
four following situations are possible for $p$ at the end of its run:
\begin{itemize}
	\item $p$ owns both locks; 
	
	\item $p$ owns no lock;
	
	\item $p$ owns exactly one of its locks, say $t$, and either
	\begin{itemize}
		\item its last operation on locks was $\get{t}$; or
		\item the last operation on locks was $\rel{t'}$ with
                  $t \not= t'$.
	\end{itemize}
\end{itemize}

Before defining patterns formally we introduce the runs for which we
need them, which are runs that lead potentially to deadlocks:
\begin{defi}[Risky run]
  Consider a local $\s$-run $u$ of a process $p$.
  We say that  $u$ is $\s$-\intro{risky}
  if after executing $u$ all transitions allowed by $\s$ are $\get{}$
  transitions\footnote{A particular case is where after $u$ no
    transitions are possible at all.}.
  We simply write risky when $\s$ is clear from the context. 

\AP   We write $\intro*\Owns_{p,\s}(u)$ for the set of locks owned by $p$ after $u$, or
  simply $\Owns_{p}(u)$ when $\s$ is clear from context. 
  We write  $\intro*\Blocks_{p,\s}(u)=\set{t : \get{t}\in\s_p(u)}$, or 
  simply $\Blocks_p(u)$ when $\s$ is clear from context. 
\end{defi}
Note that if a $\s$-run $u$ is "risky" and the "strategy" $\s$ is
"locally live", then 
$\Blocks_p(u) \not=\es$; if $\s$ is not "locally live" then
$\Blocks_p(u)$ can be empty.
If the run is not "risky" then the process can do some local
action or a release action.

We can now define patterns formally.

\begin{defi}[Patterns]
	\label{def:patterns2locks}
	Consider a \textbf{"risky"} local $\s$-run $u$ of process $p$. 
        We say that $u$ has a \intro{strong pattern}
        $\Owns_p(u)\intro*\spat \Blocks_p(u)$  if $\Owns_p(u)\neq\es$ and the last
        operation on locks in $u$ is a release. Otherwise we say  that $u$ has a
        \intro{weak pattern} $\Owns_p(u) \intro*\wpat \Blocks_p(u)$.
	We also  write $\Owns_p(u)\pat\Blocks_p(u)$ if we do not  specify
        if a pattern is "strong@@pat" or "weak@@pat".
	
	\AP We say that $\s$ \intro{admits a pattern} $\Owns_p \pat \Blocks_p$ ($\Owns_p
        \spat \Blocks_p$, $\Owns_p \wpat \Blocks_p$, resp.)
   	for process $p$ if there exists some "risky" $\s$-run $u$ of $p$ with $\Owns_p=\Owns_p(u)$,
        $\Blocks_p=\Blocks_p(u)$ and this kind of
    pattern  (strong, weak, resp.). 
     
	We write $\PP^\s_p$ for the set of patterns for $p$  admitted
        by $\s$. 
	We write $\PP^\s=(\PP^\s_p)_{p\in\Proc}$ 
    and denote $\PP^\s$ as the \intro(two){behavior} of $\s$.
\end{defi}

We will refer to "patterns" of process $p$ as $\Owns_p \pat \Blocks_p$,
in order to stress the name of the process, and
we always assume that $\Owns_p \cap\Blocks_p=\es$.
Since in a  \tlss\ any process uses two locks, a "strong pattern"
$\Owns_p \spat \Blocks_p$ for
$p$ is such that
$\Owns_p=\set{t_1}$, and $\Blocks_p$ is either $\set{t_2}$ or $\es$, where $t_1,t_2$
are the two locks used by $p$.
For example, the \tlss\ in Figures~\ref{fig:philosophers} and
\ref{fig:flexible-philosophers}  admit only "weak patterns".
Consider now the \tlss\ in Figure~\ref{fig:S2hardness}.
If the strategy of process $p_i$ takes only the lower branch, then its
"patterns" are $\es \wpat \set{x_i}$, $\set{x_i} \wpat
\set{\overline{x_i}}$ and $\set{x_i} \spat \es$.
If the strategy allows both branches then we add another "strong
pattern", $\set{\overline{x_i}} \spat \es$.

The next lemma characterizes winning strategies in terms of
patterns.

\begin{lem}
\label{lem:patterncharacterization}
Let $\s = (\s_p)_{p \in \Proc}$ be a "strategy" and
$\PP^\s=(\PP^\s_p)$ its "behavior@@two".
Then $\sigma$ is \textbf{not} "winning" if and only if for every $p$ there is
some "pattern" $\Owns_p \pat \Blocks_p$ in $\PP^\s_p$ such that all
conditions below hold:
\begin{itemize}
\item $\bigcup_{p \in \Proc} \Blocks_p \subseteq \bigcup_{p \in \Proc} \Owns_p$,
		
\item the sets $\Owns_p$ are pairwise disjoint,
		
\item there exists a total order $<$ on $T$ such that for all $p$,
  if $p$ admits a "strong pattern" $\set{t}\spat \Blocks_p$ then
  $t<t'$, where $t'$ is the other lock used by $p$.
\end{itemize}
\end{lem}

\begin{proof}
	Suppose that $\s$ is not "winning", let $\run$ be a global $\s$-run ending in a "deadlock", 
	and for each process $p$ let $\run_p$ be the corresponding local run.
	
	For every $p$, the local run $\run_p$ has to be "risky", otherwise $\run_p$
	could be extended into a longer run consistent with $\s$. 
	Thus $\run_p$ has a "pattern" 
	$\Owns_p \pat \Blocks_p$ in $\PP^\s_p$. 
	
	We check that these "patterns" meet all requirements of the lemma.
	Clearly as we are in a "deadlock", the only actions available
        to each process acquire locks that are already
	taken, hence the first condition is satisfied.
	Furthermore, no two processes can own the same lock, implying the second
	condition.
	Finally, let $<$ be a total order on locks compatible with the order in $u$
	between the last operation on each lock, that is: $t < t'$
	if the last operation on $t$ in $\run$ is before the last one on $t'$. 
	If one of $t, t'$ is untouched throughout the run then the order is taken arbitrarily.
	
	Consider a process $p$ using locks $t,t'$ and such that $u_p$ has a "strong@@pat" "pattern" $\set{t}\spat \Blocks_p$.
	So $\run_p$ is of the form $\run_1 (a, \get{t}) \run_2 (b,\rel{t'}) \run_3$ 
	with no action on $t$ in $\run_2$ or $\run_3$. 
	Hence $t < t'$ since the last action on $t$ is before the last action on $t'$.

	We now prove the  other direction of the lemma.
	Suppose that for each $p$ there is a "pattern" $\Owns_p \pat \Blocks_p$ in $\PP^\s_p$ such that 
	those "patterns" satisfy all three conditions of the lemma. 
	Let $<$ be a total order on locks witnessing the third condition.
	
	By definition, for all $p$ there exists a "risky" local run
        $\run_p$ with $\Owns_p=\Owns_p(u_p)$ and $\Blocks_p=\Blocks_p(u_p)$.
	We show now the existence of a global run $u$ with $u_p =
        u|_p$ for every $p \in \Proc$.
	We start by executing 	one by one, in some arbitrary order, all the $\run_p$
	such that $\Owns_p = \emptyset$.
	After executing each such run, all locks are free, hence we can execute the
	next one. 
	At the end all locks are still free.
	
	For all $p$ such that $\Owns_p = \set{t}$ and $\Owns_p \wpat \Blocks_p$ is
	weak, we can write $\run_p$ as $\run^p_1 (a, \get{t}) \run^p_2$ with $\run^p_1$ "neutral" and $\run^p_2$ not containing any operation on locks. 
	We can execute $\run^p_1$, which again leaves all locks free as it is "neutral".
	
	Next we consider all the processes $p$ where $u_p$ has a "strong
	pattern" $\set{t_p}\spat \Blocks_p$.
	We execute all runs $u_p$ according to the order $<$. 
	This is possible, as for each such $p$ we have $t_p < t'_p$,
        where $t'_p$ is the other lock used by $p$.
	The order $<$ guarantees that before executing $\run_p$ all
        locks $t \ge t_p$ are free.
	In particular, since $t_p$ and $t'_p$ are free, we can execute
        $\run_p$. 
	
	At this point all locks are free except for locks
        $t_p$ of processes $p$ with a "strong
	pattern" $\set{t_p}\spat \Blocks_p$.
	We now come back to the $u_p$ with "weak patterns". 
	We execute the remaining parts of $u_p$, namely $(a,\get{t})\run^p_2$
	as above. As $\run_2^p$ contains no operation on locks, we only need $t$ to be free to execute this run.
	As all $\Owns_q$ are disjoint, and all locks taken at that point belong to some other $\Owns_q$, $t$ is free, hence all such runs can be executed.
	
	Finally, the remaining runs $u_p$ are the ones such that 
	$\Owns_p = \set{t,t'}$ contains both locks of $p$. 
	As all $\Owns_p$ are disjoint, both $t,t'$ are free, hence
        $\run_p$ can be executed. 
	
	We have executed all local runs, therefore we reach a configuration where
	all processes need some lock from 
	$\bigcup_{p \in \Proc} \Blocks_p$ to keep running, and all locks in 
	$\bigcup_{p \in \Proc} \Owns_p$ are taken.
	As $\bigcup_{p \in \Proc} \Blocks_p \subseteq \bigcup_{p \in \Proc} \Owns_p$, we have reached a "deadlock".
\end{proof}

Thanks to Lemma~\ref{lem:patterncharacterization}, in order to decide if there is a "winning strategy" for a
given system it is enough to come up with a set of "patterns" $\Pat_p$ 
for each process $p$ and show two properties:
\begin{itemize}
	\item there exists a "strategy" $\s$ such that
          $\PP^\s_p \subseteq \Pat_p$ for each process $p$; 
	\item the sets of "patterns" $\Pat_p$ do not meet the conditions given by
	Lemma~\ref{lem:patterncharacterization}.
\end{itemize}

Note that in the first condition we only require an inclusion because by the
previous lemma, the less "patterns" a "strategy" 
allows, the less likely it is to create a "deadlock".

We start by showing that given a set of "patterns" for each process,
we can check the first condition in polynomial time.

\begin{lem}
	\label{lem:polycheckpatterns}
	Given a "behavior@@two" $(\PP_p)_{p \in \Proc}$,
	it is decidable in \PTIME\ whether there exists a "strategy" 
	$\s$ such that for every $p$ we have $\PP^\s_p
        \subseteq \Pat_p$. 
\end{lem}

\begin{proof}
	First of all recall that we only need to check for each $p$
        that there exists a "local strategy" $\s_p$ that does not
        allow any "risky" run of $p$ with "pattern" not in $\Pat_p$. 
	
	Let $p \in \Proc$ and $\aut_p=(S_p,\S_p,T_p,\delta_p,
        \init_p)$ be its transition system.
        Recall that we assume that  $\aut_p$ is "lock-aware".
        We can do a bit more: in a  state where $p$ owns lock $t_1$,
        we store an additional  
	bit of information saying whether $p$ released its other lock $t_2$
	since the last acquisition of $t_1$. 
	This way, the "risky" nature of a local run and its
        "pattern" depend  only on 
	the state in which the run ends and the outgoing transitions. 
	For instance if a state has no outgoing transitions and is such that
	when reaching it $p$ holds $t_1$ and released $t_2$ since acquiring it,
	then the "pattern" of runs ending there is 
	$\set{t_1} \spat \emptyset$.

        A local state is called bad if all its outgoing transitions have
        acquire operations, and there is no subset of outgoing
        transitions that includes all "uncontrollable" such
        transitions and that yields only "patterns" in $\Pat_p$.
        Otherwise, the state is called good.

        Clearly, a "strategy" $\s$ satisfies $\PP^\s_p
        \subseteq \Pat_p$ iff all states reached by $\s_p$-runs are good.

	To know whether there exists a local strategy $\s_p$ such that
        all its "patterns" are in $\Pat_p$ we proceed as follows.
        We iteratively delete bad states and all their ingoing 
	transitions. If one of those transitions is "uncontrollable"
	we declare its source state as bad (as reaching that state would allow the 
	environment to take that transition, leading us to a bad state).
	Note that deleting transitions may create more bad states by reducing the 
	choice of the system.
	If we end up deleting $\init_p$, we conclude that there is no suitable "local strategy". 
	Otherwise the subsystem we obtain  has only good states,
       and it corresponds to a "strategy" $\s_p$ as desired.
\end{proof}

\begin{prop}\label{p:SigmaP}
The "deadlock avoidance control problem" for \tlss\ is decidable in
\sigmatwo.
\end{prop}

\begin{proof}
	The algorithm first guesses a set of "patterns"
	$\Pat_p$ for each   process $p$.
        Note that the overall size of $\Pat$ is polynomial in $|\Proc|$.
  By Lemma~\ref{lem:polycheckpatterns}, we can then check in polynomial time if
  there exists a "strategy" $\s=(\s_p)_{p\in \Proc}$ with $\s_p$ admitting only
  "patterns" in $\Pat_p$.
  By Lemma~\ref{lem:patterncharacterization} we can determine in \coNP\
  whether $\s$ is winning.

	For the correctness of the algorithm observe that if there exists a "winning
	strategy" $\s$ then it suffices to guess its "behavior@@two" $\Pat^\s$.
  Conversely suppose the algorithm  guessed a "behavior@@two" not meeting the requirements of
  Lemma~\ref{lem:patterncharacterization}.
	Then the "strategy" obtained by Lemma~\ref{lem:polycheckpatterns} is "winning".
\end{proof}

\twolssSigma*
 
\begin{proof}
	The upper bound follows immediately from Proposition~\ref{p:SigmaP}.
	
	For the lower bound we reduce from $\exists\forall$-SAT. 
	Suppose that we are given a formula in $3$-disjunctive normal form
        $\bigvee_{k=1}^s \a_k$, so
	 each $\a_k$ is a conjunction of three literals $\ell^k_1 \land  \ell^k_2
	\land \ell^k_3$ over a set of variables $\set{x_1, \ldots, x_n, y_1, \ldots,
	y_m}$. 
	The question is whether the formula 
	$\phi=\exists x_1 \ldots \exists x_n \forall y_1 \ldots \forall y_m, \bigvee_{k=1}^s \a_k$ is 
	true.
	
	We construct a \tlss\ for which there is a "winning strategy" iff the formula is
	true. 
	The \tlss\ will use locks:
	\begin{equation*}
		\set{t_k \mid 1 \leq k \leq s} \cup
		\set{x_i, \bar{x_i} \mid 1 \leq i \leq n} \cup \set{y_j, \bar{y_j} \mid 1 \leq
		j \leq m}\ .
	\end{equation*}

	For each $1 \leq i \leq n$ we have a process $p_i$ for each
        existentially quantified variable, 
	as depicted in \cref{fig:S2hardness}. 
	In that process the system has to take both $x_i$ and $\bar{x_i}$, and then may release one of them before being blocked in a state with no outgoing transitions. 
	Similarly, for each universally quantified variable we have a
        process $q_j$,  $1 \leq j \leq m$, in which the
	environment has to take $y_j$ or $\bar{y_j}$, and then it blocks.
	
	For each clause $\a_k$ we have a process $p(\a_k)$ which just has one 
	transition acquiring lock $t_k$ towards a state with a local loop on it.
	Hence to block all those processes the environment needs to have all $t_k$
	taken by other processes. 
	
	The environment can block all processes $p(\a_k)$  with the last type of processes.
	For each clause $\a_k$ and each literal $\ell$ of $\a_k$ there is a process
	$p(\a_k,\ell)$.
	There the process has to acquire $t_k$ and then $\ell$ before entering a state with a self-loop.
	In order to block all processes $p(\a_k)$, each $t_k$ has to be taken by a process $p(\a_k,\ell)$ 
	for some literal $\ell$ of $\a_k$.
        For  process $p(\a_k,\ell)$ to be
        blocked, 
	lock $\ell$ has to be taken before, by some $p_i$ or $q_j$.
	
	A "strategy" for the system amounts to choosing whether $p_i$ should release $x_i$
	or $\bar{x}_i$, for each $i=1,\dots,n$. It may also choose to release neither.
	Since the environment has a global view of the system, it can afterwards
	choose one of $y_j, \bar{y_j}$ in process $q_j$, for each $j=1,\dots,m$.
	Those choices represent a valuation, a lock remaining free
        corresponds to the  literal being true.

        If the formula $\phi$ is true, then the system chooses the
        valuation of the $x_i$'s in order to make $\phi$ true.
        As soon as processes $p_i,q_j$ have reached their final state,
        we  also have a valuation for the $y_j$'s.
        At this point there is at least one clause $\a_k$ true, so with
        all its literals $\ell^k_1,\ell^k_2,\ell^k_3$ true.
        Observe that among the 4 processes $p(\a_k)$ and
        $p(\a_k,\ell^k_1)$, $p(\a_k,\ell^k_2)$, $p(\a_k,\ell^k_3)$  at least one
       can reach its self-loop, namely the one that acquires $t_k$
       first.
       Hence, the system does not deadlock.
       Note also that no winning strategy here can be "locally live",
       because of processes $p_i$ and $q_j$.

       Otherwise, if the formula $\phi$ is not true, then for each
       choice of the system for the $x_i$'s, the environment can chose
       afterwards a suitable valuation of the $y_j$'s that falsifies
       $\phi$ (``afterwards'' means that we look at a suitable
       scheduling of the acquire actions).
       For such a valuation, for every $\a_k$ there is some literal
       $\ell^k$ of $\a_k$ that is false.
       Consider the scheduling that lets $p(\a_k,\ell^k)$ acquire $t_k$
       first.
       Since $t_k$ is taken, this implies that $p(\a_k,\ell^k)$ is
       blocked.
       Also, $p(\a_k)$ is blocked because of $t_k$.
       The other two processes $p(\a_k,\ell)$ with $\ell\not=\ell^k$ are
       also blocked because of $t_k$.
       So overall the entire system is blocked.
   \end{proof} 
       
        		\begin{figure}
	\begin{tikzpicture}[xscale=2,yscale=1.3,AUT style]
		\node[state,initial, fill=blue!5!white] (0) at (0,0) {};
		\node[state, fill=blue!5!white] (01) at (0.7,0) {};
		\node[state, fill=blue!5!white] (02) at (1.4,0) {};
		\node[state, fill=blue!5!white] (1) at (2.1,0.35) {};
		\node[state, fill=blue!5!white] (2) at (2.1,-0.35) {};
		\node[state, fill=blue!5!white] (3) at (3,0) {};
		\node (xi) at (2.1,1.2) {\large $p_i$};
		
		\path[->,dashed] (02) edge (1);
		\path[->,dashed] (02) edge (2);
		\path[->] (0) edge node[above] {$\get{x_i}$} (01);
		\path[->] (01) edge node[above] {$\get{\bar{x_i}}$} (02);
		\path[->] (1) edge node[above] {$\rel{x_i}$} (3);
		\path[->] (2) edge node[below] {$\rel{\bar{x_i}}$} (3);
		
		\node[state,initial, fill=blue!5!white] (0) at (3.8,0) {};
		\node[state, fill=blue!5!white] (1) at (4.4,0.35) {};
		\node[state, fill=blue!5!white] (2) at (4.4,-0.35) {};
		\node[state, fill=blue!5!white] (3) at (5.3,0) {};
		\node (yj) at (4.8,1.2) {\large$q_j$};
		
		\path[->] (0) edge (1);
		\path[->] (0) edge (2);
		\path[->] (1) edge node[above] {$\get{y_j}$} (3);
		\path[->] (2) edge node[below] {$\get{\bar{y_j}}$} (3);
		
		\node[state,initial, fill=blue!5!white] (0) at (3,-2.2) {};
		\node[state, fill=blue!5!white] (1) at (4,-2.2) {};
		\node (pak) at (4.6,-1.2) {$p(\a_k)$};
		
		\path[->] (0) edge node[above] {$\get{t_k}$} (1);
		\path[->, loop above] (1) edge (1);
		
		\node[state,initial, fill=blue!5!white] (0) at (0,-2.2) {};
		\node[state, fill=blue!5!white] (1) at (0.8,-2.2) {};
		\node[state, fill=blue!5!white] (3) at (1.6,-2.2) {};
		\node (pakl) at (1.8,-1.2) {\large$p(\a_k,\ell)$};
		
		\path[->] (0) edge node[above] {$\get{t_k}$} (1);
		\path[->] (1) edge node[above] {$\get{\ell}$} (3);
		\path[->, loop above] (3) edge (3);
	\end{tikzpicture}
	\caption{Processes used in Theorem~\ref{thm:2LSSSigma}. 
		Transitions of the system are dashed. All unlabeled
		transitions carry $\nop$
		as operation. Processes $p_i$ and $q_j$ handle
		existentially and universally
		quantified variables, resp.; process $p(\a_k,\ell)$ handles
		literal $\ell$ in clause $\a_k$, and process $p(\a_k)$
		handles clause $\a_k$.}
	\label{fig:S2hardness}
\end{figure} 
 
\subsection{Locally live strategies}
\label{sec:np}

We now consider the case of \tlss\ with "locally live" "strategies". 
Such a strategy ensures that no process blocks when running alone.
Hence a process can only block if all its available transitions need to acquire a
lock, but all these locks are taken.
This restriction prevents a construction like the one used to obtain the lower bound of
Theorem~\ref{thm:2LSSSigma}.

In the last subsection we were guessing a "behavior@@two" of a strategy and then
checking in \coNP\ if the condition from Lemma~\ref{lem:patterncharacterization}
does not hold. 
Here we show that this check can be done in \PTIME.

The argument is  quite lengthy and requires a precise analysis of the
graph representing the guessed "behavior@@two".
We represent a "behavior@@two" as a lock graph $\Gp$, with vertices
corresponding to locks and edges to patterns. 
Then, thanks to local liveness, instead of
Lemma~\ref{lem:patterncharacterization} we get Lemma~\ref{lem:winningiffnods}
characterizing when a strategy is not winning by the existence of a subgraph of 
$\Gp$, called (full) deadlock scheme. 
The main body of the proof is a polynomial time algorithm to decide
the existence
of full deadlock schemes. 

As we are in a "locally live" framework, some "patterns" of 
local runs are impossible. 
We do not have patterns of the form $\Owns_p\to \es$ as a local run can block only
because it requires some locks that are taken.
This leaves  two possible types of patterns, $\set{t_1}\to\set{t_2}$
and $\es\to \Blocks_p$ for some non-empty $\Blocks_p \subseteq \set{t_1,t_2}$.
The set of patterns of the first type defines a graph: 
an edge labeled by $p$ from $t_1$ to $t_2$ represents the pattern
$\set{t_1}\to\set{t_2}$ of  process $p$.
Recall that this corresponds to a local run ending in a situation when
$p$ holds $t_1$ and all actions need to acquire $t_2$.
The second type of patterns will be incorporated later in form of
fragile processes.

We define \emph{weak} and \emph{strong} patterns and cycles, as well as \emph{solid} and \emph{fragile} processes.
We are from the point of view of the controller: we want to obtain strong patterns and solid processes, as they make deadlocks less likely.

\begin{defi}[Lock graph $\intro*\Gp$]\label{def:graph}
	For a "behavior@@two" $\PP=(\PP_p)_\pproc$, we define a labeled graph
	$G_\PP=\struct{T,E_\PP}$, called  \intro{lock graph}, whose
        nodes are locks and with two types of edges, weak or strong.
        Edges are labeled by processes.
	
	\AP There is a ""weak edge"" $t_1 \intro*\weakedge{p} t_2$ in $\Gp$ whenever there is a "weak
	pattern" $\set{t_1} \wpat \set{t_2}$ in $\PP_p$.
	There is a ""strong edge"" $t_1 \intro*\strongedge{p} t_2$ whenever there is a "strong pattern"
$\set{t_1} \spat \set{t_2}$ in $\PP_p$ \textbf{and} there is no "weak pattern" $\set{t_1}
\wpat \set{t_2}$ in $\PP_p$.
	We write  $t_1 \act{p} t_2$ when the type of the edge is
	irrelevant.	
	
	\AP A path (resp.~cycle) in $\Gp$ is called \intro{simple} if all its edges are labeled by different 
	processes.
	A cycle is \intro(cycle){weak} if it contains some "weak edge", and \intro(cycle){strong} 
	otherwise.
\end{defi}

The next definition provides some notions for patterns of
the form $\es \pat \Blocks_p$.

\begin{defi}[solid/fragile]
	For a "behavior@@two" $\PP=(\PP_p)_\pproc$, 
	a process $p$ is called \intro{solid in $\PP$} (or just
        solid, if $\PP$ is clear from the context) if there is no
        "pattern" of the form $\es \pat \Blocks_p$  in 
	$\PP_p$; otherwise it is called \intro{fragile in $\PP$} (or just
        fragile).
	
	\AP A process $p$ is called $Z$\intro{-fragile} if there is some "pattern" $\es\to B$
	in $\PP_p$ with $B\incl Z$.
	Note that a process is "fragile" if and only if it is
        $Z$"-fragile" for some $Z \subseteq T$. 
	
	\AP A \intro{solid edge} of $\Gp$ is one that is labeled by a "solid" process. A \intro{solid cycle} is one that only has "solid edges".
\end{defi}

What the previous definition says is that a
"solid" process needs to take a lock to be blocked, whereas a
"fragile" one can be
blocked without owning a lock.
So we take into account only "solid" processes in the deadlock schemes
defined next: 

\begin{defi}[$Z$-deadlock scheme]\label{def:z-ds}
	Consider a "behavior@@two" $\PP=(\PP_p)_\pproc$,  and  the associated "lock graph" $\Gp$.
	Let $Z \incl T$ be a set of locks.
	We set $\intro*\PZ$ as the set of those processes that can
        access only locks in 	$Z$. 
	
	\AP A \intro{$Z$-deadlock scheme for $\PP$} is a partial function 
        $\ds_{Z} : \PZ \stackrel{\cdot}{\to} E_{G_\PP}$ 
        such that all conditions below are satisfied:
	\begin{enumerate}
		\item For all $p \in \PZ$, if $\ds_{Z}(p)$ is defined
                  then it is a $p$-labeled edge of $\Gp$.
		
		\item If $p \in \PZ$ is "solid" then $\ds_{Z}(p)$ is defined.
		
		\item For all $t \in Z$ there exists a unique $p \in \PZ$ such that
		$\ds_{Z}(p)$ is an outgoing edge of $t$.
		
		\item The subgraph of $\Gp$ restricted to $\ds_{Z}(\PZ)$
		does not contain any "strong cycle".
              \end{enumerate}
              A ""deadlock scheme"" for $\PP$ is a "$Z$-deadlock
              scheme" for $\PP$ for some set $Z$.
\end{defi}

The idea underlying the previous definition is that a $Z$-"deadlock scheme"
witnesses a way to reach a configuration in which all locks of $Z$ are
taken, and all processes from $\PZ$  are blocked.
Each "solid" process from $\PZ$ is mapped to an edge telling which lock it
holds in the "deadlock" configuration and which one it needs in order to
advance.

For every lock in $Z$ there is a unique outgoing edge in $\ds_Z$,
corresponding to the process owning that lock.
Note that this implies that the subgraph induced by $\ds_Z$ is a union
of cycles, with some non-branching paths going into these cycles.

The fourth condition excluding "strong cycles" is required as to be able
to schedule the local runs according to the edges of the "deadlock
scheme" into a global run.

A $Z$-"deadlock scheme" is not a full witness for "deadlock" because
"fragile" processes are missing.
The next definition takes care of "fragile" processes.
Note that $\df_Z(p)$ is always undefined if $p \notin \Proc_Z$.

\begin{defi}[Full deadlock scheme]
	\label{def:sufficientdeadlockscheme}
	A \intro{full $Z$-deadlock scheme} for a "behavior@@two" $\PP$ is
	a $Z$-"deadlock scheme" $\df_Z$ for $\PP$ for some $Z \subseteq
        T$ such that for every process $p \in
	\Proc$ either $\df_Z(p)$ is defined, or $p$ is 
	$Z$"-fragile".
        A ""full deadlock scheme"" for $\PP$ is a "full $Z$-deadlock
        scheme" for $\PP$, for some set $Z \subseteq T$.
\end{defi}

We now prove an analogous result to
Lemma~\ref{lem:patterncharacterization}: a "strategy" is not "winning" if
and only if its lock graph admits a "full deadlock scheme".  
The existence of a "winning strategy" will be established by
non-deterministically guessing a "behavior@@two", verifying that
there exists a "strategy" respecting it, computing the corresponding
"lock graph" and then checking that it has no  "full deadlock
scheme".
The most involved step is the last one.

\begin{lem}
	\label{lem:winningiffnods}
	Consider a "locally live" "control strategy" $\sigma$ and $\PP^\s=(\PP^\s_p)_\pproc$ the "behavior@@two" of $\s$.
The 	"strategy" $\s$ is \textbf{not} "winning" if and only if there is a "full deadlock scheme"
	for $\PP^\s$.
\end{lem}

\begin{proof}
	Suppose that $\sigma$ is not "winning".
	Then by Lemma~\ref{lem:patterncharacterization}, there exist "patterns"
	$\Owns_p\pat \Blocks_p \in \PP^\s_p$, one  for each $p$, such that:	 
	
	\begin{itemize}
		\item $\bigcup_{p \in \Proc} \Blocks_p \subseteq \bigcup_{p \in \Proc} \Owns_p$,
		
		\item the sets $\Owns_p$ are pairwise disjoint,
		
		\item there exists a total order $\leq$ on $T$ such that for all $p$, if
		$\Owns_p \pat \Blocks_p$ is a "strong pattern" of the
                form $\set{t} \spat \Blocks_p$ then $t\leq
		t'$ where $t,t'$ are the two locks used by $p$.
	\end{itemize}
	
	Let $Z = \bigcup_{p \in \Proc} \Owns_p$.
	For every process $p \in \PZ$, define $\df_Z(p)$ as $t_1 \act{p}
        t_2$ if $\Owns_p=\set{t_1}$ and $\Blocks_p=\set{t_2}$.
Note that $\df_Z(p)$ is undefined if $\Owns_p=\es$.

Moreover, there are no other possible cases above, as $\s$ is "locally live" and thus
	$\Blocks_p$ cannot be empty. 
	
	We show that $\ds_Z$ is a "full $Z$-deadlock scheme" for $\PP^\s$ by checking
	the four conditions from Definition~\ref{def:z-ds}. The first condition holds by definition of $\ds_Z$.
	For the second condition let  $p \in \PZ$ and suppose $p$ is "solid".
Thus,  $\Owns_p$ is not empty, hence $\df(p)$ is defined.
	For the third condition let $t \in Z$.
As $Z$ is the disjoint union of the sets $\Owns_p$ there
	exists a unique $p\in \PZ$ such that $t \in \Owns_p$, so a
        unique edge $\df(p)$ outgoing from $t$. 
	For the last condition note that for all "strong edges"
        $t \strongedge{p} t'$ the pattern $\Owns_{p} \spat \Blocks_p$ must be strong as well, hence $t \leq t'$. 
As $\leq $ is a total order on locks, there cannot be any "strong cycle".
	
	Finally, suppose that $p \notin \PZ$ or $\df(p)$ is undefined.
In both cases $\Owns_p = \es$, thus $p$ is $\Blocks_p$"-fragile", and
hence $Z$"-fragile" as $\Blocks_p \subseteq Z$.  
	As a consequence, $\df$ is a "full $Z$-deadlock scheme" for $\PP^\s$. 
	
	For the other direction, suppose we have a "full $Z$-deadlock
        scheme" $\df$ for $\PP^\s$, for some set $Z$ of locks.
	For each process $p\in\Proc$ we can find a 
	"pattern" $\Owns_p \pat \Blocks_p \in \PP^\s_p$ as follows:
	\begin{itemize}
		\item  If $\df(p)$ is undefined or $p \notin \PZ$ then $p$ is $Z$"-fragile".
		In this case we choose $\Blocks_p \subseteq Z$ such
                that $\es \pat \Blocks_p \in \PP^\s_p$ and set $\Owns_p=\es$. 
		
		\item If $\df(p)= t_1 \act{p} t_2$
		then there exists a "pattern" $\set{t_1} \pat
                \set{t_2} \in \PP^\s_p$ with $\set{t_1,t_2}
                \subseteq Z$.
                We set $\Owns_p=\set{t_1}$ and $\Blocks_p=\set{t_2}$.
	\end{itemize}

	We check now the conditions of
        Lemma~\ref{lem:patterncharacterization}.
        
	As all locks of $Z$ have exactly one outgoing edge in
        $\df(\PZ)$, and as all $\Owns_p$ with $p \notin \PZ$ or
        $\df(p)$ undefined are empty, the sets $\Owns_p$ are pairwise
        disjoint.
        Moreover,  $\bigcup_{p\in \Proc}\Blocks_p \subseteq Z
        \subseteq \bigcup_{p\in \Proc} \Owns_p$. 
	
	It remains to check the last condition.
      Consider a "strong pattern"
      $\Owns_p \spat \Blocks_p$ with $\Owns_p=\set{t}$.
      Since $\s$ is
        "locally live" we have that $\Blocks_p=\set{t'}$, where $t,t'$
        are the two locks used by $p$.
        	As $\df(\PZ)$ does not contain any "strong cycle",
	we can pick a total order $\leq$ on locks such that for every "strong edge" $t_1
	\strongedge{p} t_2$ belonging to $\df(\PZ)$, we have $t_1 <
        t_2$.
In particular, $t<t'$, which finishes the proof.
\end{proof}

From now on we fix a "behavior@@two" $\PP$ and its lock graph $\Gp$.
We will show how to decide if there is a "full deadlock scheme" for $\PP$ in
\PTIME.
For this we need to be able to certify in \PTIME\ that there is no
$Z$-"deadlock scheme" for $\PP$, as in
Definition~\ref{def:z-ds}. 
Our approach will be to eliminate edges from $\Gp$ and try to construct a "$Z$-deadlock
scheme" on increasingly larger sets $Z$ of locks. 
We will show that this process either yields a set $Z$ that provides a "full
deadlock scheme" for $\PP$, or it fails, and in this case there is no "full
deadlock scheme" for $\PP$.

The next lemma provides a condition that allows to extend a
"$Z$-deadlock scheme" towards a "full deadlock scheme" for $\PP$, if
one exists.
This lemma is a basic ingredient to construct a "$Z$-deadlock scheme"
for increasingly larger sets $Z$ of locks. 

\begin{lem}
	\label{lem:soundpartialds}
	Let $Z \subseteq T$ be such that there is
	no "solid edge" from $Z$ to 
	$T\setminus Z$ in $\Gp$. 
	Suppose that $\ds_Z : \PZ \stackrel{\cdot}{\to} E$ is a $Z$-"deadlock scheme" for $\PP$.	
	If there exists some "full deadlock scheme" for $\PP$ then there is one 
	which is equal to $\ds_Z$ over $\PZ$.
\end{lem}

\begin{proof}
	Suppose that $\ds$ is a "full deadlock scheme" for $\PP$, so $\ds$ is a
	$B$-"deadlock scheme" for some $B\incl T$ such that for every $p\in \Proc$
	either $\ds(p)$ is defined or $p$ is $B$"-fragile" in $\PP$. 
	We construct a $(B\cup Z)$-"deadlock scheme" $\ds'$ which is
	equal to $\ds_Z$ over $\PZ$. 
	Then we show that $\ds'$ is a "full $(B\cup Z)$-deadlock scheme" for $\PP$.

	For every process $p \in \Proc$, set $\ds'(p)$ as:
		\begin{itemize}
		\item $\ds_Z(p)$ if $p \in \PZ$,

		\item $\ds(p)$ if $p \notin \PZ$ and $p$ does not
                  label any edge of $\Gp$ from $Z$ to $T \setminus Z$.
	\end{itemize} 
	First we check that $\ds'$ is a $(B\cup Z)$-"deadlock scheme".
        The first condition of a "deadlock scheme" is satisfied by
        construction. 
	Recall that we assume that there are no "solid edges" from $Z$
        to $T \setminus Z$.
        In particular, all processes $p$ such that $\ds'(p)$ is
        undefined are "fragile", so the second condition is satisfied
        as well.
	By  definition of $Z$-"deadlock scheme" there is a unique outgoing
        edge of $\ds_Z$ from every lock in $Z$. 
	A lock $t \in B \setminus Z$ has exactly one outgoing edge in
	$\ds(\Proc)$, and this edge in conserved in $\ds'$.
        Thus, the third condition is satisfied, too.
	Finally, there cannot be any "strong cycle" in $\ds'(\Proc)$ as there
	are none within $Z$, nor in $B \setminus Z$, and there are no edges
        from $Z$ to $T \setminus Z$ in $\ds'$.

	It remains to show that $\ds'$ is a "full  $(B\cup Z)$-deadlock scheme" for $\PP$.
	Let $p\in\Proc$ be an arbitrary process.
	We make a case distinction on the locks of $p$.
	The first case is when  both locks are in $Z$.
	If $p$ is "solid" then $\ds'(p)=\ds_Z(p)$ is defined.
	If $p$ is "fragile" then it is $Z$"-fragile", so also $(B\cup
        Z)$"-fragile". 
	The second case is when one lock is in $B\setminus Z$ and the
        other one in $B\cup 	Z$.
	If $p$ is "solid" then $\ds(p)$ must be defined because $\ds$ is a "full $B$-deadlock
	scheme".  
	We must have $\ds'(p)=\ds(p)$ as there are no "solid" edges from $Z$ to
	$T\setminus Z$.
	If $p$ is "fragile" then $p$ is $B$"-fragile", so also $(B\cup Z)$"-fragile".
	The final case is when one lock of $p$ is not in $B\cup Z$. 
	Since $\ds$ is a "full $B$-deadlock scheme", $p$ must be $B$"-fragile", so also $(B\cup Z)$"-fragile".
\end{proof}

Recall that we have fixed a "behavior@@two" $\PP$, and that
$\Gp=(T,E_\PP)$ is its "lock graph".  
We will describe in the following several polynomial-time algorithms operating on a subgraph $H
= (T, E_H)$ of $\Gp$, so $E_H \subseteq E_\PP$, and a set $Z$ of locks. 

We will say that $H$ has a \intro(H){deadlock scheme} 
to mean that  there is a "deadlock scheme" using only edges in $H$.
The notion of "full" is the same as for $\Gp$.

Each of the four algorithms introduced below will either eliminate
some edges from $H$ or extend 
$Z$, while maintaining the following three invariants: 

\begin{invariant}\label{Inv1}
	$\Gp$ has a "full deadlock scheme" for $\PP$ if and only if $H$ does.
\end{invariant}

\begin{invariant}\label{Inv2}
	There are no "solid edges" from $Z$ to $T \setminus Z$  in  $H$.
\end{invariant}

\begin{invariant}\label{Inv3}
	There exists a $Z$-"deadlock scheme" for $\PP$ in $\Gp$.
\end{invariant}

\label{page:inv}

Invariant 1 expresses that the edges we removed from $\Gp$ to get $H$
were not essential for finding a "full deadlock scheme for
$\PP$". Invariant 2, along with Lemma~\ref{lem:soundpartialds}, will
guarantee  that we can always extend a "$Z$-deadlock scheme"  to a
full one, if one exists. Invariant 3 maintains the existence of a
"$Z$-deadlock scheme", while $Z$ is growing.

Our algorithm will extend $Z$ as much as possible while maintaining
the three invariants. 
In the end we either obtain a "full $Z$-deadlock scheme" for $\PP$,
or a "$Z$-deadlock scheme" that is not full, but cannot be extended
anymore.
In the second case we show that no "full deadlock scheme" exists.

We may also at some point observe contradictions in the edges of $H$
that exclude the existence of any "full deadlock scheme" for $H$, in
which case we can conclude immediately thanks to Invariant 1.

We start with $H=\Gp$ and $Z = \emptyset$. All invariants are clearly
satisfied.

Our first two algorithms will analyze "solid" edges in $H$, since any
"$Z$-deadlock scheme" is defined over "solid" processes.
The first algorithm will possibly remove some edges, and the second
one will look for cycles and possibly enlarge $Z$.
The third algorithm will extend $Z$ by locks that can reach it.
Finally, the fourth algorithm will also add to $Z$ "weak cycles" that
are outside of $Z$.

\begin{defi}[Double and solo solid edges]
	Consider a "solid process" $p$. 
	We say that there is a \intro{double solid edge} $t_1\dedge{p}t_2$ in $H$ if both
	$t_1\edge{p}t_2$ and $t_1\ledge{p}t_2$ exist in $H$.  
	We say that $t_1\edge{p}t_2$ in $H$ is a \intro{solo solid edge} if there is no
	$t_1\ledge{p}t_2$ in $H$.
\end{defi}

Algorithm 1 below looks for a "solo solid edge" $t_1 \edge{p} t_2$
in $H$
and erases all other outgoing edges from $t_1$. It will be proven
correct exploiting the following property:

\begin{quotation}
 \emph{""($\star$)""}  If $t_1 \edge{p} t_2$ is a "solo solid edge"
in $H$, then any "deadlock scheme" $\ds_H$ in $H$ is such that
$\ds_H(p)=t_1 \edge{p} t_2$. 
\end{quotation}

The argument behind Property "($\star$)" is that a "deadlock scheme"
needs to map every "solid" process to one of the two possible edges of
the lock graph. So if there is only one (remaining) edge labeled
by $p$, this edge is needed and cannot be deleted.

We repeat this algorithm until no edges are removed.
If some call of the algorithm fails then there can be no "full deadlock
scheme@@H" for $\PP$ in $H$. 
Otherwise the resulting $H$ satisfies the property:
\begin{quotation}
	\emph{""(Trim)""} if a lock $t$ in $T\setminus Z$ has an outgoing "solo solid edge" then it has no other
	outgoing edges. 
\end{quotation}

\AP We denote $H$ as \intro{trimmed} if it satisfies property "(Trim)".

\begin{algorithm}
	\caption{Trimming the graph for one "solo solid edge"}\label{alg:trim}
	\begin{algorithmic}[1]
		\State Look for $t\in T\setminus Z$ with a "solo solid edge" $t \edge{p} t' \in
		E_H$ and some other outgoing edges.
		\State \textbf{if} there is no such edge then stop and report
                success.
		\For{every edge $t\edge{q} t'' \in E_H$ from $t$ with $q \neq p$}
		\If{$q$ is "solid" and $t\ledge{q} t'' \notin E_H$}
		\State \Return{``$H$ has no "deadlock scheme@@H" for $\PP$''}
		\Else
		\State delete $t\edge{q} t''$ from $E_H$
		\EndIf
		\EndFor
	\end{algorithmic}
\end{algorithm}

\begin{lem}
	\label{lem:trim}
	Suppose $(H,Z)$ satisfies \cref{Inv1,Inv2,Inv3}.
	If~\cref{alg:trim} fails then $H$ has no "full deadlock
        scheme@@H" for $\PP$. 
	After a successful execution of the algorithm all the invariants are still
	satisfied.
	If a successful execution does not remove any edge from $H$ then $H$ satisfies "(Trim)".
\end{lem}

\begin{proof}
	Let $H'$ be the graph after an execution of \cref{alg:trim}. 
	Observe that the algorithm does not change $Z$.
	If $H=H'$ then "(Trim)" holds.
	If the algorithm fails then there is a lock with two outgoing "solo solid edges". 
	In this case it is impossible to find a "full deadlock
        scheme@@H" in  $H$, because of Property "($\star$)" above and since a "deadlock scheme" has exactly
        one outgoing edge from each lock.
	
	Finally, if the algorithm succeeds but $H'$ is smaller than $H$, we must show
	that all the invariants on page~\pageref{page:inv} hold.
	Since the algorithm does not change $Z$, Invariants~\ref{Inv2} and~\ref{Inv3}
	continue to hold.
	For Invariant~\ref{Inv1}
	we use Property "($\star$)" and the fact that a "deadlock scheme" has
        a unique outgoing edge from each lock to conclude that any "full deadlock scheme@@H" in $H$ is also a "full deadlock scheme@@H" in $H'$. 
	For the other direction, a "full deadlock scheme@@H" in $H'$ is also "full@@H" in $H$,
	as $H'$ is a subgraph of $H$ with the same set of vertices. 
\end{proof}

Algorithm 2 below searches for "simple" cycles formed by "solid edges" and eventually adds them to
$Z$. 
If such a cycle is "weak@@cycle" then it can be added to $Z$.
If the cycle is "strong@@cycle", it may still be the case that
 its reversal is "weak@@cycle" (see $p_1, p_2, p_3$ in Figure~\ref{fig:alg1and2}).
More precisely it may be the case that for every "solid edge"
$t_i\act{p_i}t_{i+1}$  in the
cycle there is also a reverse edge $t_i\lact{p_i}t_{i+1}$ (which is "solid" by
definition, since $p_i$ is so). 
If the reversed cycle is also "strong@@cycle" then there is no $H$-"deadlock scheme".
Otherwise, it is "weak@@cycle" and it can be added to $Z$.
We  will show that the result still satisfies the invariants thanks to
property  "(Trim)".

\begin{algorithm}
	\caption{Find a "simple" "solid cycle".}\label{alg:solid-cycles}
\begin{algorithmic}[1]
	\State Look for a "simple cycle" of "solid edges" $t_1 \edge{p_1} t_2 \cdots \edge{p_k}
		t_{k+1} = t_1$  not intersecting $Z$ and with all $t_i$ distinct.
	\State \textbf{if} there is no such cycle, stop and report success.
	\If{all the edges on the cycle are "strong@@edge"}
		\If{for some $j$ there is no reverse edge $t_{j} \ledge{p_j} t_{j+1} \in E_H$}
			\State \Return{``$H$ has no "deadlock scheme"
                          for $\PP$''}
		\ElsIf{all edges $t_{j} \ledge{p_j} t_{j+1}$ are strong}
			\State \Return{``$H$ has no "deadlock scheme"
                          for $\PP$''}
		\EndIf
	\EndIf
	\State $Z \gets Z \cup \set{t_1, \ldots, t_k}$
		\State For every $t_i$ remove from $E_H$ all edges
                outgoing from $t_i$ except for $t_i \edge{p_i} t_{i+1}$.
                \label{line:remove-edges}
	\If{some "solid process" $p$ has no edge in $H$}\label{line:solo-test}
		\State \Return{``$H$ has no "deadlock scheme" for $\PP$''}
	\EndIf	
	\Repeat 
		\State Apply \cref{alg:trim}
	\Until{no more edges are removed from $H$}
\end{algorithmic}
\end{algorithm}	

Figure~\ref{fig:alg1and2} presents a case where \cref{alg:solid-cycles} detects an inconsistency in the "solid edges",
proving the non-existence of a "deadlock scheme".

	 	\begin{figure}
	\begin{tikzpicture}
		\tikzset{enclosed/.style={draw, circle, inner sep=0pt, minimum size=.15cm, fill=black}}
		
		\node[enclosed, label={left, yshift=.2cm: $t_1$}] (1) at (0.5,0) {};
		\node[enclosed, label={above, xshift=.2cm: $t_2$}] (2) at (2.5,1) {};
		\node[enclosed, label={above: $t_3$}] (3) at (0.5,2.2) {};
		\node[enclosed, label={below: $t_4$}] (4) at (4,1) {};
		\node[enclosed, label={below: $t_5$}] (5) at (6,0) {};
		\node[enclosed, label={left, yshift=.2cm: $t_6$}] (6) at (6,2.2) {};
		\node[enclosed, label={above: $t_7$}] (7) at (2.2,-1) {};
		\node[enclosed, label={above: $t_8$}] (8) at (4.4,-1) {};

		\node[draw,text width=6cm] at (10,1) {This graph does not have
			a "full deadlock scheme@@H" (all processes are
			"solid", "weak edges" are displayed in red). However a first
			execution of \cref{alg:trim} has no effect as all edges are double.};
		
		\path[->, -stealth, thick, bend right=20] (1) edge node[above] {$p_1$} (2);
		\path[->, -stealth, thick, bend right=20] (2) edge (1);
		\path[->, -stealth, thick, bend right=20, color = red] (2) edge node[below] {\color{black}$p_2$} (3);
		\path[->, -stealth, thick, bend right=20] (3) edge (2);
		\path[->, -stealth, thick, bend right=20] (3) edge node[right] {$p_3$} (1);
		\path[->, -stealth, thick, bend right=20] (1) edge (3);
		
		\path[->, -stealth, thick, bend right=20] (4) edge node[above] {$p_4$} (5);
		\path[->, -stealth, thick, bend right=20] (5) edge (4);
		\path[->, -stealth, thick, bend right=20, color = red] (5) edge node[left, xshift=1mm] {\color{black}$p_5$} (6);
		\path[->, -stealth, thick, bend right=20] (6) edge (5);
		\path[->, -stealth, thick, bend right=20] (6) edge node[below] {$p_6$} (4);
		\path[->, -stealth, thick, bend right=20] (4) edge (6);
		
		\path[->, -stealth, thick, bend right=20] (1) edge node[above] {$p_7$} (7);
		\path[->, -stealth, thick, bend right=20] (7) edge (1);
		\path[->, -stealth, thick, bend right=20] (7) edge node[above] {$p_8$} (8);
		\path[->, -stealth, thick, bend right=20] (8) edge (7);
		\path[->, -stealth, thick, bend right=20] (8) edge node[above] {$p_9$} (5);
		\path[->, -stealth, thick, bend right=20] (5) edge (8);
		
		\node[enclosed, label={left, yshift=.2cm: $t_1$}] (1) at (0.5,-4) {};
		\node[enclosed, label={above, xshift=.2cm: $t_2$}] (2) at (2.5,-3) {};
		\node[enclosed, label={above: $t_3$}] (3) at (0.5,-1.8) {};
		\node[enclosed, label={below: $t_4$}] (4) at (4,-3) {};
		\node[enclosed, label={below: $t_5$}] (5) at (6,-4) {};
		\node[enclosed, label={left, yshift=.2cm: $t_6$}] (6) at (6,-1.8) {};
		\node[enclosed, label={above: $t_7$}] (7) at (2.2,-5) {};
		\node[enclosed, label={above: $t_8$}] (8) at (4.4,-5) {};
		
		\node[draw,text width=6cm] at (10,-3) {We apply \cref{alg:solid-cycles}, which finds "solid cycles", erases all other edges going out of those cycles, and makes sure that those cycles are "weak@@cycle".};
		
		\path[->, -stealth, thick, bend right=20] (1) edge node[above] {$p_1$} (2);
		\path[->, -stealth, thick, bend right=20, color = red] (2) edge node[below] {\color{black}$p_2$} (3);
		\path[->, -stealth, thick, bend right=20] (3) edge node[right] {$p_3$} (1);
		\path[->, -stealth, bend right=20, opacity=.4] (2) edge (1);
		\path[->, -stealth, bend right=20, opacity=.4] (3) edge (2);
		\path[->, -stealth, bend right=20, opacity=.4] (1) edge (3);
		
		\path[->, -stealth, thick, bend right=20] (4) edge node[above] {$p_4$} (5);
		\path[->, -stealth, thick, bend right=20, color = red] (5) edge node[left, xshift=1mm] {\color{black}$p_5$} (6);
		\path[->, -stealth, thick, bend right=20] (6) edge node[below] {$p_6$} (4);
		\path[->, -stealth, bend right=20, opacity=.4] (5) edge (4);
		\path[->, -stealth, bend right=20, opacity=.4] (6) edge (5);
		\path[->, -stealth, bend right=20, opacity=.4] (4) edge (6);
		
		\path[->, -stealth, thick, bend right=20] (7) edge node[below] {$p_7$} (1);
		\path[->, -stealth, thick, bend right=20] (7) edge node[above] {$p_8$} (8);
		\path[->, -stealth, thick, bend right=20] (8) edge (7);
		\path[->, -stealth, thick, bend right=20] (8) edge node[above] {$p_9$} (5);
		
		\node[draw,text width=6cm] at (10,-7) {We now apply
			\cref{alg:trim} again. It detects that  $t_8 \edge{p_9} t_5$
			is a "solo solid edge" and it erases the other outgoing edge
			$t_8 \edge{p_8} t_7$. It then concludes that there is no
			"full deadlock scheme@@H" as $t_7$ has two outgoing  "solo solid edges".};		
		
		\node[enclosed, label={left, yshift=.2cm: $t_1$}] (1) at (0.5,-8) {};
		\node[enclosed, label={above, xshift=.2cm: $t_2$}] (2) at (2.5,-7) {};
		\node[enclosed, label={above: $t_3$}] (3) at (0.5,-5.8) {};
		\node[enclosed, label={below: $t_4$}] (4) at (4,-7) {};
		\node[enclosed, label={below: $t_5$}] (5) at (6,-8) {};
		\node[enclosed, label={left, yshift=.2cm: $t_6$}] (6) at (6,-5.8) {};
		\node[enclosed, label={above: $t_7$}] (7) at (2.2,-9) {};
		\node[enclosed, label={above: $t_8$}] (8) at (4.4,-9) {};
		\node (9) at (3,-8.5) {\color{orange} !};
		
		\path[->, -stealth, thick, bend right=20] (1) edge node[above] {$p_1$} (2);
		\path[->, -stealth, thick, bend right=20, color = red] (2) edge node[below] {\color{black}$p_2$} (3);
		\path[->, -stealth, thick, bend right=20] (3) edge node[right] {$p_3$} (1);
		\path[->, -stealth, bend right=20, opacity=.4] (2) edge (1);
		\path[->, -stealth, bend right=20, opacity=.4] (3) edge (2);
		\path[->, -stealth,  bend right=20, opacity=.4] (1) edge (3);
		
		\path[->, -stealth, thick, bend right=20] (4) edge node[above] {$p_4$} (5);
		\path[->, -stealth, thick, bend right=20, color = red] (5) edge node[left, xshift=1mm] {\color{black}$p_5$} (6);
		\path[->, -stealth, thick, bend right=20] (6) edge node[below] {$p_6$} (4);
		\path[->, -stealth, bend right=20, opacity=.4] (5) edge (4);
		\path[->, -stealth, bend right=20, opacity=.4] (6) edge (5);
		\path[->, -stealth, bend right=20, opacity=.4] (4) edge (6);
		
		\path[->, -stealth, thick, bend right=20] (7) edge node[below] {$p_7$} (1);
		\path[->, -stealth, thick, bend right=20] (7) edge node[above] {$p_8$} (8);
		\path[->, -stealth, thick, bend right=20] (8) edge node[above] {$p_9$} (5);
		
		\draw[thick, orange] (2.2,-9) circle (0.6);
	\end{tikzpicture}
	\caption{An example of application of \cref{alg:solid-cycles}.}
	\label{fig:alg1and2}
\end{figure}

\begin{lem}\label{lem:solid-cycles}
	Suppose $(H,Z)$ satisfies the \cref{Inv1,Inv2,Inv3} and $H$ is "trimmed".
	If the execution of~\cref{alg:solid-cycles} does not fail then the resulting
	$H$ and $Z$ also satisfy all invariants and "(Trim)".
	If the execution fails then $H$ has no "full deadlock scheme@@H" for $\PP$.
\end{lem}

\begin{proof}
	Suppose that the algorithm finds a "simple cycle" $t_1 \edge{p_1} t_2 \cdots \edge{p_k}
	t_{k+1} = t_1$ where all $p_i$ are "solid processes", 
	and all $t_i$ are distinct. 
	By definition of a "simple cycle", all $p_i$ are distinct as well.
	If there is a "full deadlock scheme@@H" for $H$ then it should assign either $t_{i}
	\edge{p_i} t_{i+1}$ or $t_{i} \ledge{p_i} t_{i+1}$ to $p_i$,
        because $p_i$ is "solid".

	We examine the cases when the algorithm fails.
	The first reason for failure may appear when all the edges on the cycle are "strong@@edge".
	If for some $j$ there is no reverse edge $t_{j} \ledge{p_j} t_{j+1}$ in $E_H$
	then a "full deadlock scheme@@H" for $H$, call it $\ds_H$, should assign the
	edge $t_{j} \edge{p_j} t_{j+1}$ to  $p_j$, because the edge is
        "solo solid" (recall Property "($\star$)"). 
	As a consequence, as $\ds_H$ has to give each $t_i$ at most
        one outgoing edge and all edges of the cycle are "solid", all
        the edges in the cycle should be in the image of $\ds_H$.  
	But this is forbidden by the last condition in the definition of "deadlock scheme",
        as the cycle is "strong@@cycle".
        
	When there are reverse edges $t_{i} \ledge{p_i} t_{i+1}
	\in E_H$ for all $i$, the algorithm fails if all of them are "strong@@edge".
	Indeed, there cannot exist any "full deadlock scheme@@H" for
        $H$ in this case either, because either the cycle or its
        reverse would need to be in the image of $\ds_H$, but both are
        "strong@@cycle".
        
	The last reason for failure is when there is some "solid process" $p$ and
	all the $p$-labeled edges were removed by the algorithm.
	These must be edges	of the form $t_i\act{p} t$ that are
        not on the cycle, for some $i=1,\dots,k$ and $p \not= p_i$.
	Those edges cannot belong to a "deadlock scheme" as it has to contain the cycle
	in one direction or the other and thus cannot contain other outgoing edges from
	that cycle.
	As a "deadlock scheme" cannot assign any edge to $p$, and $p$
        is "solid", there cannot exist any "full deadlock scheme@@H" in that case.

	If the algorithm does not fail then either the cycle $t_1 \edge{p_1} t_2 \cdots \edge{p_k}
	t_{k+1} = t_1$ is "weak@@cycle", or its reverse is. 
	Thanks to Lemma~\ref{lem:trim}, we only need to show that all three invariants hold after 
	line~\ref{line:remove-edges}.
	Let $(H',Z')$ be the values at that point.
	So $Z'=Z\cup\set{t_1, \ldots, t_k}$, and $H'$ is $H$ after removing edges in
	line~\ref{line:remove-edges}.
	We show now that all invariants on page~\pageref{page:inv}
        continue to hold. 
	
	For Invariant~\ref{Inv2}, we 
	observe that thanks to "(Trim)" for every lock in $Z'$ there is
        exactly one outgoing edge in 	$H'$.
	So there is no "solid edge" from $Z'$ to $T\setminus Z'$ as
        there was none from $Z$ to $T \setminus Z$ and the only
        "solid" edge of $H'$ outgoing from $t_i$ is $t_i \act{p_i} t_{i+1}$.
	
	For Invariant~\ref{Inv3}, we extend our $Z$-"deadlock scheme"
        to a  $Z'$-"deadlock scheme":
	we choose the cycle found by the algorithm or its reversal, depending on which one is "weak@@cycle". 
	For every $p_i$ we define $\df_{Z'}(p_i)$ to be the edge in the chosen cycle. 
	For all $p\in \Proc_{Z'}\setminus \PZ$ other
	than $p_1,\dots,p_k$, $\df_{Z'}(p)$ is undefined.
	We must show that such any such $p$ is "fragile".
        If both locks used by $p$ are among the $\set{t_1,\dots,t_k}$
        then $p$ must be "fragile" because Algorithm~2 does not fail at
        line~\ref{line:solo-test}. 
	The other case is where $p$ has one lock $t$ in $Z$, and the other, 
	$t'$ in $Z' \setminus Z$.
        If $p$ was "solid", then given that the algorithm does not
        fail at line~\ref{line:solo-test}, there must be some
        ("solid") edge labeled by $p$ in $H'$.
	However, by Invariant~\ref{Inv2} for $H$, an edge from $t$ to
	$t'$ cannot be "solid".
	Moreover, the edge from $t'$ to $t$ is removed at line 11.
	Therefore, $p$ is "fragile".

	For Invariant~\ref{Inv1} suppose that $H'$ has a "full deadlock scheme@@H" for $\PP$. 
	Then this is also a "full deadlock scheme@@H" for $H$ as well, as $H'$ is a
	subgraph of $H$ over the same set of locks.
	For the other direction consider a "full $B$-deadlock
        scheme@@H" $\ds_H$ in $H$, for some $B \subseteq T$. 
	By Lemma~\ref{lem:soundpartialds}, as we showed that
        Invariant~\ref{Inv2} is 
	maintained for $Z'$, we can assume that $Z' \subseteq B$ and $\ds_H$ is equal to 
	$\ds_{Z'}$ on $\Proc_{Z'}$.
	We define a "deadlock scheme" $\ds_{H'}$ for $H'$ as follows.
	If $\ds_H(p)$ is undefined then $\ds_{H'}(p)$ is undefined, too.
	Otherwise, if the source vertex of $\ds_H(p)$ is not in $Z'$ then
	$\ds_{H'}(p)=\ds_H(p)$.
	This edge is guaranteed to exist also in $H'$ because only some edges
        outgoing from the $t_i$ were removed.
	If the two locks of $p$ are both in $Z'$ let
	$\ds_{H'}(p)=\ds_H(p)=\ds_{Z'}(p)$.
	The remaining case is when $\ds_H(p)$ is an edge $t\edge{p}
        t'$ with $t\in Z'$ and $t'\notin Z'$.
        Note that $t,t'$ are both in $B$.
        If $t \in Z$ then this would contradict Condition~3  in the
        definition of "deadlock scheme", as $p \notin \Proc_Z$.
        Hence $t=t_i$ for some $i$, and $p$ is "fragile" as the only
        "solid" edge leaving $t_i$ in $H'$ is $t_i \act{p_i} t_{i+1}$.
	We let $\df_{H'}(p)$ be undefined in this case, and
        Condition~2 of "deadlock scheme" is satisfied.

        We establish now that $\ds_{H'}$ is a "full $B$-deadlock
        scheme@@H"  in $H'$.
        All we need to check is that any process $p$ with
        $\df_{H'}(p)$ undefined is $B$"-fragile".
        If $\df_{H}(p)$ was already undefined then we get that $p$ is
        $Z$"-fragile", so $B$"-fragile" as well.
        If $\df_{H}(p)$ was defined, but $\df_{H'}(p)$ is not, then since
         both locks of $p$ are in $B$ and $p$ is "fragile", we obtain
         that $p$ is $B$"-fragile".
         This concludes the proof.
\end{proof}

	\begin{lem}
		If \cref{alg:solid-cycles} succeeds but does not increase $Z$ nor decrease $H$ then $(H,Z)$ satisfies
		three properties:
		\begin{description}
			\item[\namedlabel{H1}{H1}] $H$ is "trimmed".
			
			\item[\namedlabel{H2}{H2}] $H$ has no "solid
                          cycle" that intersects $T\setminus Z$.
			\item[\namedlabel{H3}{H3}] Every "solid process" has an edge in $H$.
		\end{description}
	\end{lem}

	\begin{proof}
Property~\ref{H1} is satisfied because $H$ was not modified by \cref{alg:trim}.
		
\cref{Inv2} is satisfied by Lemma~\ref{lem:solid-cycles}, 
hence any "solid" simple cycle intersecting $T \setminus Z$ in $H$
must lie entirely in $T \setminus Z$.
Moreover, it is easy to see that  if there is some "solid" cycle in $H$ intersecting $T
\setminus Z$, then there exists also a "simple" one.
In this case \cref{alg:solid-cycles} would not have stopped in line 2,
and thus would have either failed or increased $Z$. There is therefore
no  "solid" cycle intersecting $T\setminus Z$ in $H$, hence property
\ref{H2} is also satisfied. 
		
Finally, Property~\ref{H3} is satisfied because \cref{alg:solid-cycles}
did not fail at line 12-13.  
 	\end{proof}

The next algorithms will not modify $H$ anymore and only increase $Z$.
Therefore, all three properties stated in the previous lemma
will continue to hold.

\begin{defi}
Given a pair $(H,Z)$ consisting of a subgraph $H$ of $\Gp$ and
a set $Z \subseteq T$ of locks we define the following
equivalence relation on $T$:  $t_1 \intro*\eqh t_2$  if $t_1,t_2 \in T \setminus Z$ and there is a path of "double
	solid edges" in $H$ between $t_1$ and $t_2$.
\end{defi}

Intuitively, once we have "trimmed" the graph and eliminated "simple cycles" of "solid edges" with
\cref{alg:solid-cycles},
the equivalence classes of $\eqh$ are ``trees'' made of "double solid
edges" (c.f.~Lemma~\ref{lem:unique-simple-path} below) with no
outgoing edges (except for singletons,
c.f.~Lemma~\ref{lem:no-outgoing-edges-from-eqh}). 

\begin{lem}\label{lem:no-outgoing-edges-from-eqh}
	If $H$ satisfies property \ref{H1} and $t_1\edge{p} t_2$ is in $H$ for a "solid process"
	$p$ then either the $\eqh$-equivalence class of $t_1$ is a singleton, or
	$t_1\ledge{p} t_2$ is in $H$, hence  $t_1 \eqh t_2$.
\end{lem}
\begin{proof}
	If the $\eqh$-equivalence class of $t_1$ is not a singleton
        then $t_1 \notin Z$ and there is a "double solid edge" from $t_1$. 
	By property \ref{H1}, there cannot be any outgoing "solo
        solid edge" from $t_1$, so $t_1\ledge{p} t_2$ must be in $H$,
        too. 
\end{proof}

\begin{lem}\label{lem:unique-simple-path}
  Suppose that $H$ satisfies properties \ref{H1} and \ref{H2}.
  Let $t_1, t_2 \in T \setminus Z$.
	If $t_1\eqh t_2$ then $H$ has a unique "simple path" of
	"solid edges"  from $t_1$ to $t_2$.
\end{lem}

\begin{proof}
  If $t_1 = t_2$ then any non-empty "simple path" of "solid edges"
  from $t_1$ to $t_2$ would contradict property \ref{H2}, hence the
  empty path is the only "simple path" from $t_1$ to $t_2$. If
  $t_1 \neq t_2$ then by definition of $\eqh$ there is a path of
  "double solid edges" from $t_1$ to $t_2$, hence there is such a "simple
  path" from $t_1$ to $t_2$.
	
  Suppose there exist two distinct "simple paths" from $t_1$ to $t_2$,
  then by Lemma~\ref{lem:no-outgoing-edges-from-eqh} all the locks on
  those paths are in the $\eqh$-equivalence class of $t_1$ and
  $t_2$.
  Hence as $t_1 \notin Z$, there is a cycle of "double solid
  edges" intersecting $H \setminus Z$, contradicting property
  \ref{H2}.
\end{proof}

Our third algorithm looks for an edge $t_1 \edge{p} t_2$ with $t_1 \notin Z$ and
$t_2 \in Z$, and adds the full $\eqh$-equivalence class $C$ of $t_1$  to $Z$.
This step will be shown correct by showing that  a $Z$-"deadlock scheme" 
extends to a $(Z \cup C)$-"deadlock scheme" by orienting edges in $C$
towards $Z$, as 
displayed in the example in Figure~\ref{fig:alg3}.

\begin{algorithm}
	\caption{Extending $Z$ by locks that can reach it.}\label{alg:reach}
	\begin{algorithmic}[1]
		\While{there exists $t_1 \edge{p} t_{2} \in E_H$ with $t_1 \notin Z$ and $t_2 \in Z$}
			\State $Z \gets Z \cup \set{t \in T \mid t \eqh t_1}$
		\EndWhile
	\end{algorithmic}
\end{algorithm}

\begin{lem}\label{lem:reach}
Suppose that $H$ satisfies properties \ref{H1}, \ref{H2} and \ref{H3},
and $(H,Z)$ satisfies \cref{Inv1,Inv2,Inv3}.
After executing \cref{alg:reach}, the resulting $H$ and $Z$ also
satisfy all these properties, and $H$ has no edges from $T\setminus Z$ to $Z$.
\end{lem}
\begin{proof}
Let $(H',Z')$ be the pair obtained by applying \cref{alg:reach}. 
Invariant~\ref{Inv1}, and properties \ref{H1} and \ref{H3} continue to  hold because $H'=H$.
Also property \ref{H2} continues to hold, because $Z \subseteq Z'$.

It remains to show that Invariant~\ref{Inv2} (no "solid" edges
from $Z$ to $T \setminus Z$) and Invariant~\ref{Inv3} (existence of
$Z$-"deadlock scheme") are preserved.
	 
Let $Z_{m+1}$ be the value	of $Z$ at the end of the $m$-th iteration. 
So $Z_{m+1}=Z_m\cup\set{t \in T \mid t \eqh t_1}$, where $t_1 \edge{p} t_{2}$
is the edge found in the guard of the while statement.
We verify that $Z_{m+1}$ satisfies Invariants~\ref{Inv2} and~\ref{Inv3} if
$Z_m$ does. 

	For Invariant~\ref{Inv2}, Lemma~\ref{lem:no-outgoing-edges-from-eqh} says that
	there are no outgoing "solid edges" from the $\eqh$-equivalence class of $t_1$,
	unless that class is a singleton.
	If it is a singleton, there are no outgoing solid edges from $t_1$ or $t_1
	\edge{p} t_2$ is the only outgoing edge of $t_1$. 
	In both cases, there are no solid edges from $Z_{m+1}$ to $T \setminus
	Z_{m+1}$ in $H$.  

	For Invariant~\ref{Inv3} we extend a $Z_m$-"deadlock scheme" $\ds_m$
        to a $Z_{m+1}$-"deadlock scheme" $\ds_{m+1}$.
	If the two locks of some process $q$ are both in $Z_m$ then $\df_{m+1}(q)=\df_m(q)$.
	We set $\df_{m+1}(p)$ to be the edge $t_1 \edge{p} t_{2}$
        found by the algorithm, so here $t_1 \in Z_{m+1} \setminus Z_m$ and
        $t_2 \in Z_m$.
	Let $C$ be the $\eqh$-equivalence class of $t_1$: $C=\set{t \in T \mid t \eqh t_1}$.
	By Lemma~\ref{lem:unique-simple-path} there is a unique "simple path" from $t\in
	C$ to $t_1$. 
	Let $t\edge{q}t'$ be the first edge on this path. 
	We set $\df_{m+1}(q)$ to be this edge.
        We let $\df_{m+1}(q)$ be undefined for all remaining processes $q$.
	
	We verify now that $\df_{m+1}$  is a $Z_{m+1}$-"deadlock scheme".
	By construction every lock in $C$ has a unique outgoing edge in
	$\df_{m+1}$, hence every lock in $Z_{m+1}$ does so. 
	It is also immediate that $\df_{m+1}(\Proc_{Z_{m+1}})$ does not contain a "strong cycle" as it
	would need to be already the case for $\df_m$ and $Z_m$: every
        lock of $C$ has exactly one outgoing edge in $\df_{m+1}$ and
        the path obtained by following those edges from an element of
        $C$ leads to $Z_m$.

        It remains to show that $\ds_{m+1}$ is defined for every
        "solid" process $q\in \Proc_{Z_{m+1}}$.
	Suppose by contradiction that $\df_{m+1}(q)$ is not defined by the procedure.
	If both locks of $q$ are in $Z_m$ then $\df_{m+1}(q)$ must be defined because
	$\df_m(q)$ is.
	If $q = p$, the process labeling the transition chosen by the algorithm,
	then $\df_{m+1}(q)$ is defined. 
	In the remaining case both locks of $q$, say $t,t'$, are in $C$.
	If neither $t\edge{q} t'$ is on the shortest path from $t$ to $t_1$, nor
	is $t\ledge{q} t'$ on the shortest path from $t'$ to $t_1$ then there must
	be a "solid" cycle in $C$. 
	But this is impossible as we assumed that there are no "solid cycles" intersecting $T\setminus Z$ (property \ref{H2}) and $Z \subseteq Z_m$.
	Hence $\df_{m+1}(q)$ is defined, and $\df_{m+1}$ is a $Z_{m+1}$-"deadlock scheme".
	
	All what is left to prove is that $H$ has no edges from $T\setminus Z$ to $Z$, which is immediate as otherwise \cref{alg:reach} would not have stopped.
\end{proof}

\begin{figure}
\begin{tikzpicture}[scale=0.85]
\tikzset{enclosed/.style={draw, circle, inner sep=0pt, minimum size=.15cm, fill=black}}

\node (A) at (9,3) {};
\node (B) at (8,3) {};
\node (C) at (11,4) {};
\node (D) at (11,0) {};

\pgfdeclarelayer{background}
\pgfdeclarelayer{foreground}
\pgfsetlayers{background,main,foreground}

\begin{pgfonlayer}{background}
\draw[color=white, fill=blue!5!white] (C) .. controls (A) and (B) .. (D);
\draw[dashed, thick] (C) .. controls (A) and (B) .. (D);
\end{pgfonlayer}

\begin{pgfonlayer}{foreground}
\node[enclosed, label={left, yshift=.2cm: $t_1$}] (1) at (1,2.2) {};
\node[enclosed, label={above, xshift=.2cm: $t_2$}] (2) at (3,2) {};
\node[enclosed, label={above: $t_3$}] (3) at (5,1) {};
\node[enclosed, label={right: $t_4$}] (4) at (6,-1) {};
\node[enclosed, label={above: $t_5$}] (5) at (7,2) {};
\node[enclosed, label={left: $t_6$}] (6) at (2.5,0) {};
\node[enclosed, label={right: $t_7$}] (7) at (10,2) {};
\node (Z) at (10,3) {\Large$Z$};

\path[->, -stealth, thick, bend right=30] (1) edge node[above=.05cm] {$p_1$} (2);
\path[->, -stealth, thick, bend right=30] (2) edge (1);
\path[->, -stealth, thick, bend right=30] (3) edge node[above=.1cm] {$p_4$} (5);
\path[->, -stealth, thick, bend right=30] (5) edge (3);
\path[->, -stealth, thick, bend right=30] (2) edge node[above=.1cm] {$p_2$} (3);
\path[->, -stealth, thick, bend right=30] (3) edge (2);
\path[->, -stealth, thick, bend right=30] (4) edge node[left=.05cm] {$p_3$} (3);
\path[->, -stealth, thick, bend right=30] (3) edge (4);
\path[->, -stealth, thick, bend right=30] (6) edge node[left=0cm] {$p_5$} (2);
\path[->, -stealth, thick, bend right=30] (2) edge (6);
\path[->, -stealth, thick, bend right=30, dotted] (5) edge (7);
\end{pgfonlayer}

\node (A) at (0,-2) {};
\node (B) at (-1,-2) {};
\node (C) at (1, -1) {};
\node (D) at (2,-6) {};

\begin{pgfonlayer}{background}
\draw[dashed, thick , fill=blue!5!white] (11,-2.5) .. controls (9,-2) and (3,-1.5) ..  (1, -2) .. controls (-1,-2.5) and (-2,-5.5) .. (2,-6) .. controls (6,-6.5) and (9,-7.5) .. (11,-7);
\end{pgfonlayer}

\begin{pgfonlayer}{foreground}
\node[enclosed, label={left, yshift=.2cm: $t_1$}] (1) at (1,-3.3) {};, opacity=.4
\node[enclosed, label={above, xshift=.2cm: $t_2$}] (2) at (3,-3.5) {};
\node[enclosed, label={above: $t_3$}] (3) at (5,-4.5) {};
\node[enclosed, label={right: $t_4$}] (4) at (6,-6.5) {};
\node[enclosed, label={above: $t_5$}] (5) at (7,-3.5) {};
\node[enclosed, label={left: $t_6$}] (6) at (2.5,-5.5) {};
\node[enclosed, label={right: $t_7$}] (7) at (10,-3.5) {};
\node (Z) at (3,-2.3) {\Large$Z$};

\path[->, -stealth, thick, bend right=30] (1) edge node[above=.05cm] {$p_1$} (2);
\path[->, -stealth, bend right=30, opacity=.4] (2) edge (1);
\path[->, -stealth, thick, bend right=30] (3) edge node[above=.1cm] {$p_4$} (5);
\path[->, -stealth, bend right=30, opacity=.4] (5) edge (3);
\path[->, -stealth, thick, bend right=30] (2) edge node[above=.1cm] {$p_2$} (3);
\path[->, -stealth, bend right=30, opacity=.4] (3) edge (2);
\path[->, -stealth, thick, bend right=30] (4) edge node[left=.05cm] {$p_3$} (3);
\path[->, -stealth, bend right=30, opacity=.4] (3) edge (4);
\path[->, -stealth, thick, bend right=30] (6) edge node[left=0cm] {$p_5$} (2);
\path[->, -stealth, bend right=30, opacity=.4] (2) edge (6);
\path[->, -stealth, thick, bend right=30, dotted] (5) edge (7);
\end{pgfonlayer}
\end{tikzpicture}
\caption{Illustration of \cref{alg:reach}. A deadlock in $Z$ can be extended to all these processes by orienting all edges/processes towards $Z$ (black arrows in the bottom part).}
\label{fig:alg3}
\end{figure}

Our last algorithm looks for "weak cycles" in the remaining graph. 
If it finds one, it adds to $Z$ not only all locks in the cycle but
also their $\eqh$-equivalence classes.

\begin{algorithm}
	\caption{Incorporating "weak cycles"}\label{alg:other-cycles}
	\begin{algorithmic}[1]
	\If{there exists a "weak cycle" $t_1 \edge{p_1} t_2 \cdots
          \edge{p_k} t_{k+1}=t_1$ with $t_k \edge{p_k} t_1$
          "weak@@edge" and  $t_i \notin Z$ for some $i$, }	
				\State $Z \gets Z \cup \bigcup_{i=1}^k \set{t \mid t\eqh t_i}$\label{line:alg4:adding}
	\EndIf
	\end{algorithmic}
\end{algorithm}

\begin{lem}
	\label{lem:other-cycles}
	Suppose $H$ satisfies \ref{H1}, \ref{H2} and \ref{H3}, $(H,Z)$ satisfies \cref{Inv1,Inv2,Inv3}, and
	moreover there are no edges from $T\setminus Z$ to $Z$.
	After an execution of~\cref{alg:other-cycles},  $H$ still
        satisfies \ref{H1}, \ref{H2} and \ref{H3}, and the resulting $(H,Z)$
	satisfies \cref{Inv1,Inv2,Inv3}.
\end{lem}
\begin{proof}
	Let $(H',Z')$ be the pair obtained after execution of \cref{alg:reach}. 
	Observe that $H'=H$, hence 	Invariant~\ref{Inv1} holds.
	For the same reason \ref{H1} and \ref{H3} are still
        satisfied. Furthermore, as $Z \subseteq Z'$, so is \ref{H2}.
	It remains to verify Invariants~\ref{Inv2} and~\ref{Inv3}.

	Consider the "weak cycle" found by the algorithm
	$t_1 \edge{p_1} t_2 \cdots \edge{p_k} t_{k+1}=t_1$, and note
        that $t_i \notin Z$ for all $i$ because $H$ has no edges from
        $T \setminus Z$ to $Z$.
	Let $Z'=Z \cup \bigcup_{i=1}^k \set{t \mid t\eqh t_i}$ as
	in line~\ref{line:alg4:adding}.
		
	Towards showing Invariant~\ref{Inv2} consider some $t_i$ on the  cycle.
	Lemma~\ref{lem:no-outgoing-edges-from-eqh} says that
	there are no outgoing solid edges from the $\eqh$-equivalence class of $t_i$,
	unless that class is a singleton.
	If this class is a singleton, there are no outgoing "solid edges" from $t_i$ or $t_i
	\edge{p} t_{i+1}$ is the only outgoing edge of $t_i$. 
	In both cases, there are no "solid edges" from $Z'$ to $T \setminus
	Z'$ in $H$.  
	
	For Invariant~\ref{Inv3} we extend a $Z$-"deadlock scheme"
        $\ds_{Z}$ to a $Z'$-"deadlock scheme" $\ds_{Z'}$.
	For every lock $t\in Z'\setminus Z$ let
	$j$ be the biggest index among 	$1,\dots,k$ with $t\eqh t_j$. 
	If $t=t_j$ then set $\df_{Z'}(p_j)$ to be the edge $t_j\edge{p_j}t_{j+1}$.
	Otherwise, take the unique path from $t$ to $t_j$ in the $\eqh$-equivalence class of the
	two locks; this is possible thanks to Lemma~\ref{lem:unique-simple-path}.
	If the path starts with $t\edge{p}t'$ then set $\df_{Z'}(p)$
        to this edge.
        For all remaining processes $p$ we let $\df_{Z'}(p)$ be undefined.
		
	We show now that $\ds_{Z'}$ is a $Z'$-"deadlock scheme".
	First, note that there is an outgoing $\ds_{Z'}$ edge from every lock in $Z'$ by
	definition, and this edge is unique.
	
	Next we show that $\ds_{Z'}(p)$ is defined for every "solid process" $p$.
	This is clear if the two locks, $t$ and $t'$, of $p$ are in $Z$. 
	If both locks are in $Z' \setminus Z$ then either $t\eqh t'$ or there is a "solo solid edge"
	between the two, say $t\edge{p}t'$.
	In the latter case this is the only edge from $t$, as $H$ is "trimmed". 
	As the $\eqh$-equivalence class of $t$ is then a singleton, this must be an edge on the
	cycle and $\ds_{Z'}(p)$ is defined to be this edge.
	Suppose now that $t\eqh t'$ and $\ds_{Z'}(p)$ is not defined. Let $j$ be the biggest index among
	$1,\dots,k$ such that $t\eqh t_j$.
	If neither $t\edge{p} t'$ is on the shortest path from $t$ to $t_j$, nor
	$t\ledge{p} t'$ is on the shortest path from $t'$ to $t_j$ then there must
	be a cycle in $C$. 
	But this is impossible as we assumed that there are no "solid cycles" intersecting $T\setminus Z$ in $H$ (Property~\ref{H2}).
	The last case is when one of the locks of $p$ is in $Z$ and
        the other in
	$Z'\setminus Z$. 
	There is no "solid edge" leaving $Z$ by Invariant~\ref{Inv2}.
	There is no "solid edge" entering $Z$ by the assumption of the lemma.
	So $p$ is a "solid process" labeling no edge in $H$ which contradicts \ref{H3}.

	The last thing to verify for a $Z'$-"deadlock scheme" is that there is no "strong cycle" in $\ds_{Z'}(\Proc_{Z'})$.
	We first check that $\ds_{Z'}(\Proc_{Z'})$ contains $t_k\edge{p_k} t_1$. 
	This is because $t_k$ is necessarily the last one from its $\eqh$-equivalence class. 
	A "strong cycle" cannot contain locks from $Z$ as there are no edges entering
	$Z$ in $\ds_{Z'}$. 
	Let $t'_1\edge{p'_1}t'_2\dots\edge{p'_l}t'_{l+1}=t'_1$ be a hypothetical "strong
	cycle" in $Z'\setminus Z$ using transitions in $\df_{Z'}$.

	Consider $x$ such that $t'_1\eqh t'_j$ for $j\leq x$ but $t'_1\neqh
	t'_{x+1}$. 
	By definition of $\df_{Z'}$ we must have that $t'_{x}$ is the last lock among
	$t_1,\dots,t_k$ equivalent to $t'_1$, say it is $t_{y}$.
	As each lock only has one outgoing transition in the image of $\df_{Z'}$, and as there is a path from $t_y$ to $t_k$ in that image, $t_k$ must be on that cycle, and thus the "weak edge" $t_k \edge{p_k} t_1$ as well, contradicting the assumption that this is a "strong cycle". 
\end{proof}

We conclude with our complete algorithm (if one of our sub-algorithms
returns a result, then the entire algorithm stops):

\begin{algorithm}
	\caption{Algorithm to check the existence of a "full deadlock
          scheme" for $\PP^\s$}\label{alg:completealgo}
	\begin{algorithmic}[1]
		\State $H \gets G_{\PP^\s}$
		\State $Z \gets \emptyset$
		\Repeat 
			\State apply \cref{alg:trim}
		\Until{no more edges are removed from $H$}
		\Repeat \Comment{$H$ is "trimmed"}
			\State apply \cref{alg:solid-cycles}
		\Until{no more edges are removed from $H$}
		\Repeat \Comment{from now on $H$ satisfies properties \ref{H1}, \ref{H2} and \ref{H3}}
			\State apply \cref{alg:reach}\Comment{no edges from $T\setminus Z$ to $Z$}
			\State apply \cref{alg:other-cycles}
		\Until{$Z$ does not increase anymore}
		\If{there is a process $p\notin \PZ$ that is not $Z$"-fragile"}\label{line:last-test}
			\State \Return{``$\sigma$ is "winning"''}\label{line:last-fail}
		\Else
			\State \Return{``$\sigma$ is not "winning"''}
		\EndIf
	\end{algorithmic}
\end{algorithm}

\begin{lem}
	\label{lem:correctcompletealgo}
	\cref{alg:completealgo} terminates in polynomial time, and
        return ``$\s$ winning'' if and only
	if no "full deadlock scheme" for $\PP^\s$ exists.
\end{lem}
\begin{proof}
  Let $\PP=\PP^\s$.
	Suppose that the algorithm fails before reaching the end.
	If this happens before line~\ref{line:last-test} then using 
	Lemma~\ref{lem:trim}, Lemma~\ref{lem:solid-cycles} and Invariant~\ref{Inv1} we obtain that $\Gp$ does not have any "full
	deadlock scheme" for $\PP$.
	If the algorithm fails at line~\ref{line:last-fail} then there
        exists a process 
	$p \notin \Proc_Z$ that is not $Z$"-fragile".
	Suppose towards a contradiction that $H$ has a "full deadlock
        scheme@@H" $\ds_H$, and assume that $\ds_H$ is a $B$-"deadlock
        scheme" for some $B \subseteq T$.
        By Lemma~\ref{lem:soundpartialds} we can assume that $Z
        \subseteq B$ and $\ds_H$ is equal to  	$\ds_{Z}$ on
        $\Proc_{Z}$.
        Observe that one of the locks of $p$ must belong to $B$, by
        the definition of "full deadlock scheme".
	So there must exist some outgoing edge from a lock of $p$, say $t$, in
        $\ds_H(\Proc_B)$.
        Since $\ds_Z(p)$ was undefined, and $\ds_H,\ds_Z$ coincide on
        $\Proc_Z$, the lock $t$ cannot belong to $Z$.
        
	By definition, every lock with an incoming edge in $\ds_H$ must also have an
	outgoing edge in $\ds_H$. 
	Following these edges we get a cycle in the image of $\ds_H$. 
	During the last iteration of lines 9-12, $Z$ did not increase, hence by Lemma~\ref{lem:reach} there are no edges from $T \setminus Z$ to $Z$.
	This cycle is therefore outside $Z$. It has to be a "weak cycle" by definition of
	a "deadlock scheme", which is a contradiction because~\cref{alg:other-cycles}
	did not increase $Z$ in its last application. 

	If the algorithm reaches the end then by \cref{Inv3} we know
        that a $Z$-"deadlock scheme" for $\PP$, say $\ds_Z$, exists.
	We construct a "full deadlock scheme" $(Z,\ds)$ in
        $\Gp$ as follows.
	First, we set $\ds(p)=\ds_Z(p)$ for all $p\in \PZ$.
	For $p\notin\PZ$, as the algorithm did not fail at
	lines 13-14, $p$ is $Z$"-fragile", and we let $\ds(p)$ undefined.
	
	Finally, this algorithm runs in polynomial time as all steps of all loops in the algorithms either decrease $H$ or increase $Z$. Furthermore, the condition on line 13 is easily verifiable by checking in the "behavior@@two" $(\PP_p^\s)_{p \in \Proc}$ of $\sigma$ whether there exists $\emptyset \pat B \in \PP_p$ such that $B \subseteq Z$.
\end{proof}

\twolssNP*
	
	\begin{proof}
		We start by guessing a "behavior@@two"
		$\PP=(\PP_p)_\pproc$ such that no $\PP_p$ contains any "pattern" of the form $\Owns_{p} \pat \emptyset$.
		Its  size is polynomial in the number of processes. 
		We can check in polynomial time  that there	exists
                a "strategy" respecting the
		"patterns" in $\PP$ by Lemma~\ref{lem:polycheckpatterns}.  
		Note that if there is one, by the requirement we made on $\PP$ it must be "locally live".

		If yes, then we  compute the "lock graph" $\Gp$ for $\PP$
		and check if there is a "full deadlock scheme"
                for $\PP$ in polynomial time by Lemma~\ref{lem:correctcompletealgo}.
		
		By Lemma~\ref{lem:winningiffnods}, this algorithm answers yes if and 
		only if the system has a  "locally live" "strategy"
                that avoids "deadlocks".
                More formally, if there exists a "winning", "locally
                live" "strategy" $\s$ then it suffices to guess the
                "behavior@@two" $\PP^\s$ and the Algorithm
                \ref{alg:completealgo} will return ``$\s$ is winning''.
                For the other direction, assume that we guess a
                "behavior@@two" $\PP$ such that no "pattern"
                $\Owns_p \pat \es$ belongs to $\PP_p$, for any $p$.
               Assume also that 
                Lemma~\ref{lem:polycheckpatterns} tells that there exists some "strategy" $\s$ such that
                $\PP^{\s}_p\subseteq \PP_p$ for every $p$.
                If Algorithm
                \ref{alg:completealgo} returns ``$\s$ is winning'' then
                by Lemma~\ref{lem:correctcompletealgo} we know that there
                is no "full deadlock scheme" for $\PP$, so there
                cannot be any for $\PP^\s$ either.
	\end{proof}


\subsection{Exclusive \tlss}
\label{sec:exclusive}

In this section we study "exclusive@@sys" \tlss. 
These systems enjoy enough properties to be able to decide 
the deadlock avoidance control problem with "locally live" strategies 
in polynomial time.

Recall that in an "exclusive@@sys" system, if a state has an outgoing $\get{t}$ transition,
then all its outgoing transitions are labeled with $\get{t}$. 
So in such a state the process is necessarily blocked until $t$
becomes available.

Behaviors of exclusive systems have some special properties, see
Lemma~\ref{lem:guardededges}.
First, whenever a "strategy" has a "strong pattern" 
$\set{t_1} \spat \set{t_2}$ for a process $p$, 
it also allows a reverse "weak pattern" $\set{t_2} \wpat \set{t_1}$. 
This will imply that the "strong cycle" condition in our "deadlock
schemes" can be satisfied automatically, because any "strong cycle" can be replaced by a reverse cycle of "weak edges".
Second, all processes that have some pattern are "fragile".

The above observations simplify the analysis of the "lock graph". 
First, we get a much simplified \NP\ argument
(Proposition~\ref{prop:sigma-winning}). 
This allows us to eliminate  guessing and obtain a \PTIME\
algorithm (Proposition~\ref{prop:winiffnoBTu}).

Throughout this section we fix an "exclusive@@sys" \tlss\ $\tss$, and consider only
"locally live" "strategies".
As we have seen in the previous section, whether or not a "strategy" $\s$ is
"winning" is determined by its "behavior@@two" $\PP^\s$. 
More precisely, $\s$ is "winning" if and only if $\PP^\s$ does not admit a "full deadlock
scheme", see Lemma~\ref{lem:winningiffnods}.
In this section we show that the latter property can be decided in \PTIME\ for
"exclusive@@sys" \tlss.

\begin{defi}
  We call a "behavior@@two" $\PP$ ""exclusive@@beh"" if
	\begin{itemize}
		\item  	whenever $\PP_p$ contains $\set{t_1}\spat\set{t_2}$ 
		then it contains either  $\set{t_1}\wpat\set{t_2}$ or $\set{t_2}\wpat\set{t_1}$, and 
		\item whenever  $\PP_p$ contains $\set{t_1}\pat\set{t_2}$ then $p$ is $\set{t_1,t_2}$"-fragile" in $\PP_p$.
	\end{itemize}
\end{defi}

\begin{rem}
	\label{rem:cycles-exclusive}
	Assume that we have a "strong cycle" $t_1 \strongedge{p_1} t_2
        \strongedge{p_2} \cdots \strongedge{p_k} t_{k+1} = t_1$ in the
        "lock graph" $\Gp$ of an "exclusive@@beh" "behavior@@two"
        $\PP$.
        Then by definition of "strong edges", every $p_i$ has "pattern" $\set{t_i} \spat \set{t_{i+1}}$ but not
        $\set{t_i} \wpat \set{t_{i+1}}$. Then by definition of "exclusive@@beh" "behavior@@two", they all have a pattern $\set{t_{i+1}} \wpat \set{t_i}$, hence there is a weak cycle $t_1=t_{k+1} \weakedge{p_k} \cdots  \weakedge{p_2} t_2 \weakedge{p_1} t_1$.
\end{rem}

\begin{lem}
	\label{lem:guardededges}
	\label{lem:EGallfragile}
	If $\s$ is a "locally live" strategy in an "exclusive@@sys" \tlss\, and 
	$\PP^\s=(\PP_p)_\pproc$ is its "behavior@@two", then $\PP^\s$ is
	"exclusive@@beh". 
\end{lem}

\begin{proof}
	Consider the first statement.
	Suppose there is a "strong pattern" $\set{t_1} \spat
        \set{t_2}$ in $\PP_p$, then there exists a local $\s$-run of $p$  of the form
        \[\run = \run_1 (a_1,
          \get{t_1}) \run_2 (a_1, \rel{t_2})\run_3 (a_3, \get{t_2})\,,\]
        with no $\rel{t_1}$ in $\run_2$ or $\run_3$. Hence, there is a point
	in the run at which $p$ holds both locks. 	
        In consequence, there must be two acquires in $\run$ with no release
	in-between. As the process is "exclusive@@proc", the state
        from which the second lock  is taken  is such that
        all outgoing transitions take this lock.
        Thus there is a "weak pattern" from the first lock taken to the
	second one. 
	For the second statement, suppose that $\set{t_1} \pat \set{t_2}$ is
        in $\PP_p$.
	Thus there exists a
	$\s$-run of $p$ making it acquire $t_1$, so there must be some
        $\s$-run $u$ of the form $u=u_1
	(a, \get{t_i}) u_2$ for some $i \in \set{1,2}$ and $u_1$ containing only
	local actions. As $\tss$ is "exclusive@@sys", this means that $u_1$ makes $p$ reach a
	configuration where all outgoing transitions acquire $t_i$, and $p$ owns
	no lock. Since $\s$ is "locally live" this means that $u_1$
        has the pattern $\es \pat \set{t_i}$, so $p$ is
        $\set{t_i}$"-fragile", hence also
	$\set{t_1,t_2}$"-fragile". 
\end{proof}

Now consider a decomposition of the lock graph $\Gp$ of a given
"behavior@@two" $\PP$ into strongly connected
components (SCC for short).
An SCC of $\Gp$ is a \intro{direct deadlock} if it contains a "simple cycle". 
A \intro{deadlock SCC} is a "direct deadlock SCC" or an SCC from which
a "direct deadlock SCC" can be reached.

Figure~\ref{fig:semideadlock} illustrates these concepts: 
the left graph has a "direct deadlock SCC" formed by the three locks at the top.
The two remaining locks form a "deadlock SCC", because there is a path
towards a direct deadlock SCC.
Observe that the two locks at the bottom are not a "direct deadlock SCC" because
there is only one process between the two locks and thus no "simple cycle"
within the SCC.  

\AP Let $\intro*\BTPP \subseteq T$ be the set of all locks appearing in some
"deadlock SCC" of $\Gp$.

\begin{prop}\label{prop:sigma-winning}
	Consider an  "exclusive@@beh" "behavior@@two" $\PP$. 
	There is a "full deadlock scheme" for $\PP$ if and only
        if  all processes in $\Proc$ are $\BTPP$"-fragile".
\end{prop}

The proof of Proposition~\ref{prop:sigma-winning} follows from the
lemmas below.
In all these lemmas we assume that $\PP$ is an  "exclusive@@beh"
"behavior@@two". 

\begin{lem}
	If all processes are $\BTPP$"-fragile" then there is a "full deadlock
	scheme" for $\PP$.
\end{lem}

\begin{proof}
	We construct a "deadlock scheme" for $\Gp$ as follows.
	For all "direct deadlock" SCCs we select a simple cycle inside.
	By Remark~\ref{rem:cycles-exclusive} and
        Lemma~\ref{lem:guardededges}, this cycle is "weak@@cycle" or
        has a reverse "weak cycle". We select a direction in which the
        cycle is weak, and for all $t$ in the cycle we set $p_t$ as
        the process labeling the edge outgoing from $t$ in the cycle.
	
	While there is some edge $t \edge{p} t'$ in $\Gp$ such that
        $p_t$ is not yet defined but $p_{t'}$ is, we set $p_t = p$. 
	When this ends we have defined $p_t$ for all locks $t \in \BTPP$. 
         We define $\ds$ as
	$\ds(p_t) = t \edge{p_t} t'$ for all $t \in \BTPP$.
        For all other $p \in \Proc$, $\ds(p)$ is undefined. 

        We show now that $\ds$ is a "full deadlock scheme" for $\PP$.
	Clearly, for all $p \in \Proc$, if $\ds(p)$ is defined
	then it is a $p$-labeled edge of $\Gp$. 
	Furthermore, as all processes are $\BTPP$"-fragile", in
        particular all processes $p$ 
	with $\ds(p)$ undefined are $\BTPP$"-fragile".
	It is also clear that all locks of $\BTPP$ have a unique outgoing edge.
	Finally, by construction we ensured that $\ds$ has no "strong cycle".
\end{proof}

\begin{lem}\label{lem:BT-biggest}
	Any "full $Z$-deadlock scheme" for $\PP$ is such that
	$Z \subseteq \BTPP$.
\end{lem}

\begin{proof}
  Suppose that $\ds_Z$ "full $Z$-deadlock scheme" for $\PP$.
  If there is some $t \in Z \setminus \BTPP$, 
	then there exists $p$ such that $\ds_Z(p) = t \edge{p} t'$, for
        some $t' \in Z$. 
	By definition of $\BTPP$, there are no edges from 
	$T \setminus \BTPP$ to $\BTPP$
	in $\Gp$, hence $t' \in Z\setminus \BTPP$. 
	By iterating this process we eventually find a "simple cycle" 
	in $\Gp$ outside of $\BTPP$, 
	which is impossible, as this cycle should be part of a 
	"direct deadlock SCC", and thus included in $\BTPP$.
\end{proof}

\begin{lem}
	If some process $p$ is not $\BTPP$"-fragile" then there is no "full
	deadlock scheme" for \nolinebreak $\PP$.
\end{lem}

\begin{proof}
	Suppose there exists $p$ that is not $\BTPP$"-fragile". 
	Towards a contradiction assume that there is some
	"full $Z$-deadlock scheme" $\ds_Z$ for $\PP$, for some $Z$.
	
	As $p$ is not $\BTPP$"-fragile", then by
	Lemma~\ref{lem:BT-biggest} it is not $Z$"-fragile" either.
	Hence, $\ds(p)$ is an edge $t_1 \edge{p} t_2$ in $\Gp$, with $t_1, t_2 \in Z$, and thus $t_1, t_2 \in \BTPP$. 	
	By Lemma~\ref{lem:EGallfragile}, $p$ is $\set{t_1,t_2}$"-fragile", 
	and therefore also $\BTPP$"-fragile", yielding a contradiction.
\end{proof}
This concludes the proof of Proposition~\ref{prop:sigma-winning}.
\medskip

\paragraph{Deciding the existence of a winning strategy for
  "exclusive@@sys"  systems.}

Until now we have assumed that we were given a "strategy" $\s$, and we
described how to check if it is "winning", by constructing $\BTPP$ and
checking that every process is $\BTPP$"-fragile", where $\PP=\PP^\s$.
Now we want to decide if there is any "winning" "strategy". 
We use the insights above, but we cannot simply enumerate all
"exclusive@@beh" "behaviors@@two", as they are exponentially many. 

We say below that a "strategy" $\s$ for $p$ ""induces"" the edge $t_1
\act{p} t_2$ if $\s_p$ admits the "pattern" $\set{t_1} \pat \set{t_2}$.

For every  process $p$ and every set of edges between two locks of $p$ we
can check if there is a "strategy" for $p$ inducing only edges within
this set, as a consequence of Lemma~\ref{lem:polycheckpatterns}.

We call an edge $t_1 \edge{p} t_2$ \intro{unavoidable} if it is
"induced" by every "locally live" "strategy" of $p$.

Let $\intro*\Gu$ be the graph whose nodes are locks and whose edges
are the "unavoidable" edges.
We will compute a set of locks $\BTu$ in a similar way as $\BTPP$ in the previous section except that
we will use slightly more general basic SCCs of $\Gu$.

\AP A \intro{direct semi-deadlock SCC} of $\Gu$ is either a "direct deadlock SCC", or an SCC containing only double edges, with two locks $t_1$ and $t_2$ such that
for some process $p$ using $t_1$ and $t_2$, every strategy for $p$ induces at
least one edge between $t_1$ and $t_2$.
Then a \intro{semi-deadlock SCC} of $\Gu$ is either a "direct semi-deadlock
SCC" or an SCC from which a "direct semi-deadlock SCC" can be reached.

Let $\intro*\BTu$ be the set of locks appearing in "semi-deadlock SCCs".

\begin{figure}
	\begin{tikzpicture}
		
		\tikzset{enclosed/.style={draw, circle, inner sep=2pt, minimum size=.5cm, fill=black}}
		
		\node[enclosed, state,fill=black,minimum size = .5mm] at (5,1.3) (L) {};
		\node[enclosed,state,fill=black,minimum size = .5mm] at (4,-2) (H) {};
		\node[enclosed,state,fill=black,minimum size = .5mm] at (6,-2) (I) {};
		\node[enclosed,state,fill=black,minimum size = .5mm] at (4,0) (J) {};
		\node[enclosed,state,fill=black,minimum size = .5mm] at (6,0) (K) {};
		
		\node[enclosed,state,fill=black,minimum size = .5mm] at (8,-2) (D) {};
		\node[enclosed,state,fill=black,minimum size = .5mm] at (10,-2) (E) {};
		\node[enclosed,state,fill=black,minimum size = .5mm] at (8,0) (F) {};
		\node[enclosed,state,fill=black,minimum size = .5mm] at (10,0) (G) {};
		
		\path[->, thick, >= angle 60, bend left =20] (D) edge node[below] {$p_3$} (E);
		\path[->, thick, >= angle 60, bend left =20] (E) edge (D);
		\path[->, thick, >= angle 60, bend left =20] (F) edge node[below] {$p_1$} (G);
		\path[->, thick, >= angle 60, bend left =20] (G) edge (F);
		\path[->, thick, >= angle 60, bend left =20] (D) edge node[right=-3pt] {$p_4$} (F);
		\path[->, thick, >= angle 60, bend left =20] (F) edge (D);
		\path[->, >= angle 60, bend left =20, color=blue] (E) edge node[right=-2pt] {\color{black}$p_2$} (G);
		\path[->, >= angle 60, bend left =20, color=blue] (G) edge (E);
		
		\path[->, thick, >= angle 60, bend left =20] (H) edge node[below] {$p_3$} (I);
		\path[->, thick, >= angle 60, bend left =20] (I) edge (H);
		\path[->, thick, >= angle 60, bend right =20] (J) edge node[above] {$p_1$} (K);
		\path[->, thick, >= angle 60, bend right =20] (K) edge node[below left=-2pt] {$p_6$} (L);
		\path[->, thick, >= angle 60, bend right =20] (L) edge node[below right=-2pt] {$p_5$} (J);
		\path[->, >= angle 60, bend left =20] (H) edge node[right=-3pt] {$p_4$} (J);
		\path[->, >= angle 60, bend left =20] (I) edge node[right=-2pt] {$p_2$} (K);
	\end{tikzpicture}
	\caption{"Semi-deadlock SCCs": the blue double edge is not in $\Gu$, but every strategy of the system will induce one of those two edges.}
	\label{fig:semideadlock}
\end{figure}

In the  graph on the right of Figure~\ref{fig:semideadlock} the black edges are in
$\Gu$, the double blue ones are not, but indicate that every
"strategy" $\s$ of
process $p_2$ induces one of the two blue edges in $G_{\PP^\s}$.
The four locks do not form a "direct deadlock SCC" of $\Gu$ as there is no "simple
cycle" (without the blue edges, which do not belong to $\Gu$).
However they do form a "direct semi-deadlock" SCC, as $p_2$ will induce an edge
no matter its strategy, forming a "simple cycle".

\begin{prop}
	\label{prop:winiffnoBTu}
	There is a "winning" "strategy" for deadlock avoidance iff there exists some process $p$ and a "local
	strategy" $\sigma_p$ that prevents $p$ from acquiring a lock from $\BTu$.
\end{prop}

\begin{proof}
One direction is easy: if all "strategies" make all processes acquire a lock 
from $\BTu$ then there is no "winning" "strategy". 
Let $\s$ be a "strategy", $\PP=\PP^\s$ its "behavior@@two" and $\Gp$ its "lock graph".
Note that $\Gu$ is a subgraph of $\Gp$, hence 
every SCC in $\Gp$ is a superset of an SCC in $\Gu$. 
Observe that if an SCC in $\Gp$ contains a 
"direct semi-deadlock SCC" of $\Gu$ then it is a "direct deadlock SCC". 
Indeed, if an SCC in $\Gu$ is a "direct semi-deadlock" but not a
"direct deadlock" one then $\s$ adds one of edges between the locks of
$p$, say edge $t_1 \edge{p} t_2$, to this SCC in $\Gp$.
As $t_1, t_2$ are in that SCC of $\Gu$, 
there is a "simple" path from $t_2$ to $t_1$ not involving $p$. 
Hence, a "direct semi-deadlock SCC" becomes a "direct deadlock SCC".
This implies $\BTu \incl \BTPP$.

Let $p \in \Proc$, as there is a $\s$-run of $p$ acquiring a lock of $\BTu$,
either $p$ is $\BTu$"-fragile" (and thus $\BTPP$"-fragile") or there is an edge 
labeled by $p$ towards $\BTu$, meaning that both locks of $p$ are in $\BTPP$ 
and thus that $p$ is $\BTPP$"-fragile" by Lemma~\ref{lem:EGallfragile}.  
As a consequence, all processes are $\BTPP$"-fragile". 
We conclude by Proposition~\ref{prop:sigma-winning}.

In the other direction we suppose that there exists a process $p$ and 
a strategy $\sigma_p$ forbidding $p$ to acquire any lock of $\BTu$.
We construct a strategy $\s$ such that $p$ is not $\BTPP$"-fragile".
This will show that $\s$ is winning by Proposition~\ref{prop:sigma-winning}.

\AP Let $\intro*\FTu=T\setminus \BTu$ be the set of locks not in $\BTu$. 
By definition of $\BTu$, in $\Gu$ no node of $\FTu$ can reach a "direct semi-deadlock SCC". 
In particular, there is no "direct semi-deadlock SCC" in 
$\Gu$ restricted to $\FTu$.
We construct a "strategy" $\s$ such that, when restricted to $\FTu$, the SCCs of
$\Gp$ and $\Gu$ are the same, where $\PP=\PP^\s$.

Let us linearly order the SCC  of $\Gu$ restricted to $\FTu$ in such a way that if a
component $C_1$ can reach a component $C_2$ then $C_1$ is before $C_2$ in the
order. 

We use strategy $\s_p$ for $p$.
For every process $q \neq p$ we have one of the two cases: (i) either
there is a "local strategy" $\s_q$ inducing only the edges that are already in
$\Gu$; or (ii) every "local strategy" induces some edge that is not in $\Gu$.
In the second case there are no $q$-labeled edges in $\Gu$, and for each of the
two possible edges there is a "local strategy" inducing only this edge.

For a process $q$ from the first case we take a "local strategy" $\s_q$ that induces
only the edges present in $\Gu$.

For a process $q$ from the second case,
\begin{itemize}
	\item If both locks of $q$ are in $\BTu$ then take any "local strategy" for $q$.
	\item If one of the locks of $q$ is in $\BTu$ 
	and the other in $\FTu$ then choose a "strategy"
	inducing an edge from the lock in $\BTu$  to the lock in $\FTu$. 
	\item If both locks of $q$ are in $\FTu$  then choose a "strategy" inducing
          an edge from a smaller to a bigger SCC of $\Gu$.
\end{itemize}

In the last case, both locks cannot be in the same SCC of $\Gu$:  As they are in $\FTu$, this would have to be an SCC with no simple cycles, i.e., a tree of double edges. But then the existence of $q$ implies that this is a "direct semi-deadlock SCC", which contradicts the fact that those locks are in $\FTu$.

Consider the graph $\Gp$ of the resulting strategy $\s$.
Restricted to $\FTu$ this graph has the same SCCs as $\Gu$. 
Moreover, there are no extra edges in $\Gp$ added to any SCC 
included in $\FTu$, and there are no edges
from $\FTu$ to $\BTu$. As a result, we have $\BTu = \BTPP$.
As $p$ acquires no lock from $\BTu$, it is not $\BTu$"-fragile" and thus not
$\BTPP$"-fragile" either.
  \end{proof}

\exclusiveP*

\begin{proof}
  First we need to compute the "unavoidable" edges.
  An edge $t_1 \act{p} t_2$ is avoidable iff there exists some
  "locally live"
  strategy $\s_p$ that does not admit the pattern $\set{t_1} \pat \set{t_2}$.
  Recall that we assume that $\Aa_p$ is "lock-aware".
  Then the above means that we look for a "locally live" strategy
  $\s_p$ that avoids all states in $\Aa_p$ where $p$ owns $t_1$ and
  needs to acquire $t_2$.
  This question reduces to a usual safety game.

  Next we have to determine  which SCCs of $\Gu$ are a "direct
  deadlock", which amounts to check the existence of a "simple cycle".
  Observe that an SCC does not contain such a cycle iff it is a tree of
  double edges, which is easy to check in \PTIME.
 Knowing whether an SCC is a "direct semi-deadlock" or a
 "semi-deadlock" can also be done in \PTIME.

 Finally we have to check the condition from
 Proposition~\ref{prop:winiffnoBTu}, so the existence of a "locally
 live" strategy $\s_p$ that prevents $p$ to take a look from $\BTu$.
 As above, this amounts to a safety game.
\end{proof}

\section{Nested locks}\label{sec:nested}
	
	We consider now  "nested-locking" "LSS", in which the system
        has to ensure that locks are acquired and released 
	in a stack-like manner.
	So a process can release only the last lock it has acquired.

          Throughout the section the action associated with each operation
  is omitted, to simplify the presentation.
        
Local runs of "nested-locking" "LSS" have a natural decomposition
into staircases:

\begin{defi}
  A \intro{stair decomposition} of a local run $u$ is of the form
  \[u=u_1 \,\get{t_1} u_2 \, \get{t_2} \cdots u_{k} \, \get{t_k} u_{k+1}\]
  
  \noindent where $u_1,\dots,u_{k+1}$ are "neutral" runs,
  and no $u_i$ uses locks  from $\set{t_1,\dots,t_{i-1}}$. 
\end{defi}

\begin{lem}
	\label{lem:NTSSproperty}
	Every "nested-locking" local run $\run$
	has a unique "stair decomposition". 
\end{lem}

\begin{proof}
	We set 
	$\run = \run_1 \get{t_1} \run_2 \get{t_2} \cdots \run_{k}
	\get{t_k} \run_{k+1}$
	such that $\set{t_1, \ldots, t_k}$ is the set of
	locks held by the process, call it $p$, at the end of the run $u$, and the distinguished	
	$\get{t_i}$ are the last acquisitions of these locks in $\run$.
        All properties are immediate.
\end{proof}
	
We now define patterns of "risky" local runs that will serve as witnesses of 
reachable "deadlocks", in a similar manner as in Definition~\ref{def:patterns2locks}.	

\begin{defi}\label{def:nested-pattrerns}
  Consider a local "risky" $\s$-run $u$ of process $p$, and its "stair decomposition" $u=u_1 \,\get{t_1} u_2 \, \get{t_2}
	\cdots u_{k} \,  \get{t_k} u_{k+1}$.
We associate with $u$ a \intro{stair
	pattern} $(\intro*\Towns_p,\intro*\Twants_p,\preceq^p)$, where $\Towns_p=\set{t_1,\dots,t_k}$, 
	$\Twants_p$ is the set of locks requested by outgoing
        transitions allowed by $\s$ in the 
	state reached by $u$,
	and $\preceq^p$ is the smallest partial order on $T_p$ satisfying: 

  \begin{quote}
    For all $1 \le i \le k$ and all $t \in T_p$, if the last operation
    on $t$ in $u$ is after the last $\get{t_i}$  then $t_i \preceq^p t$.
  \end{quote}

	A \intro(nest){behavior} of $\s$ is a
	family of sets of "stair patterns" 
  $(\PP^\s_p)_{p\in\Proc}$, where $\PP^\s_p$ is the set of "stair patterns" of
  local "risky" $\s$-runs of $p$.
\end{defi}

\begin{exa}
	Consider the local run displayed in Figure~\ref{fig:stair}.
	It is "nested-locking" and "risky", hence we can define its "stair pattern" $(\set{t_1,t_2,t_4}, (t_2<t_1<t_3<t_5<t_4), \set{t_3,t_5})$.
	This pattern describes the set of locks held at the end, the order on the last operations on each lock appearing in the run, and the set of locks that can be acquired at the end. 
\end{exa}

	\begin{figure}[ht]
	\begin{tikzpicture}[AUT style]
		\node[state, minimum size=10pt, fill=blue!5!white] (A) {};
		\node[state, right=of A, minimum size=10pt, fill=blue!5!white] (B) {};
		\node[state, right=of B, minimum size=10pt, fill=blue!5!white] (C) {};
		\node[state, right=of C, minimum size=10pt, fill=blue!5!white] (D) {};
		\node[state, right=of D, minimum size=10pt, fill=blue!5!white] (E) {};
		\node[state, right=of E, minimum size=10pt, fill=blue!5!white] (F) {};
		\node[state, right=of F, minimum size=10pt, fill=blue!5!white] (I) {};
		\node[state, right=of I, minimum size=10pt, fill=blue!5!white] (J) {};
		\node[state, right=of J, minimum size=10pt, fill=blue!5!white] (G) {};
		\node[state, right=of G, minimum size=10pt, fill=blue!5!white] (H) {};
		\node[state, above right=of H, minimum size=10pt, fill=blue!5!white] (X) {};
		\node[state, below right=of H, minimum size=10pt, fill=blue!5!white] (Y) {};
		
		\path[->, thick] (A) edge node[above] {$\get{t_1}$} (B);
		\path[->, thick] (B) edge node[above] {$\rel{t_1}$} (C);
		\path[->, thick] (C) edge node[above] {$\get{t_2}$} node[below] {\color{blue} $\blackdiamond$} (D);
		\path[->, thick] (D) edge node[above]  {$\get{t_1}$} node[below] {\color{blue} $\blackdiamond$} (E);
		\path[->, thick] (E) edge node[above] {$\get{t_3}$} (F);
		\path[->, thick] (J) edge node[above] {$\rel{t_3}$} (G);
		\path[->, thick] (G) edge node[above] {$\get{t_4}$} node[below] {\color{blue} $\blackdiamond$}   (H);
		\path[->, thick] (F) edge node[above] {$\get{t_5}$} (I);
		\path[->, thick] (I) edge node[above] {$\rel{t_5}$} (J);
		\path[->, dotted] (H) edge node[above left] {$\get{t_3}$} (X);
		\path[->, dotted] (H) edge node[below left] {$\get{t_5}$} (Y);
		
		\path[draw,very thick] (0,-3) -- node[above] {$t_1$} (2,-3) -- (2,-2.5) -- node[above] {} (4,-2.5) -- (4,-2) -- node[above] {$t_3,t_5$} (6,-2) -- (6,-1.5) -- (8,-1.5);
		
		\path[draw, dashed] (2,-2.5) -- (2,-3.5);
		\path[draw, dashed] (4,-2) -- (4,-3.5);
		\path[draw, dashed] (6,-1.5) -- (6,-3.5);
		
		\node (1) at (2,-4) {$t_2$};
		\node (2) at (4,-4) {$t_1$};
		\node (3) at (6,-4) {$t_4$};
	\end{tikzpicture}
	\caption{Example of a "nested-locking" local run. The dotted arrows are the available transitions at the end of the run.
		Blue diamonds mark transitions taking a lock
		that is not released later in the run.
		The lower part shows the "stair pattern" of this run (without the $\Blocks$ part). Steps represents the points at which a lock is taken and not released later. On each step we write the set of locks used in the corresponding section of the run.}
	\label{fig:stair}
\end{figure}

\begin{lem}
  \label{lem:winiffnopatterns}
  A "control strategy" $\sigma$ with "behavior@@nest" $(\PP^\s_p)_{p \in \Proc}$ is 
  \textbf{not} "winning" if and only if for every $p\in\Proc$ there is some "stair 
  pattern" $(\Towns_p, \Twants_p, \preceq^p) \in \PP^\s_p$ such that:
  \begin{itemize}
  	\item $\bigcup_{p \in \Proc} \Twants_p \subseteq \bigcup_{p \in \Proc} 
  	\Towns_p$,
    \item the sets $\Towns_p$ are pairwise disjoint, 
    \item there exists a total order $\preceq$ on the set of all locks
      that is    compatible with all $\preceq^p$. 
  \end{itemize}
\end{lem}

\begin{proof}
	Suppose $\sigma$ is not "winning", and let $w$ be a run leading to a "deadlock". 
	For all $p$ let $\Towns_p$ be the set of locks owned by $p$ after $w$. 
	Let $\run^p=w|_p$ be the local run of $p$ in $w$. 
	Since $w$ leads to a "deadlock" every $\run^p$ is "risky". 
	For every $p$, consider the "stair pattern" $(\Towns_p,\Twants_p, \preceq^p)$ of $\run^p$. 
	By definition, this is a "pattern@@nest" from $\PP^\s_p$.
	
	We need to show that these patterns satisfy the requirements of the lemma.
	Since the configuration reached after $w$ is a "deadlock", every process waits
	for locks that are already taken so $\bigcup_p \Twants_p \subseteq \bigcup_{p} \Towns_p$, proving the first condition.
	Moreover, the sets $\Towns_p$ are pairwise disjoint. 
	
	For the last requirement of the lemma consider some order $\preceq$ on $T$ satisfying:
	$t\preceq t'$ if the last operation on $t$ appears before the
	last operation on $t'$ in $w$. 	
	Let $p \in \Proc$, let 
	$\run^p=\run^p_1 \, \get{t^p_1} \run^p_2 \, \get{t^p_2} \cdots \run^p_{k}\, \get{t^p_k} \run^p_{k+1}$ be the "stair decomposition" of $\run^p$. 
	As $p$ never releases  $t^p_i$, the distinguished $\get{t^p_i}$,
	is the last operation on $t^p_i$ in the global run.
	Consequently, for all $t$ we have $t^p_i \preceq t$ whenever $t$ is used in
	$\run^p_{i+1} \get{t^p_{i+1}} \cdots \run^p_{k} \get{t^p_k} \run^p_{k+1}$.
	Hence, $\preceq$ is compatible with all $\preceq^p$. 
	
	For the converse implication, suppose that there are "patterns@@nest" satisfying
	all the conditions of the lemma. 
	We need to construct a run $w$ ending in a "deadlock".
	For every process $p$ we have a "stair pattern" $(\Towns_p,\Twants_p, \preceq^p)$ coming
	from a local $\s$-run $u^p$ of $p$, with
	$u^p=u^p_1\, \get{t^p_1} u^p_2 \, \get{t^p_2} \cdots u^p_{k}\, \get{t^p_k} u^p_{k+1}$
	as "stair decomposition".
	There is also a linear order $\preceq$ compatible with all $\preceq^p$.
	Let $\prec$ be its strict part.
	Let $t_1,\dots, t_k$ be the sequence of locks from
	$\bigcup_{p}\Towns_p$  listed according to $\prec$.
	Let $\set{p_1, \ldots, p_n} = \Proc$.
	We claim that we can get a suitable global run $w$ as $u^{p_1}_1\dots u^{p_n}_1w'$
	where $w'$ is obtained from $t_1\dots t_k$ by substituting each
	$t^p_i$ by $\get{t^p_i}u^p_{i+1}$.
	Observe that every $t_j$ from the sequence $t_1\dots t_k$ corresponds to
	exactly one $t_i^p$, as the sets $\Towns_{p_1},\dots,\Towns_{p_n}$ are disjoint.
	
	All $u^p_{i}$ are "neutral", hence after executing $u^{p_1}_1\dots u^{p_n}_1$ all locks are free. 
	Let $t^p_i \in T_p$, suppose furthermore that all
        $\get{t^q_j}u^q_{j+1}$ with $t^q_j \prec t^p_i$ have been
        executed after $u^{p_1}_1\dots u^{p_n}_1$.
	Then the set of taken locks is $\set{t^q_j \mid t^q_j \prec t^p_i}$. 
	As $\preceq$ is compatible with all $\preceq^p$, all locks $t$ used in $\get{t^p_i}u^p_{i+1}$ are such that $t^p_i \preceq t$.
	Moreover, since all $t^q_j$ that were taken before are such that $t^q_j \prec t^p_i$, 
	the run $\get{t^p_i}u^p_{i+1}$  uses only locks that are free and can therefore 
	be executed.
	
	To sum up, $w$ can be executed. It ends in a "deadlock" as
        $\bigcup_p\Twants_p \subseteq \bigcup_{p}\Towns_p$.
\end{proof}

\begin{lem}
  \label{lem:checkpatternNTSS}
  Given a "nested-locking" "LSS" $\tss$, a process $p \in \Proc$ and a set of
  "patterns@@nest" $\PP_p$, we can check in polynomial time in
  $\size{\aut_p}$ and $2^{\size{T}\log(\size{T})}$ whether there exists a "strategy"
  $\sigma$ 
  with $\PP^\s_p \subseteq \PP_p$.
\end{lem}
\begin{proof}
	Fix a process $p$.
	We extend the states of $p$ to keep track of the set of locks held by $p$ as well as the order $\fleq$ induced  by the "stair
	pattern" of the run seen so far (as in Definition~\ref{def:nested-pattrerns}).
	This increases the number of states by the factor
        $|T|!\cdot2^{|T|}$.
        
	As the set of locks owned by $p$ is now a function of the current state, 
	this also allows us to eliminate all non-realizable transitions which acquire a 
	lock that $p$ owns or release one it does not have.

	Consider a state $s$ where all outgoing transitions have a
        lock acquisition as operation. 
	Thanks to the previous paragraph, $s$ determines the set of locks $\Towns(s)$
	and an order $\prec_s$ such that every local run ending in $s$ has a pattern
	$(\Towns^s,B,\prec_s)$, where $B$ depends on the choices a strategy for $p$
	makes in $s$.
	We mark $s$ bad if none of these possible patterns is in \nolinebreak $\PP_p$.
	
	We iteratively delete all bad states and all their ingoing transitions, as we need to ensure that we never reach them. If we delete an "uncontrollable" transition then we mark its source state as bad because
	reaching that state would make the environment able to reach a bad state.
	If this process marks the initial state bad then there is no local strategy
	with patterns  included in $\PP_p$.
	Otherwise, we look for new bad states as in the previous paragraph.
	Indeed, a state may satisfy the conditions of the previous paragraph after
	removing some of its outgoing transitions, for example a transition not
	accessing locks. 
	If some new state is marked bad then we repeat the whole procedure. 

	When this double loop stabilizes and if the initial state is not marked bad, then the
	remaining transitions form a  strategy for $p$ with all patterns in $\PP_p$.
\end{proof}

\begin{prop}
	\label{prop:innexptime}
	The "deadlock avoidance control problem" is decidable for
	"nested-locking" "lock-sharing systems" in non-deterministic exponential time.
\end{prop}

\begin{proof}
	First we apply the first step of Lemma~\ref{lem:checkpatternNTSS}
        so that every state encodes which locks are taken, and in
        which order.
        In this way we ensure that every release is applied in nested
        manner.
        
	The decision procedure for the existence of a winning strategy guesses a "behavior@@nest" $\PP_p$ for each process
	$p$.
	The size of the guess is at most $\pow{2\size{T}} \cdot \size{T}!$, hence $\pow{O(\size{T}
		\log(\size{T})}$.
	Then it checks if there exist "local strategies" yielding subsets of those "behaviors@@nest".
	This takes exponential time by Lemma~\ref{lem:checkpatternNTSS}.  
	If the result is negative then the procedure rejects.
	Otherwise, it checks if some condition from Lemma~\ref{lem:winiffnopatterns} does
	not hold.
	It it finds one then it accepts, otherwise it rejects.
	
	Clearly, if there is a "winning" "strategy" then the procedure can accept
	by guessing the family of "behaviors@@nest" corresponding to this "strategy". 
	For these "behaviors@@nest" the check from Lemma~\ref{lem:checkpatternNTSS} does not fail,
	and one of the conditions of Lemma~\ref{lem:winiffnopatterns} must be violated.
	
	Conversely, if the decision procedure concludes that there exists a
	"winning" "strategy", then let $(\PP_p)_{p \in \Proc}$ be the
	guessed family of "behaviors@@nest".
	We know that  there exists a "strategy" $\sigma$ with "behaviors@@nest" $(\PP'_p)_{p \in \Proc}$ such that $\PP'_p \subseteq \PP_p$ for all $p \in \Proc$. 
	Furthermore, as there are no "patterns@@nest" in $(\PP_p)_{p\in \Proc}$ satisfying the
	requirements of Lemma~\ref{lem:winiffnopatterns}, there cannot be any in
	the $\PP'_p$ either.
	Hence $\sigma$ is a "winning" "strategy".
\end{proof}

\nestedNEXP*

\begin{proof}
	The upper bound is given by Proposition~\ref{prop:innexptime}.
	For the lower bound, we reduce from the domino tiling problem 
	over an exponential grid. 
	In this problem, we are given an alphabet $\Sigma$ with a special letter
	$b$,
	an integer $n$ (in unary) 
	and a set $D$ of dominoes, 
	each domino $d$ being a 4-tuple $(\up_d, \down_d, \rgt_d, \lft_d)$
	of letters of $\Sigma$. 
	The question is whether there exists a mapping
	$t : \set{0, \ldots, 2^{n}-1}^2 \to D$ representing a valid
        tiling of the grid, i.e.~such that 
	for all $x,y,x',y' \in \set{0, \ldots, 2^{n}-1}$: 
	\begin{itemize}
		\item if $x'=x$ and $y'=y+1$ then $\up_{t(x,y)} = \down_{t(x',y')}$
		\item if $x' = x+1$ and $y'=y$ then $\rgt_{t(x,y)} = \lft_{t(x',y')}$
		\item if $x = 0$ then $\lft_{t(x,y)}= b$
		\item if $x = 2^n-1$ then $\rgt_{t(x,y)} = b$
		\item if $y = 0$ then $\down_{t(x,y)} = b$
		\item if $y = 2^n-1$ then $\up_{t(x,y)} = b$
	\end{itemize}
	
	The above problem is well-known to be \NEXPTIME-complete.
	
	Let $n, \Sigma, D, b$ be an instance of the tiling problem. 
	We construct a \lss\ as follows:
	We have three processes $p$, $\bar{p}$ and $q$. 
	Process $p$ uses locks from $\set{0^x_i, 1^x_i, 0^y_i$, $1^y_i
          \mid 1 \leq i \leq
	n}$, together with a lock $t_d$ for each domino $d \in D$, and an extra lock called
	simply $\ell$. 
	Process $\bar{p}$ will use similar locks but with a bar:
	$\bar{0^x_i}$, 	$\bar{1^x_i}$, $\bar{0^y_i}$, $\bar{1^y_i}$,
	$\bar{t_d}$, $\bar{\ell}$.
	Process $q$ will use all the locks of $p$ and $\bar{p}$.

	Let us describe process $q$ represented in Figure~\ref{fig:proc-q}.
	In the initial state the environment can choose between several
	actions:  $\equ$, $\ver$, $\hor$, $b_{\lft}$, $b_{\rgt}$, $b_{\up}$ and $b_{\down}$.
	Each of these actions leads to a different transition system, but the
	principle behind all the systems is the same.
	In the first phase, for each $1 \leq i \leq n$, the environment can choose to take 
	either lock $0^x_i$ or $1^x_i$, and then take either $\bar{0^x_i}$ or
	$\bar{1^x_i}$.
	In the second phase the same happens for $y$ locks.
	After these two phases the environment has chosen two pairs of $n$-bit numbers, call
	them $\#x,\#y$ and $\#\bx,\#\bary$. 
	Where the three systems differ is how the choice of $\bx$'s and $\bary$'s is
	limited in these two phases. 
	This  depends on the first action done by the environment:
	\begin{itemize}
		\item If it is $\equ$ then  $\#x=\#\bx$ and $\#y=\#\bary$.
		\item If it is $\ver$, then $\#x=\#\bx$ and $\#y+1=\#\bary$.
		\item If it is $\hor$, then $\#x+1 =\#\bx$ and $\#y =\#\bary$.
		\item If it is $b_{\lft}$ (resp.~$b_{\rgt}$) then $\#x=0$ (resp.~$\#x=2^n-1$).
		\item If it is $b_{\down}$ (resp.~$b_{\up}$) then $\#y=0$ (resp.~$\#y=2^n-1$).
	\end{itemize}
	All these constraints are easily implemented.
	For example, after $\equ$  the environment must take the same bits
	for $\bar{x}$ as for $x$ (similarly for $y$). 
	
	In the third phase, process $q$ has to take and then
        immediately release locks $\ell$ 
	and $\bar{\ell}$, before it reaches a state called $\domi$.
	Note that every state in the three phases before $\domi$ has a  loop on it, 
	meaning that $q$ cannot "deadlock" while being in one of these states.
	In state $\domi$, the system chooses to take two dominoes $d$ and $\bar{d}$ such that:
	\begin{itemize}
		\item If the environment has chosen $\equ$ then $d=\bar{d}$.
		
		\item If it has chosen $\ver$ then $\up_d = \down_{\bar{d}}$.
		
		\item If it has chosen $\hor$ then $\rgt_d = \lft_{\bar{d}}$.
		
		\item If it has chosen $b_{\lft}$ (resp.~$b_{\rgt}, b_{\up}, b_{\down}$) then $\lft_d = b$ (resp.~$\rgt_d, \up_d, \down_d$).
	\end{itemize}
	
	Each choice leads to a different state $s_{d, \bar{d}}$.
	From there transitions force the system to take every lock $t_{d'}$ and
	$\bar{t_{d'}}$,  except for $t_{d}$ and $t_{\bar{d}}$, in order
        to reach a state called
	$\win$ with a local loop on it and no other outgoing transitions.  
	
	We now describe process $p$ represented in Figure~\ref{fig:proc-q}.
	It starts by taking the lock $\ell$, which it never releases. 
	Then the environment chooses to take one of $0^x_i$ and $1^x_i$ and 
	one of $0^y_i$ and $1^y_i$ for all $1 \leq i \leq n$. 
	Finally, the system chooses a domino $d$ and takes the lock $t_d$ 
	before reaching a state with no outgoing transitions.	 
	Process $\bar{p}$ behaves identically, but uses locks with a bar.

	\begin{figure}[ht]
	\begin{tikzpicture}[xscale=1.5,yscale=1.3,AUT style]
		\node[state,initial, fill=blue!5!white] (0) at (0,0) {};
		\node[state, fill=blue!5!white] (1) at (1.5,0) {};
		\node[state, fill=blue!5!white] (2) at (3,0) {};
		\node[state, fill=blue!5!white] (3) at (4.5,0) {};
		\node[state, fill=blue!5!white] (4) at (6,0) {};
		\node[state, fill=blue!5!white] (5) at (7.5,0.5) {$d_1$};
		\node[state, fill=blue!5!white] (5') at (7.5,-0.5) {$d_m$};
		\node (6) at (3.75,0) {\huge $\cdots$};
		\node (7) at (7.5,0.1) {\huge $\vdots$};

		\path[->] (0) edge node[below] {$\get{\ell}$} (1);
		\path[->, bend left] (1) edge node[above] {$\get{0^x_1}$} (2);
		\path[->, bend right] (1) edge node[below] {$\get{1^x_1}$} (2);
		\path[->, bend left] (3) edge node[above] {$\get{0^y_n}$} (4);
		\path[->, bend right] (3) edge node[below] {$\get{1^y_n}$} (4);
		\path[->, bend left, dashed] (4) edge node[above] {$\get{t_{d_1}}$} (5);
		\path[->, bend right, dashed] (4) edge node[below] {$\get{t_{d_m}}$} (5');
	\end{tikzpicture}
	\caption{Transition system for process $p$ for the proof of
		Theorem~\ref{thm:nested} (with $D = \set{d_1, \ldots,
			d_m}$). Dashed arrows
		are controlled by the system.}
	\label{fig:proc-p}
\end{figure} 
	
	\begin{figure}[ht]
	\begin{tikzpicture}[xscale=1.3,yscale=1,AUT style]
		
		\node[state,initial, fill=blue!5!white] (0) at (-0.3,0.5) {};
		\node[state, fill=blue!5!white] (1) at (1.5,0.5) {};
		\node[state, fill=blue!5!white] (2) at (3,0.5) {};
		\node[state, fill=blue!5!white] (3) at (4.3,0.5) {};
		\node[state, fill=blue!5!white] (4) at (5.8,0.5) {};
		\node[state, fill=blue!5!white] (5) at (7.1,1.5) {$1$};
		\node[state, fill=blue!5!white] (k) at (7.1,0.5) {$k$};
		\node[state, fill=blue!5!white] (5') at (7.1,-0.5) {$n$};
		\node (6) at (3.65,0.5) {\large $\cdots$};
		\node (7) at (7.1,1.1) {\large $\vdots$};
		\node (7) at (7.1,0.1) {\large $\vdots$};
		\node (eq) at (1.3,1.4) {};
		\node (hor) at (-0.3,1.5) {};
		\node (bdir) at (-0.3,-0.6) {};
		\node (c3) at (1.5,1.4) {\large $\cdots$};
		\node (c4) at (-0.3,1.8) {\large $\vdots$};
		\node (c5) at (-0.3,-0.7) {\large $\vdots$};
		
		\path[->, bend left] (0) edge node[above, yshift=.2cm] {$hor.$} (eq);
		\path[->] (0) edge node[left] {$eq.$} (hor);
		\path[->] (0) edge node[left] {$b_{\lft}$} (bdir);
		
		\path[->] (0) edge node[below] {$\ver$} (1);
		\path[->, bend left] (1) edge node[above] {$\get{}\set{0^x_1, \bar{0^x_1}}$} (2);
		\path[->, bend right] (1) edge node[below] {$\get{}\set{1^x_1, \bar{1^x_1}}$} (2);
		\path[->, bend left] (3) edge node[above] {$\get{}\set{0^x_n, \bar{0^x_n}}$} (4);
		\path[->, bend right] (3) edge node[below] {$\get{}\set{1^x_n, \bar{1^x_n}}$} (4);
		\path[->, bend left] (4) edge (5);
		\path[->] (4) edge node[below] {} (k);
		\path[->, bend right] (4) edge (5');

		\node[state, fill=blue!5!white] (1y) at (7.4,-2) {};
		\node[state, fill=blue!5!white] (2y) at (5.2,-2) {};
		\node[state, fill=blue!5!white] (4y) at (3.8,-2) {};
		\node[state, fill=blue!5!white] (5y) at (1.7,-2) {};
		\node[state, fill=blue!5!white] (6y) at (0.2,-2) {};
		\node[state, fill=blue!5!white] (7y) at (-0.85,-2) {};
		\node[state, fill=blue!5!white] (8y) at (-2.3,-2) {};
		\node (c1) at (4.5,-2) {\large $\cdots$};
		\node (c2) at (-0.3,-2) {\large $\cdots$};

		\path[->, bend left=30] (k) edge (1y);	
		\path[->] (1y) edge node[above] {$\get{}\set{1^y_1, \bar{0^y_1}}$} (2y);
		\path[->] (4y) edge node[above] {$\get{}\set{0^y_k, \bar{1^y_k}}$} (5y);
		\path[->, bend right] (5y) edge node[above] {$\get{}\set{0^y_{k+1}, \bar{0^y_{k+1}}}$} (6y);
		\path[->,bend left] (5y) edge node[below] {$\get{}\set{1^y_{k+1}, \bar{1^y_{k+1}}}$} (6y);
		\path[->, bend right] (7y) edge node[above] {$\get{}\set{0^y_{n}, \bar{0^y_{n}}}$} (8y);
		\path[->,bend left] (7y) edge node[below] {$\get{}\set{1^y_{n}, \bar{1^y_{n}}}$} (8y);

		\node[state, fill=blue!5!white] (l1) at (-2.3, -4) {};
		\node[state, fill=blue!5!white] (l1') at (-0.7, -4) {};
		\node[state, fill=blue!5!white] (l2) at (0.9, -4) {};
		\node[state, fill=blue!5!white] (dom) at (2.5, -4) {$\domi$};
		\node[state, fill=blue!5!white] (dd) at (4, -4) {$d, \bar{d}$};
		\node[state, fill=blue!5!white] (win) at (6.5, -4) {$win$};
		\node (d1) at (4,-3.2) {};
		\node (dm) at (4,-4.8) {};
		
		\path[->, dashed] (dom) edge (dm);
		\path[->, dashed] (dom) edge (d1);
		\path[->, dashed] (dom) edge (dd);
		\path[->] (dd) edge node[above] {\small $\get{}$ $(D\setminus\set{d,\bar{d}})$} (win);
		\path[->,loop right] (win) edge (win);
		
		\path[->, bend right] (8y) edge node[left] {$\get{\ell}$} (l1);
		\path[->] (l1) edge node[above] {$\get{\bar{\ell}}$} (l1');
		\path[->] (l1') edge node[above] {$\rel{\bar{\ell}}$} (l2);
		\path[->] (l2) edge node[above] {$\rel{\ell}$} (dom);

	\end{tikzpicture}
	\caption{Transition system for process $q$ in the proof of
		Theorem~\ref{thm:nested}. Dashed arrows are controllable, every
		state before $\domi$ has a self-loop (not drawn) and $\get{}$ $S$
		means a sequence of forced transitions with the operations
		$\get{t}$ for each $t \in S$ (in some order). For
		simplicity only the $\ver$ case is shown.}
	\label{fig:proc-q}
\end{figure} 

	We need to show that if there is a tiling $t:\set{0, \ldots, 2^{n}-1}^2 \to
	D$ then there is a "winning" "strategy".
	The "strategy" for $q$ is to respond with the correct tiles: if the environment chooses
	$\#x$, $\#y$, $\#\bx$, $\#\bary$ the strategy chooses locks corresponding to
	$d_1$ and $\bar{d_2}$ with $d_1=t(\#x,\#y)$ and $d_2=t(\#\bx,\#\bary)$.
	The strategy of $p$ does the same but uses inverse encoding of numbers:
	considers $0$ as $1$, and $1$ as $0$.
	Similarly for $\bp$.
	
	Assume for contradiction that the "strategy" is not "winning", so we have a run
	leading to a "deadlock". 
	First, observe that the environment must have process $q$ go
        through state $\domi$ before $p$ and $\bp$ start running, because all states before
	$\domi$ have a self-loop, so $q$ cannot block there. 
	If either $p$ or $\bp$ starts before $q$ has reached
        $\domi$, then $q$ can never reach it, as one of the locks $\ell, \bar{\ell}$ will never be available again.
	
	If $q$ reached state  $\domi$ then process $p$ has no choice but to
	take $\ell$, and then the remaining locks among $x$, $y$.
	Similarly for $\bp$.
	At this stage the "strategy" $\s$ is defined so that the three processes will never
	take the same lock. 
	So $q$ cannot be blocked before reaching state $win$. 
	Thus deadlock is impossible. 
	
	For the other direction, suppose there is a "winning" "strategy" $\s$ for the
	system. 
	Observe that the "strategy" $\s_p$ for process $p$  decides which domino to
	take after the environment has decided which $x$ and $y$ locks to take. 
	So $\s_p$ defines a function $t:\set{0, \ldots, 2^{n}-1}^2 \to D$.
	Similarly $\s_{\bp}$ defines $\bar{t}$.
	
	We first show that $t(i,j)=\bar{t}(i,j)$ for all
	$i,j\in\set{0,\dots,2^n-1}$. 
	If not then consider for example the run where the environment chooses $\equ$ and then $x$,
	$\bx$ to be the representations of $i$, and $y$, $\bary$ to be representations
	of $j$.
	Suppose we have a run where process $q$ reaches state $\domi$, and
assume that $q$'s strategy tells to go to state $(d,\bar{d})$.
	Next the environment makes processes $p$ and $\bp$ reach the
        states where they chose their dominoes, $t(i,j)$ and
        $\bar{t}(i,j)$ respectively. 
	The two processes $p$ and $\bp$ then reach a "deadlock" state.
	Since we assumed that $t(i,j)\not=\bar{t}(i,j)$, process $q$ cannot reach state $\win$ from any
	state $s_{d,\bar{d}}$.
	Hence we have a "deadlock" run, a contradiction. 
	
	Once we know that the "strategies" $\s_p$ and $\s_{\bp}$ define the same
	tiling function it is easy to see that in order to be "winning" when the environment
	chooses one of the actions $\ver$, $\hor$ or $b_{\lft}$, $b_{\rgt}$, $b_{\down}$, 
	$b_{\up}$, the tiling function must be
	correct. 
\end{proof}


\section{Undecidability in the general case}\label{sec:undec}
In this section we show that the "deadlock avoidance control problem" is undecidable.
With a more involved proof we show undecidability using only 4 locks per process in~\cite{deadlockarxiv}. The case of 3 locks per process remains open.
In this section we present a lightweight proof, where processes use a
larger (but still fixed) number of locks.  

\undecfourlocks*

We begin by showing  undecidability under the assumption that
processes can already hold some locks in the initial configuration. 
We then reduce this problem to the "deadlock avoidance control
problem", where all processes start holding no lock.

\subsection{Deadlock avoidance with initialization}

\begin{figure}[h]
	\begin{tikzpicture}[AUT style,
		block/.style = {rectangle, draw, text centered, minimum height=4.2cm, minimum width=7.5cm},
		env/.style = {ellipse, draw, text centered, minimum height=1.5cm, minimum width=2.5cm},
		spec/.style = {rectangle, draw, dashed, fill=yellow!10, text centered, minimum height=1.5cm, minimum width=2.5cm},
		arrow/.style = {->, thick, >=latex'},
		scale=0.7
		]
		
		\node[env, thick] (system) {$P$};
		\node[env, right=4cm of system] (environment) {$\bP$};
		\node[right=2cm of system] (spec) {};
		\node[block, below =2cm of spec, thick] (halting) {};
		\node[above right = -1cm of halting] {$C$};
		
		\node[state, rectangle, initial, left = -1.2cm of halting] (A) {} ;
		
		\node[state, rectangle, anchor = west, above = 0.3cm of A, xshift=3.5cm] (B) {~check $b_1 b_2 \cdots = \bb_1 \bb_2 \cdots$} ;
		\node[draw, anchor = west, below = 0cm of A, xshift= 3.8cm, align=left] (C) {
			choose $i_1 i_2 \cdots$ and\\ 
			check $\a_{i_1} \a_{i_2} \cdots = b_{1} b_{2} \cdots$ and\\ 
			check $\b_{i_1} \b_{i_2} \cdots = \bb_{1} \bb_{2} \cdots$}  ;

		\path[->, bend left] (A) edge (B.west);
		\path[->, bend right] (A) edge (C.west);
		
		\draw[arrow] (system) -- (halting) node[midway, left] {$b_1 b_2 \cdots$};
		\draw[arrow]  (environment) -- (halting)  node[midway, right] {$\bb_1 \bb_2 \cdots$};
		
		
		%
	\end{tikzpicture}
	\caption{High-level view of the undecidability proof.} 
	\label{fig:undec-prnciple}
\end{figure} 

The input for the "deadlock avoidance control problem" with
initialization is a "lock-sharing system" $\tss = ((\aut_p)_{p\in \Proc}, \S^s, \S^e, T)$ and an initial configuration
$\Cinit=(\init_p,I_p)_{p\in \Proc}$ with pairwise disjoint sets $I_p
\subseteq T_p$.
The question  is whether there exists a "strategy" that
guarantees that no run from $\Cinit$ yields a global deadlock.
It turns out that this generalization of the "deadlock avoidance control problem"
is not more difficult than our original problem, as we will later see in Lemma~\ref{lem:initial}.

\begin{thm}\label{th:undec-init}
	The control problem for "LSS" with initial configuration and at most 7
	locks per process is undecidable.
\end{thm}

The proof of Theorem~\ref{th:undec-init} follows a well-known schema.
We reduce from the question whether a PCP instance has an infinite
solution.

Two processes $P$ and $\bP$ send independently sequences of bits $b_1, b_2, \ldots$ and $\bb_1, \bb_2, \ldots$ to process $C$. 
	The environment  asks $C$  either to check that $b_1 b_2 \dots =\bb_1 \bb_2 \dots$ or choose a sequence of indices $i_1, i_2, \ldots$ and check 
	that $\a_{i_1} \a_{i_2} \cdots = b_{1} b_{2} \cdots$ and
        $\b_{i_1} \b_{i_2} \cdots = \bb_{1} \bb_{2} \cdots$.
 	Since $P$ and $\bP$ do not know what is being checked, they
        have to send sequences of letters and indices that satisfy
        both conditions, i.e., an infinite PCP solution.
	The difficulty here is that $P$ and $\bP$ use only locks to
        communicate.

Formally, let $(\alpha_i,\beta_i)_{i=1}^m$ be a PCP instance with
$\alpha_i,\beta_i \in \set{0,1}^*$.
We construct a system with three processes $P, \bP, C$, using locks from the set
\[\set{c, s_0,s_1,p,\bs_0,\bs_1,\bp}\,.
\]
Process $P$ will use  locks from $\set{c, s_0,s_1,p}$, process $\bP$
locks from $\set{c, \bs_0,\bs_1,\bp}$, and $C$  all seven locks.
For the initial configuration we assume that $I_p=\set{p}$, $I_{\bP}=
\set{\bp}$ and $I_C= \set{c, s_0,s_1,\bs_0,\bs_1}$.

We describe now the three processes $P,\bP,C$.
Define first for $b=0,1$:
\begin{eqnarray*}
	\run_P(b) &=& \get{s_b} \rel{p}\, \get{c} \; \rel{s_b} \get{p}
	\rel{c}\\
	\run_{\bP}(b) &=& \get{\bs_b} \rel{\bp}\, \get{c}\; \rel{\bs_b} \get{\bp}
	\rel{c}
\end{eqnarray*}

\begin{figure}
	\begin{tikzpicture}[AUT style]
		\node[initial, state, fill=blue!5!white, minimum size=0.8cm] (A) {$P$};
		\node [left=1cm of A]  (AP) {$\Aa_P$};
		\node[initial, state, fill=blue!5!white, minimum size=0.8cm, right=3cm of A] (B) {$\bP$};
		\node[ right=1cm of A] (B') {};
		\node [right=1cm of B]  (ABP) {$\Aa_{\bP}$};
		\node[state, fill=blue!5!white, initial above, below=3cm of B'] (Cinit)
		{};
		\node [below=1cm of Cinit] (AC) {$\Aa_C$};
		\node[state, fill=blue!5!white, right=of Cinit] (Cind) {};
		\node[state, fill=blue!5!white, left=1.5cm of Cinit] (Clet) {};

		\node[state, fill=blue!5!white, right=of Cind] (I1) {\small $i$};
		\node[state, fill=blue!5!white, above=of I1] (1) {\small $1$};
		\node[state, fill=blue!5!white, below=of I1] (M) {\small $m$};
		\node[ above=0.2 cm of I1] (1') {\textbf{\vdots}};
		\node[ below=0.1 cm of I1] (M') {\textbf{\vdots}};
		\node[ right=2cm of I1] (I2) {\Large\textbf{$\cdots$}};
		\node[state, fill=blue!5!white,right=1.7cm of I2] (I2') {};
		\node[state, fill=blue!5!white,below right=0.5cm of I2'] (I2'') {};
		\node[left=2cm of I2''] (I3) {\Large\textbf{$\cdots$}};
		\node[left=0cm of I3] (I4) {};
		\node[state, fill=blue!5!white, left=1.7cm of I4] (I5) {};
		
		\node[state, fill=blue!5!white, below=of Clet] (L1) {};
		\node[state, fill=blue!5!white, above=of Clet] (L2) {};
		
		\path[->, thick, bend left, dashed] (Clet) edge node[left] {\small$\run_C(P, 0)$} (L2);
		\path[->, thick, bend left, dashed] (Clet) edge node[right] {\small$\run_C(P, 1)$} (L1);
		\path[->, thick, bend left] (L2) edge node[right] {\small$\run_C(\bP, 0)$} (Clet);
		\path[->, thick, bend left] (L1) edge node[left] {\small$\run_C(\bP, 1)$} (Clet);
		
		\path[->, thick, loop above, dashed] (A) edge node[above] {\small $\run_{P}(0)$} (A);
		\path[->, thick, loop above, dashed] (B) edge node[above] {\small$\run_{\bP}(0)$} (B);
		\path[->, thick, loop below, dashed] (A) edge node[below] {\small$\run_{P}(1)$} (A);
		\path[->, thick, loop below, dashed] (B) edge node[below] {\small$\run_{\bP}(1)$} (B);
		
		\path[->, thick] (Cinit) edge (Clet);
		\path[->, thick] (Cinit) edge (Cind);
		\path[->, thick, dashed] (Cind) edge (I1);
		\path[->, thick, dashed, bend left=15] (Cind) edge (1);
		\path[->, thick, dashed, bend right=15] (Cind) edge (M);
		\path[->, thick] (I1) edge node[above] {\scriptsize $\run_C(P,\a_i[0])$} (I2);
		\path[->, thick] (I2) edge node[above] {\scriptsize $\run_C(P,\a_i[k_i])$} (I2');
		\path[->, thick, bend left] (I2') edge node[above] {} (I2'');
		\path[->, thick] (I2'') edge node[above] {\scriptsize$\run_C(\bP,\b_i[0])$} (I3);
		\path[->, thick] (I4) edge node[above] {\scriptsize $\run_C(\bP,\b_i[k'_i])$} (I5);
		\path[->, thick, bend left=10] (I5) edge node[above] {} (Cind);

	\end{tikzpicture}
	\caption{The system used for the undecidability proof. The letters
		$\a_i[0], \ldots, \a_i[k_i]$ and $\b_i[0], \ldots, \b_i[k'_i]$ are
		defined so that $\a_i[0] \cdots \a_i[k_i] = \a_i$ and $\b_i[0]
		\cdots \b_i[k'_i] = \b_i$. Dashed transitions are controlable.}
\end{figure}

The automaton  $\aut_P$ ($\aut_{\bP}$, resp.) allows all possible action sequences from
$(\run_P(0)+\run_P(1))^\omega$ ($( \run_{\bP}(0)+ \run_{\bP}(1))^\omega$, resp.).
If e.g.~process $P$ manages to execute a sequence $\run_P(b_1) \run_P(b_2) \dots$
then this will mean that $C,P$ synchronize over the sequence
$b_1,b_2,\dots$, as we show below.

Process $C$'s behavior for checking word equality consists in
repeating the following procedure: she chooses a bit $b\in \set{0,1}$ through a
controllable action, then tries to execute $\run_C(P,b) \, \run_C(\bP,b)$, where: 
\begin{eqnarray*}
	\run_C(P,b) &=& \rel{s_b}\, \get{p} \, \rel{c} \; \get{s_b} \rel{p}
	\, 
	\get{c} \\
	\run_C(\bP,b) &=& \rel{\bs_b} \get{\bp} \, \rel{c} \; \get{\bs_b}
	\rel{\bp}\, 
	\get{c} 
\end{eqnarray*}
For index equality $C$'s behavior is similar: she chooses an index
$i \in\set{1,\dots,m}$ and then tries to do $\run_C(P,b_1) \dots \run_C(P,b_k) \run_C(\bP,b'_1) \dots
\run_C(\bP,b'_r)$, where $\alpha_i=b_1 \dots b_k$, $\beta_i=b'_1 \dots b'_r$.

Let us now prove that there is a "winning" "strategy" if and only if there is an infinite solution to the PCP instance.

We start by formalizing the intuition that the sequences $\run_P(b)$
and $\run_C(P,b)$ make the processes $P,C$ synchronize over bit $b$.

\begin{lem}
	\label{key-lem-undec}
	Let $\rho$ be a finite global run between two configurations
        $\gamma$ and $\g'$, such that the sequence of operations of
        $C$ in $\rho$ is $u_C(P,b)$, and  $C$ holds $ s_0, s_1, \bs_0, \bs_1$ in $\gamma$ .
	Then the sequence of operations executed by $P$ in $\rho$ is $u_P(b)$ and $\bP$ stays idle in $\rho$. Furthermore $C$ holds $ s_0, s_1, \bs_0, \bs_1$ in $\gamma'$.
\end{lem}

\begin{proof}
	Let us start with $\bP$. At the start it cannot be holding $c, \bs_0$ or $\bs_1$ as $C$ holds all of them. This implies that it is in its initial state, and not in one of the loops $u_{\bP}(0)$ or $u_{\bP}(1)$.
	Therefore its next action can only be to acquire $\bs_0$ or
        $\bs_1$. As those locks are never released by $C$ in  $u_C(P,b)$, $\bP$ has to stay idle.
	
	It is easy to see from $C$'s sequence of actions that in $\g'$ it holds $ s_0, s_1, \bs_0, \bs_1$.
	
	Concerning $P$, for the same reason it has to be in its initial state in $\g$.
From $\g$, process	$C$ releases $s_b$ and acquires $p$, meaning that $P$ has started executing the $u_P(b)$ loop, acquired $s_b$ and released $p$ and is waiting for $c$.
	Then $C$ releases $c$ and acquires $s_b$, which means that $P$ has taken $c$ and released $s_b$, and is waiting for $p$.
	Finally $C$ releases $p$ and acquires $c$, which implies that $P$ has acquired $p$ and released $c$, and is stuck in its initial state as $C$ holds both $s_0$ and $s_1$.
	Therefore, $P$ has executed precisely $u_P(b)$.
\end{proof}

Assume that there is a "winning" "strategy" for the problem with initialization. 
We can observe that $P$ has no incentive to allow both $u_P(0)$ and $u_P(1)$ at any point, since this leaves the choice to the environment.
On the other hand, if $P$ disallows both choices, then he keeps $p$ forever, thus $C$ will eventually be blocked as it needs to acquire $p$ infinitely often. 
Hence the lock $c$ will be held indefinitely by $C$.
Then $\bP$ will also be blocked since it needs to acquire $c$ infinitely often.
As a consequence, we can assume that $P$ uses a strategy that allows exactly one of $u_P(0), u_P(1)$ each time.

Therefore, the strategy of $P$ boils down to choosing a sequence of bits $b_0 b_1 b_2 \cdots$ and executing $u_P(b_0) u_P(b_1) u_P(b_2) \cdots$.
Similarly, $\bP$ chooses a sequence of bits $\bb_0 \bb_1 \bb_2 \cdots$ and executes $u_{\bP}(\bb_0) u_{\bP}(\bb_1)u_{\bP}(\bb_2) \cdots$.
Also, if the environment makes $C$ verify word equality, then $C$ chooses a sequence of bits $b''_0 b''_1 b''_2 \cdots$ and executes $u_P(b''_0) u_{\bP}(b''_0) u_P(b''_1) u_{\bP}(b''_1) \cdots$. 
Otherwise, $C$ chooses a sequence of indices $i_0 i_1 i_2\cdots$ and executes an interleaving of $u_P(b'_0) u_P(b'_1) u_P(b'_2) \cdots$  with $b'_0 b'_1 \cdots = \a_{i_0} \a_{i_1} \cdots$ and $u_{\bP}(\bb'_0) u_P(\bb'_1) u_P(\bb'_2) \cdots$ with $\bb'_0 \bb'_1 \cdots = \b_{i_0} \b_{i_1} \cdots$.

Let us now observe the relations between those sequences.
First of all note that if any process gets blocked forever, then so do
the other two, by Lemma~\ref{key-lem-undec}.
Thus a winning strategy should ensure that all processes run forever.
To do so, by Lemma~\ref{key-lem-undec}, the case of checking word equality implies that we should have $b_0 b_1 \cdots = b''_0 b''_1 \cdots =  \bb_0 \bb_1 \cdots$.

Moreover, the case of index equality imposes that $b_0 b_1 \cdots = \a_{i_0} \a_{i_1} \cdots$ and $\bb_0 \bb_1 \cdots = \b_{i_0} \b_{i_1} \cdots$.
As a result, we must have $\a_{i_0} \a_{i_1} \cdots = \b_{i_0} \b_{i_1} \cdots$, hence the PCP instance has an infinite solution.

Let us now show the other direction. Suppose there are indices $i_0 i_1 \cdots$ such that $\a_{i_0} \a_{i_1} \cdots = \b_{i_0} \b_{i_1} \cdots$. Let $b_0 b_1 \cdots = \a_{i_0} \a_{i_1} \cdots$
A "winning" "strategy" is to make $P$ and $\bP$ choose that same sequence of bits $b_0 b_1 \cdots$.
If $C$ has to check word equality, it chooses the sequence $b_0 b_1 \cdots$, otherwise it chooses indices $i_0 i_1 \cdots$.
In the following lemma we say that a process wants to execute a sequence of operations if those are the operations of the next transitions chosen by its "strategy".
\begin{lem}
	Assume that $C$ owns $\{s_0,s_1,c, \bs_0, \bs_1\}$, $P$ owns $\{p\}$, $C$ wants to execute
	$u_C(P,b)$, $P$ wants to execute $u_P(b)$ and $\bP$ wants to execute $u_{\bP}(b')$.
	Then $C$
	and $P$ finish executing $\run_C(P,b)$ and $\run_P(b)$ without encountering a global deadlock, $\bP$ stays idle, and the lock ownership is the same as before the
	execution.
\end{lem}
\begin{proof}
	Locks $\bs_0, \bs_1$ will never be released in the sequence we describe, thus $\bP$ has to stay idle.
	
	It suffices to observe that the environment has no choice for the sequence of operations. At first every lock is taken, and $C$ is the only process which can release one ($s_b$), so it does.
	Then process $P$ is the only one which can move, by taking $s_b$, and then releasing $p$, and so on.	
	Eventually $P$ will have executed $u_P(b)$ and $C$ will have executed $u_C(P,b)$.
\end{proof}

By construction of the strategy, no matter if the environment chooses to check word equality or index equality, the sequence of operations of $C$ is an interleaving of the sequences $u_C(P,b_0)u_C(P,b_1) \cdots$ and $u_C(\bP,b_0)u_C(\bP,b_1) \cdots$.
The sequences of operations on $P$ and $\bP$ are respectively $u_P(b_0) u_P(b_1)\cdots $ and $u_{\bP}(b_0) u_{\bP}(b_1)\cdots $.
As a consequence of this lemma, we obtain that the system cannot reach a global deadlock.

We have shown that there is a "winning" "strategy" if and only if the PCP instance has an infinite solution.

\subsubsection{Removing the initialization}

We aim to prove the following lemma:

\begin{lem}
	\label{lem:initial}
	There is a polynomial-time reduction from the "deadlock avoidance control problem" for "lock-sharing
	systems" with initialization to the control problem where all locks are initially free. 
	The reduction adds one process and $|\Proc|+1$ new locks in total.
\end{lem}

\begin{proof}
	The system $\Ss=((\aut_p)_{p \in\Proc},\S^s,\S^e,T)$ with initial ownership $(I_p)_{p \in\Proc}$
	is transformed
	into a new  system $\Ss_\es$ with one extra process and additional locks.
	The transformation introduces one extra lock for each process $p$,
	denoted $k_p$ and called \emph{the key of $p$}.
	The extra process is called $q$ and also has a key $k_q$.
	Each process $p \in \Proc$ uses in addition to $T_p$ the locks $k_p$ and $k_q$.
	
	The automaton $\Aa_q$  of $q$ consists of a sequence of states
        connected with uncontrollable transitions where $q$ first
        acquires $k_q$, then acquires and releases each $k_p$ in some arbitrary, fixed order.
	Additionally, every state except the last one has an uncontrollable $\nop$ self-loop.
	This is to make sure that $q$ must execute the full sequence
        in any run leading to a global deadlock. 
	
	The automaton $\aut_{p}$ of process $p$
	is extended by new states and transitions,
	which define a specific finite run
	called the \emph{init sequence}.
	The new states and transitions can occur only during the init sequence.
	When a process $p$ completes his init sequence in $\Ss_\es$,
	he  owns precisely all locks in $I_p$,
	plus the key $k_p$,
	and has reached his initial state $\init_p$ in $\aut_p$.
	After that, further actions and transitions played in $\Ss_\es$
	are actions and transitions of $\Ss$, unchanged.
	All the new actions are uncontrollable,
	thus there is no strategic decision to make for the controller of a
	process $p$ until 
	his init sequence is completed.

	\paragraph*{The init sequence.}
	For process $p$, the init sequence $\Iseq_p$ consists of three steps.
	
	\begin{enumerate}
		\item
		First, $p$ takes one by one  (in a fixed arbitrary order)
		all locks in $I_p$.
		
		\item
		Second, $p$  takes and releases $k_q$.
		
		\item
		Finally, $p$ acquires its key $k_p$ and reaches the
                initial state $\init_{p}$ of $\aut_{p}$. 
              \end{enumerate}
              In addition, an uncontrollable $\nop$ self-loop labels
              every state of this sequence (except for $\init_p$).
            The uncontrollable self-loops on every state of
        $\Iseq_p$ guarantee that  a deadlock may  occur only after all processes
	have fully completed their init sequences.
	
	\paragraph*{Linking runs in  $\Ss_\es$ and $\Ss$.}
	
	We establish that there is a "winning" "strategy" in one
        system if and only if there is one in the other. The reason
        for this is that essentially, in order to reach a deadlock in
        $\Ss_\es$ the environment is forced to execute the init
        sequences of all processes and then continue with an execution
        of $\Ss$. 
	
	\begin{claim}\label{cl:spade}
		If there is a "winning" "strategy" in $\Ss_\es$, 
		then there is one in $\Ss$.		
	\end{claim}

	\begin{proof}
          	Let $\s' = (\s'_p)_{p \in \Proc \cup \set{q}}$ be a
                "winning" "strategy" in $\Ss_\es$. 
	We define a "strategy" $\s = (\s_{p})_{p \in \Proc}$
                in $\Ss$ by letting $\s_{p}(u) = \s'_{p}(\Iseq_p\,u)$ for
                every local run $u$ of $p$ in $\Ss$.
                Since all transitions in $\Iseq_p$ are uncontrollable,
                $\s_p$ is well-defined.
		
		Suppose by contradiction that there is a $\s$-run
                leading to a global deadlock in $\Ss$.
	We can execute the first two steps of $\Iseq_p$ for
        each process $p$, one by one, then let $q$ execute all its transitions
        (acquire $k_q$, then acquire and release each $k_p$).
        At this point $q$ is deadlocked.
        Finally we execute the third step of $\Iseq_p$, for each
        process (acquire $k_p$). 
		We can then execute the $\s$-run leading to a global
                deadlock in $\Ss$, which also leads to a global
                deadlock in $\Ss_\es$. 
		This contradicts the assumption that $\s'$ is "winning".
	\end{proof}
	
	\begin{claim}\label{cl:diamond}
	If there is a "winning" "strategy" in $\Ss$, then there is one  in $\Ss_\es$.		
	\end{claim}
	\begin{proof}
          	Let $\s = (\s_p)_{p \in \Proc}$ be a "winning" "strategy" in $\Ss$.
		We define $\s' = (\s'_p)_{p \in \Proc}$ such that
                $\s_{p}(u) = \s'_{p}(\Iseq_p\,u)$ for every local run
                $u$ of $p$ in $\Ss$.  
		
		Suppose by contradiction that we have a $\s'$-run
                leading to a global deadlock in $\Ss_\es$.
		As every state along the init sequence has a $\nop$ self-loop, they must all have executed their init sequence in full.
		Similarly, $q$ must have entirely executed its sequence of operations.
		Each $p$ must hence have executed steps (1) and (2) of
                $\Iseq_p$, and this before $q$ has taken $k_q$.
		On the other hand, each $p \in \Proc$ must have taken $k_p$ after $q$ has taken and released it.
		
		As a consequence, there is a point in the run at which
                each $p$ has taken all locks in $I_p$,  but none of them has reached $\init_p$ yet.
		Consider the rest of the run from that point and remove every action from the init sequences and from $q$.
		We obtain a $\s$-run of $\Ss$ leading to a global deadlock.
		This contradicts the assumption that $\s$ is "winning".
	\end{proof}
	
	The two claims above prove that there is a "winning" "strategy" in one system if and only if there is one in the other.
	This concludes the reduction.
\end{proof}

We obtain Theorem~\ref{thm:undecfour} from Theorem~\ref{th:undec-init} and Lemma~\ref{lem:initial}.

\section{Conclusions}
Motivated by a recent undecidability result for distributed control
synthesis of Zielonka automata~\cite{Gimbert22} we have considered a
simpler model, for which the
problem has not been investigated yet. 
With hindsight it is strange that the well-studied model using lock synchronization
has not been considered in the context of distributed synthesis. 
One reason may be the non-monotone nature of the synthesis
problem:  for a less expressive class of systems the problem is not
necessarily easier because the controllers get less powerful, too. 

The two decidable classes of lock-sharing systems presented here are rather promising. 
Especially because the low complexity results cover already non-trivial problems. 
All our algorithms are based on analyzing lock patterns. 
While in this article we consider only finite state processes, the same method
applies to more complex systems, as long as solving the centralized control problem in the style of
Lemma~\ref{lem:polycheckpatterns} is decidable. 
This is for example the case for pushdown systems.

There are numerous directions that need to be investigated further.
We have focused on deadlock avoidance because this is a central property,
and deadlocks are difficult to discover by means of testing or verification. 
Another option is partial deadlock, where some, but not all, processes are
blocked. 
The concept of $Z$-"deadlock scheme" should help
here, but the complexity results may be different. 
Reachability, and repeated reachability properties need to be
investigated, too.

We do not know if the upper bound from Theorem~\ref{thm:2LSSNP} is tight. 
The algorithm for verifying if there is a "deadlock" in a given "lock graph",
Algorithm~\ref{alg:completealgo}, is already quite complicated, and
it is not clear how to proceed when a "strategy" is not given.

Another research direction is to consider probabilistic controllers. 
It is well known that there are no symmetric solutions to the dining philosophers
problem but there is a randomized one~\cite{LehRab81,Lyn96}.
Symmetric solutions are quite important for resilience issues as it is preferable that
every process runs the same code.  
The Lehmann-Rabin algorithm is essentially the system presented in
Figure~\ref{fig:flexible-philosophers} where the 
choice between $\lft$ and $\rgt$ is made randomly. 
This is one of the examples where randomized strategies are essential.
Distributed synthesis has a potential here because it is even more difficult to construct
distributed randomized systems and prove them correct.

\medskip

\emph{Acknowledgements.} We thank the \emph{LMCS} reviewers for the
thorough reading and their numerous and helpful comments.

\bibliographystyle{alphaurl}
\bibliography{distr.bib}

\newcommand{\etalchar}[1]{$^{#1}$}
\begin{thebibliography}{GMMW23}

\bibitem[AW07]{ArnWal07}
André Arnold and Igor Walukiewicz.
\newblock Nondeterministic controllers of nondeterministic processes.
\newblock In J{\"o}rg Flum, Erich Gr{\"a}del, and Thomas Wilke, editors, {\em Logic and Automata}, volume~2 of {\em Texts in Logic and Games}, pages 29--52. Amsterdam University Press, 2007.

\bibitem[BBB{\etalchar{+}}20]{BerBolBou20}
B{\'{e}}atrice B{\'{e}}rard, Benedikt Bollig, Patricia Bouyer, Matthias F{\"{u}}gger, and Nathalie Sznajder.
\newblock Synthesis in presence of dynamic links.
\newblock In Jean{-}Fran{\c{c}}ois Raskin and Davide Bresolin, editors, {\em Proceedings 11th International Symposium on Games, Automata, Logics, and Formal Verification, GandALF 2020, Brussels, Belgium, September 21-22, 2020}, volume 326 of {\em {EPTCS}}, pages 33--49, 2020.
\newblock To appear in Information and Computation.
\newblock \href {https://doi.org/10.4204/EPTCS.326.3} {\path{doi:10.4204/EPTCS.326.3}}.

\bibitem[BCMV13]{BonnetCMV13}
R{\'{e}}mi Bonnet, Rohit Chadha, P.~Madhusudan, and Mahesh Viswanathan.
\newblock Reachability under contextual locking.
\newblock {\em Log. Methods Comput. Sci.}, 9(3), 2013.
\newblock \href {https://doi.org/10.2168/LMCS-9(3:21)2013} {\path{doi:10.2168/LMCS-9(3:21)2013}}.

\bibitem[BFHH19]{Finkbeiner19concur}
Raven Beutner, Bernd Finkbeiner, and Jesko Hecking-Harbusch.
\newblock Translating asynchronous games for distributed synthesis.
\newblock In {\em International Conference on Concurrency Theory (CONCUR'19)}, volume 140 of {\em LIPIcs}, pages 26:1--26:16. Schloss Dagstuhl - Leibniz-Zentrum f{\"u}r Informatik, 2019.

\bibitem[CE81]{ClaEme81}
Edmund~M. Clarke and E.~Allen Emerson.
\newblock Design and synthesis of synchronization skeletons using branching time temporal logic.
\newblock In {\em Workshop on Logics of Programs}, volume 131 of {\em Lecture Notes in Computer Science}, pages 52--71. Springer Verlag, 1981.

\bibitem[Chu57]{Church57}
Alonzo Church.
\newblock Applications of recursive arithmetic to the problem of circuit synthesis.
\newblock In {\em Summaries of the Summer Institute of Symbolic Logic}, volume~I, pages 3--50. Cornell Univ., Ithaca, N.Y., 1957.

\bibitem[CM84]{ChaMis84}
K.~Mani Chandy and Jayadev Misra.
\newblock The drinking philosophers problem.
\newblock {\em ACM Trans. Program. Lang. Syst.}, 6(4):632--646, 1984.
\newblock \href {https://doi.org/10.1145/1780.1804} {\path{doi:10.1145/1780.1804}}.

\bibitem[ELM{\etalchar{+}}16]{ErnstLMST2016}
Michael~D. Ernst, Alberto Lovato, Damiano Macedonio, Fausto Spoto, and Javier Thaine.
\newblock Locking discipline inference and checking.
\newblock In {\em ICSE 2016, Proceedings of the 38th International Conference on Software Engineering}, pages 1133--1144, Austin, TX, USA, May 2016.

\bibitem[FGHO22]{FinGieHec22}
Bernd Finkbeiner, Manuel Gieseking, Jesko Hecking{-}Harbusch, and Ernst{-}R{\"{u}}diger Olderog.
\newblock Global winning conditions in synthesis of distributed systems with causal memory.
\newblock In Florin Manea and Alex Simpson, editors, {\em 30th {EACSL} Annual Conference on Computer Science Logic, {CSL} 2022, February 14-19, 2022, G{\"{o}}ttingen, Germany (Virtual Conference)}, volume 216 of {\em LIPIcs}, pages 20:1--20:19. Schloss Dagstuhl - Leibniz-Zentrum f{\"{u}}r Informatik, 2022.
\newblock \href {https://doi.org/10.4230/LIPIcs.CSL.2022.20} {\path{doi:10.4230/LIPIcs.CSL.2022.20}}.

\bibitem[Fin15]{Fin15}
Bernd Finkbeiner.
\newblock Bounded synthesis for {Petri} games.
\newblock In Roland Meyer, Andr{\'{e}} Platzer, and Heike Wehrheim, editors, {\em Correct System Design - Symposium in Honor of Ernst-R{\"{u}}diger Olderog on the Occasion of His 60th Birthday, Oldenburg, Germany, September 8-9, 2015. Proceedings}, volume 9360 of {\em Lecture Notes in Computer Science}, pages 223--237. Springer, 2015.
\newblock \href {https://doi.org/10.1007/978-3-319-23506-6\_15} {\path{doi:10.1007/978-3-319-23506-6\_15}}.

\bibitem[FO17]{FinkbeinerO17ic}
Bernd Finkbeiner and Ernst-Ruediger Olderog.
\newblock Petri games: Synthesis of distributed systems with causal memory.
\newblock {\em Inf. Comput.}, 253:181--203, 2017.

\bibitem[FS05]{FinSch05}
Bernd Finkbeiner and Sven Schewe.
\newblock Uniform distributed synthesis.
\newblock In {\em LICS'05}, pages 321--330. IEEE Computer Society, 2005.

\bibitem[GGMW13]{GGMW13}
Blaise Genest, Hugo Gimbert, Anca Muscholl, and Igor Walukiewicz.
\newblock Asynchronous games over tree architectures.
\newblock In {\em International Colloquium on Automata, Languages and Programming (ICALP'13)}, volume 7966 of {\em LNCS}, pages 275--286. Springer, 2013.

\bibitem[GHY21]{GieHecYan21}
Manuel Gieseking, Jesko Hecking{-}Harbusch, and Ann Yanich.
\newblock A web interface for {Petri} nets with transits and {Petri} games.
\newblock In Jan~Friso Groote and Kim~Guldstrand Larsen, editors, {\em Tools and Algorithms for the Construction and Analysis of Systems - 27th International Conference, {TACAS} 2021, Held as Part of the European Joint Conferences on Theory and Practice of Software, {ETAPS} 2021, Luxembourg City, Luxembourg, March 27 - April 1, 2021, Proceedings, Part {II}}, volume 12652 of {\em Lecture Notes in Computer Science}, pages 381--388. Springer, 2021.
\newblock \href {https://doi.org/10.1007/978-3-030-72013-1\_22} {\path{doi:10.1007/978-3-030-72013-1\_22}}.

\bibitem[Gim17]{gim17}
Hugo Gimbert.
\newblock On the control of asynchronous automata.
\newblock In {\em FSTTCS'17}, volume~30 of {\em LIPIcs}. Schloss Dagstuhl - Leibniz-Zentrum f{\"u}r Informatik, 2017.

\bibitem[Gim22]{Gimbert22}
Hugo Gimbert.
\newblock Distributed asynchronous games with causal memory are undecidable.
\newblock {\em Log. Methods Comput. Sci.}, 18(3), 2022.
\newblock \href {https://doi.org/10.46298/lmcs-18(3:30)2022} {\path{doi:10.46298/lmcs-18(3:30)2022}}.

\bibitem[GLZ04]{GLZ04}
Paul Gastin, Benjamin Lerman, and Marc Zeitoun.
\newblock Distributed games with causal memory are decidable for series-parallel systems.
\newblock In {\em FSTTCS'04}, volume 3328 of {\em LNCS}, pages 275--286. Springer, 2004.

\bibitem[GMMW22]{GimbertMMW22}
Hugo Gimbert, Corto Mascle, Anca Muscholl, and Igor Walukiewicz.
\newblock Distributed controller synthesis for deadlock avoidance.
\newblock In Mikolaj Bojanczyk, Emanuela Merelli, and David~P. Woodruff, editors, {\em 49th International Colloquium on Automata, Languages, and Programming, {ICALP} 2022, July 4-8, 2022, Paris, France}, volume 229 of {\em LIPIcs}, pages 125:1--125:20. Schloss Dagstuhl - Leibniz-Zentrum f{\"{u}}r Informatik, 2022.
\newblock URL: \url{https://doi.org/10.4230/LIPIcs.ICALP.2022.125}, \href {https://doi.org/10.4230/LIPICS.ICALP.2022.125} {\path{doi:10.4230/LIPICS.ICALP.2022.125}}.

\bibitem[GMMW23]{deadlockarxiv}
Hugo Gimbert, Corto Mascle, Anca Muscholl, and Igor Walukiewicz.
\newblock Distributed controller synthesis for deadlock avoidance.
\newblock {\em CoRR}, abs/2204.12409, 2023.
\newblock URL: \url{https://doi.org/10.48550/arXiv.2204.12409}, \href {https://arxiv.org/abs/2204.12409} {\path{arXiv:2204.12409}}, \href {https://doi.org/10.48550/ARXIV.2204.12409} {\path{doi:10.48550/ARXIV.2204.12409}}.

\bibitem[GSZ09]{GSZ-fmsd09}
Paul Gastin, Nathalie Sznajder, and Marc Zeitoun.
\newblock Distributed synthesis for well-connected architectures.
\newblock {\em Formal Methods in System Design}, 34(3):215--237, June 2009.

\bibitem[HM19]{HecMet19}
Jesko Hecking{-}Harbusch and Niklas~O. Metzger.
\newblock Efficient trace encodings of bounded synthesis for asynchronous distributed systems.
\newblock In Yu{-}Fang Chen, Chih{-}Hong Cheng, and Javier Esparza, editors, {\em Automated Technology for Verification and Analysis - 17th International Symposium, {ATVA} 2019, Taipei, Taiwan, October 28-31, 2019, Proceedings}, volume 11781 of {\em Lecture Notes in Computer Science}, pages 369--386. Springer, 2019.
\newblock \href {https://doi.org/10.1007/978-3-030-31784-3\_22} {\path{doi:10.1007/978-3-030-31784-3\_22}}.

\bibitem[Kah09]{Kahlon09}
Vineet Kahlon.
\newblock Boundedness vs. unboundedness of lock chains: Characterizing decidability of pairwise {CFL}-reachability for threads communicating via locks.
\newblock In {\em 2009 24th Annual IEEE Symposium on Logic In Computer Science}, pages 27--36, 2009.
\newblock \href {https://doi.org/10.1109/LICS.2009.45} {\path{doi:10.1109/LICS.2009.45}}.

\bibitem[KG06]{KahGup06lics}
Vineet Kahlon and Aarti Gupta.
\newblock An automata-theoretic approach for model checking threads for {LTL} properties.
\newblock In {\em 21st Annual IEEE Symposium on Logic in Computer Science (LICS'06)}, pages 101--110, 2006.
\newblock \href {https://doi.org/10.1109/LICS.2006.11} {\path{doi:10.1109/LICS.2006.11}}.

\bibitem[KIG05]{KahIvaGup05}
Vineet Kahlon, Franjo Ivanci{\'{c}}, and Aarti Gupta.
\newblock Reasoning about threads communicating via locks.
\newblock In {\em Proceedings of the 17th International Conference on Computer Aided Verification}, CAV'05, page 505–518, Berlin, Heidelberg, 2005. Springer-Verlag.
\newblock \href {https://doi.org/10.1007/11513988_49} {\path{doi:10.1007/11513988_49}}.

\bibitem[KV01]{KV01}
Orna Kupferman and Moshe~Y. Vardi.
\newblock Synthesizing distributed systems.
\newblock In {\em LICS'01}, pages 389--398. IEEE, 2001.

\bibitem[LMSW13]{LammichMSW13}
Peter Lammich, Markus M{\"{u}}ller{-}Olm, Helmut Seidl, and Alexander Wenner.
\newblock Contextual locking for dynamic pushdown networks.
\newblock In Francesco Logozzo and Manuel F{\"{a}}hndrich, editors, {\em Static Analysis - 20th International Symposium, {SAS} 2013, Seattle, WA, USA, June 20-22, 2013. Proceedings}, volume 7935 of {\em Lecture Notes in Computer Science}, pages 477--498. Springer, 2013.
\newblock \href {https://doi.org/10.1007/978-3-642-38856-9\_25} {\path{doi:10.1007/978-3-642-38856-9\_25}}.

\bibitem[LR81]{LehRab81}
Daniel Lehmann and Michael~O. Rabin.
\newblock On the advantages of free choice: {A} symmetric and fully distributed solution to the dining philosophers problem.
\newblock In John White, Richard~J. Lipton, and Patricia~C. Goldberg, editors, {\em Conference Record of the Eighth Annual {ACM} Symposium on Principles of Programming Languages, Williamsburg, Virginia, USA, January 1981}, pages 133--138. {ACM} Press, 1981.
\newblock \href {https://doi.org/10.1145/567532.567547} {\path{doi:10.1145/567532.567547}}.

\bibitem[Lyn96]{Lyn96}
Nancy~A. Lynch.
\newblock {\em Distributed Algorithms}.
\newblock Morgan Kaufmann, 1996.

\bibitem[MT01]{MadThiag01}
P.~Madhusudan and P.S. Thiagarajan.
\newblock Distributed control and synthesis for local specifications.
\newblock In {\em ICALP'01}, volume 2076 of {\em LNCS}, pages 396--407. Springer, 2001.

\bibitem[MTY05]{MTY05}
P.~Madhusudan, P.~S. Thiagarajan, and Shaofa Yang.
\newblock The {MSO} theory of connectedly communicating processes.
\newblock In {\em FSTTCS'05}, volume 3821 of {\em LNCS}, pages 201--212. Springer, 2005.

\bibitem[MW14]{MW14}
Anca Muscholl and Igor Walukiewicz.
\newblock Distributed synthesis for acyclic architectures.
\newblock In {\em FSTTCS'14}, volume~29 of {\em LIPIcs}, pages 639--651. Schloss Dagstuhl - Leibniz-Zentrum fuer Informatik, 2014.

\bibitem[PR89]{PnuRos89}
Amir Pnueli and Roni Rosner.
\newblock On the synthesis of a reactive module.
\newblock In {\em Proc. ACM POPL}, pages 179--190, 1989.

\bibitem[PR90]{PR90}
Amir Pnueli and Roni Rosner.
\newblock Distributed reactive systems are hard to synthesize.
\newblock In {\em FOCS'90}, pages 746--757. IEEE Computer Society, 1990.

\bibitem[RW89]{RW89}
Peter~J.G. Ramadge and Walter~M. Wonham.
\newblock The control of discrete event systems.
\newblock {\em Proceedings of the IEEE}, 77(2):81--98, 1989.

\bibitem[RW92]{RudWon92}
Karen Rudie and W.~Murray Wonham.
\newblock Think globally, act locally: Decentralized supervisory control.
\newblock {\em IEEE Trans. on Automat. Control}, 37(11):1692--1708, 1992.

\bibitem[Thi05]{Thistle05}
John~G. Thistle.
\newblock Undecidability in decentralized supervision.
\newblock {\em Systems {\&} Control Letters}, 54(5):503--509, 2005.

\bibitem[Tri04]{Tri04}
Stavros Tripakis.
\newblock Undecidable problems in decentralized observation and control for regular languages.
\newblock {\em Information Processing Letters}, 90(1):21--28, 2004.

\bibitem[Wal21]{Wal:21}
Igor Walukiewicz.
\newblock Synthesis with finite automata.
\newblock In J.~E. Pin, editor, {\em Handbook of Automata Theory}, volume~2, pages 1215--1258. 2021.
\newblock https://www.labri.fr/perso/igw/Papers/igw-synt-chapter.pdf.

\bibitem[WLK{\etalchar{+}}09]{WanLafKel09}
Yin Wang, St{\'{e}}phane Lafortune, Terence Kelly, Manjunath Kudlur, and Scott~A. Mahlke.
\newblock The theory of deadlock avoidance via discrete control.
\newblock In Zhong Shao and Benjamin~C. Pierce, editors, {\em Proceedings of the 36th {ACM} {SIGPLAN-SIGACT} Symposium on Principles of Programming Languages, {POPL} 2009, Savannah, GA, USA, January 21-23, 2009}, pages 252--263. {ACM}, 2009.
\newblock \href {https://doi.org/10.1145/1480881.1480913} {\path{doi:10.1145/1480881.1480913}}.

\end{thebibliography}
\end{document}